\begin{document}

\title{Real-Time Peer-to-Peer Streaming Over Multiple Random Hamiltonian
Cycles}
\author{\IEEEauthorblockN{Joohwan Kim\IEEEauthorrefmark{1} and R. Srikant\IEEEauthorrefmark{2}}\IEEEauthorblockA{\\
Dept. of Electrical and Computer Engineering\\
 and Coordinated Science Laboratory\\
 University of Illinois at Urbana-Champaign\\
 Email: \{\IEEEauthorrefmark{1}joohwan, \IEEEauthorrefmark{2}rsrikant\}@illinois.edu}
\thanks{This work has been supported by the National Science Foundation grant CNS 09-64081 and Army MURIs W911NF-08-1-0233 and W911NF-07-1-0287.
}}
\maketitle

\begin{abstract}
We are motivated by the problem of designing a simple distributed algorithm for Peer-to-Peer streaming applications that can achieve high throughput and low delay, while allowing the neighbor set maintained by each peer to be small. While previous works have mostly used tree structures, our algorithm constructs multiple random directed Hamiltonian cycles and disseminates content over the superposed graph of the cycles. We show that it is possible to achieve the maximum streaming capacity even when each peer only transmits to and receives from $\Theta(1)$ neighbors. Further, we show that the proposed algorithm achieves the streaming delay of $\Theta(\log N)$ when the streaming rate is less than $(1-1/K)$ of the maximum capacity for any fixed constant $K\geq2$, where $N$ denotes the number of peers in the network. The key theoretical contribution is to characterize the distance between peers in a graph formed by the superposition of directed random Hamiltonian cycles, in which edges from one of the cycles may be dropped at random. We use Doob martingales and graph expansion ideas to characterize this distance as a function of $N$, with high probability.
\end{abstract}

\begin{IEEEkeywords}
Peer-to-Peer Networks, Streaming Media, Delay Analysis, Random Graph Theory
\end{IEEEkeywords}

\IEEEpeerreviewmaketitle

\global\long\def\cS{\mathcal{S}}
 \global\long\def\cC{\mathcal{C}}
\newcommand{\cG}{\mathcal{G}}

\newcommand{\sfactor}{F}
\newcommand{\sfactorl}[1]{\tilde F^{(#1)}}
\newcommand{\NN}{\nonumber}

\newcommand{\summu}[1]{\tblue{ \sum_{j=1}^{#1} \tau_j}}
\newcommand{\shrink}{F}
 \global\long\def\ex{\textrm{exp}}
 \global\long\def\comb#1#2{ { { #1 }  \choose {#2} }}
 \global\long\def\hof#1{H\big( #1 \big)}
 \global\long\def\hoff#1#2{\hof{\frac{#1}{#2}}}
 \global\long\def\cut{\textrm{cut}}
 \global\long\def\indicate#1{1_{\{#1\}}}
 \global\long\def\depth{d}
 \global\long\def\distance{\text{dist}}
 \global\long\def\prob#1{P\left[#1 \right]}
 \global\long\def\Gmm{G_{M,m}}
 \global\long\def\layer{L}
 \global\long\def\layermk{\mathcal{L}_{M,k}}

\newcommand{\tblue}{\textcolor{black}}
\newcommand{\tred}[1]{}



\newtheorem{proposition}{\textbf{Proposition}} \newtheorem{assumption}{\textbf{Assumption}}
\newtheorem{lemma}{\textbf{Lemma}}
\newtheorem{corollary}{\textbf{Corollary}} \newtheorem{theorem}{\textbf{Theorem}}
\newtheorem{condition}{\textbf{Condition}}
\newlength{\mylength}
\setlength{\mylength}{5in}

\section{Introduction}


Dissemination of multimedia content over the Internet is often accomplished using a central server or a collection of servers which disseminate the data to all clients interested in the content. Youtube is an example of such a model, where multiple large-capacity servers are used to meet the download demands of millions of users. In contrast, in a peer-to-peer (P2P) network, a small (low-capacity) server uploads the content to a small number of clients, and these clients and all other clients in the network then exchange content among themselves.
The P2P approach is scalable since the network utilizes the upload capacities of all the clients (commonly known as peers) in the network: as more peers join the network, the download requirement increases but the available capacity also increases proportionally. In this paper, we are interested in designing P2P networks where each peer needs to keep track of only a small number of other peers in the network. Restricting the neighborhood size of each peer reduces the administrative overhead for the peers. Therefore, one of the key challenges is to design an algorithm to decide which peers should
belong to the same neighborhood. Such algorithms are called ``pairing''
algorithms since they pair peers to be neighbors. The pairing algorithm
must be lightweight, i.e., when new peers enter the network or when
existing peers exit the network, the algorithm should incur low overhead
to readjust the pairing relationships in the neighborhood.
In this paper, we propose a pairing algorithm based on directed Hamiltonian cycles which has low overhead for node insertion and deletion. The insertion and deletion parts of our algorithm are the same as the algorithm proposed in \cite{Law03} for constructing small diameter graphs using undirected Hamiltonian cycles for distributed hash table (DHT) applications. However, there are certain key differences: our algorithm requires edges to be directed for real-time streaming purposes and while small diameter is sufficient for fast lookup times in DHTs, it is not sufficient to ensure high throughput and low delay for streaming applications.

The pairing algorithm mentioned in the previous paragraph determines
the topology of the network. Given the topology, the network must
then decide how to disseminate content in the network to achieve the
maximum possible capacity and low delay. Multimedia content is often
divided into chunks and thus, the content dissemination algorithm
is also called the chunk dissemination algorithm in the literature.
Chunk dissemination is accomplished by a peer in two steps in each
timeslot: the peer has to select a neighbor to receive a chunk (called
\emph{neighbor selection}) and then it has to decide which chunk it
will transmit to the selected neighbor (called \emph{chunk selection}).
Thus, the practical contributions of the paper can be summarized as
follows: we present a low-complexity, high-throughput and low-delay
algorithm for pairing, neighbor selection and chunk selection in real-time
P2P streaming networks. We emphasize that the goal of this paper is
to study real-time data dissemination in P2P networks. This is in
contrast to stored multimedia content dissemination (which is the
bulk of Youtube's data, for example) or file-transfer applications
(such as in BitTorrent).

Our approach for pairing results in a graph formed by the superposition
of multiple random directed Hamiltonian cycles over a given collection
of nodes (peers). We will see that the performance analysis of our
algorithms requires us to understand the distance (the minimum number
of hops) from a given peer to all other peers in the graph. The main
theoretical contribution of the paper is to characterize these distances
with high probability through a concentration result using Doob martingales.
Using this result, we show that our algorithm achieves $\Theta(\log N)$
delay with high probability, when the streaming rate is less than or  equal to $(1-\frac{1}{K})$ of the optimal capacity for any constant
$K\geq2$, where $N$ denotes the number of peers in the network.

This paper is organized as follows. In Section~\ref{sec:related_work},
we review prior work in the area of real-time P2P networks. In Section~\ref{sec:examples},
we provide two examples to help the reader understand the advantage
of using random Hamiltonian cycles. In Section~\ref{sec:System_Model},
we present our P2P algorithm  that constructs random Hamiltonian
cycles and disseminates content over the cycles in a fully distributed manner. In Section~\ref{sec:Througput_Analysis},
we consider the streaming rate that can be achieved under our algorithm.
In Sections~\ref{sec:Delay_Analysis} and~\ref{sec:diameter}, we analyze the delay to disseminate
chunks to all peers under our algorithm.
 In Section~\ref{sec:Conclusions},
we conclude the paper.

\section{Related Work}

\label{sec:related_work}

We briefly review prior work in the area of real-time P2P networks.
Prior work in the area can be broadly categorized as designing one
of two types of networks: a structured P2P network or an unstructured
P2P network. The structured P2P streaming approach focuses on constructing
multiple overlay spanning trees that are rooted at the source \cite{Chu65,Edmonds1972,Tarjan77,Castro2003,Li05,Mundinger08 ,Kumar07,Liu08,Liu10}.
In this approach, the real-time content arriving at the source is
divided into multiple sub-streams and each sub-stream is delivered
over one of the trees. Since this approach uses the tree structure,
connectivity from the source to all peers is guaranteed. By managing
the tree depth to be $\Theta(\log N)$, this approach can guarantee
$\Theta(\log N)$ delay to disseminate a chunk of each sub-stream
to all peers. However, the fundamental limitation of the structured
P2P streaming is vulnerability to peer churn. It is well known that
the complexity of constructing and maintaining $\Theta(\log N)$-depth
trees grows as $N$ increases \cite{Liu08,Liu10}. Therefore, in a highly dynamic P2P
network where peers frequently join and leave the network, the structure
approach is not scalable.


Unstructured P2P networks overcome this vulnerability to peer churn.
In unstructured P2P networks, peers find their neighboring peers randomly
and get paired with them locally. As a neighboring peer leaves, a
peer chooses another peer randomly as its new neighboring peer. Due
to the distributed fashion of this peer pairing, unstructured P2P
networks are robust to peer churn, unlike the structured P2P networks.
However, the fundamental limitation of unstructured P2P networks is
weak connectivity. Since peers are paired randomly without considering
the entire network topology, there may be some peers that are not
strongly connected from the source, which results in poor throughput
and delay. To ensure full connectivity in this approach, it is required
that every peer should be paired with $\Theta(\log N)$ neighboring
peers \cite{Zhao11}, or should constantly change their neighbors
to find neighbors providing a better streaming rate \cite{Zhang05coolstreaming}.
However, in these approaches, delay performance is hard to guarantee
because chunks have to be disseminated over an ``unknown'' network
topology.

Another interesting line of work has studied gossip-based algorithms
that disseminate information to all peers in a fashion similar to
the spread of epidemics. By studying the dissemination delay under
these gossip-based algorithms, we can analyze the delay for peers
to disseminate chunks to all peers in a P2P network. The seminal work
in \cite{Frieze1985} shows that gossiping requires $\Theta(\log N)$
time with high probability to disseminate a single chunk from the
source to all peers. When there is a sequence of chunks arriving at
the source, \emph{the latest-blind algorithm} proposed in \cite{Sanghavi07}
is proven to deliver $(1-e^{-1})$ fraction of chunks to all peers
with $\Theta(\log N)$ delay with high probability.
Later work in \cite{Bonald08} proposed \emph{the latest useful algorithm} that can deliver almost all chunks with $\Theta(\log N)$ delay with high probability. However, the basic assumption for analysis in this line of work is that the network is a complete graph where every peer has $N-1$ outgoing edges to all other peers, and only simulations  are used in \cite{Bonald08} to evaluate the performance on a random graph with bounded degree.
In contrast,  it is shown in \cite{Shah09} that
gossip-based algorithms can achieve $\Theta(\log N)$ delay, when the matrix representing the connectivity between peers is doubly
stochastic and symmetric. However, only a small fraction of the optimal throughput can be guaranteed
with $\Theta(\log N)$ delay.

We address all the aforementioned limitations using multiple
random Hamiltonian cycles. While the structure of Hamiltonian cycles
provides us with full connectivity from the source to all peers, random
pairing within each cycle enables peers to cope with peer churn. Furthermore,
the proposed chunk dissemination algorithm guarantees $\Theta(\log N)$
delay required for each chunk to be disseminated to all the peers
for a near optimal throughput. One may be concerned about using cycles because the diameter $N$ of a cycle could result in poor
delay performance. However, we address this concern  in the next section.

\section{Independent Random Hamiltonian Cycles}

\label{sec:examples}

In a delay-sensitive application, such as P2P streaming, cycles (or
line topologies) have been considered to be undesirable since their
diameter is $N-1$, where the diameter of a directed graph is defined
as the maximum distance between any pair of nodes. Delivering information
from a node to all the other nodes over a cycle requires $N-1$ successive
transmissions, which results in $\Theta(N)$ delay. In this section,
we consider two examples which show that one can use a superposed
graph of multiple cycles as an alternative to the tree structure for
information dissemination.

\begin{figure}[ht]
\centering \subfigure{\label{fig:hamilton1} \includegraphics[width=0.4\mylength]{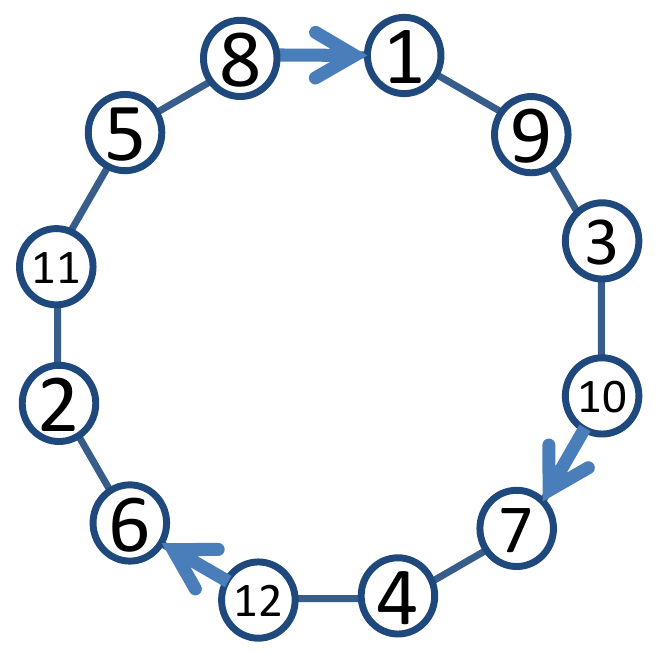}}
\subfigure{\label{fig:hamilton2} \includegraphics[width=0.4\mylength]{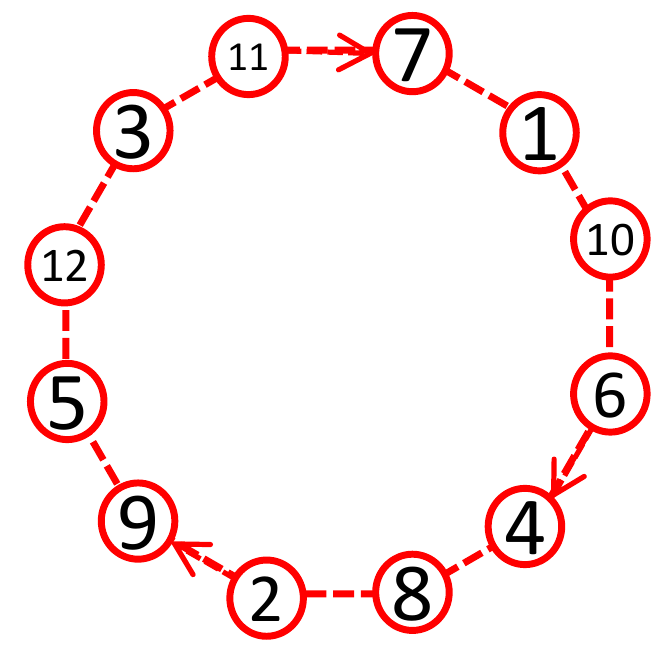}}
\caption{Two random Hamiltonian cycles $(N=12)$ generated by independent permutations
of nodes $1,2,\cdots,N$.}
\end{figure}

Consider $N$ nodes numbered $1,2,\cdots,N$. By permuting these nodes,
we can make a random Hamiltonian directed cycle as shown in Fig.~\ref{fig:hamilton1}.
\emph{(Since all the graphs that we will consider are directed graphs, we
will skip mentioning ``directed'' from now on.)} Make another random
Hamiltonian cycle by independently permuting the $N$ nodes as shown
in Fig.~\ref{fig:hamilton2}. Clearly, the diameter of each cycle
is $N-1$. An interesting question is the following: ``if we superpose
both cycles, what is the diameter of the superposed graph?'' Interestingly,
the diameter significantly reduces from $\Theta(N)$ to $\Theta(\log N)$
with high probability in the superposed graph.%
\footnote{In random graph theory, it is shown that the superposition of two
undirected random Hamiltonian cycles has a distribution similar to
an undirected random regular graph \cite{Kim01}. This regular graph
is known to have $\Theta(\log N)$ diameter with high probability
\cite{Bollobas82}. Combining both results, we can infer that the
superposition of two random undirected Hamiltonian cycles has $\Theta(\log N)$
depth with high probability. From this, it is not very difficult to
obtain a similar result for directed cycles. However, in \cite{Kim11tech2},
we establish this result more directly.%
}

Next, we consider a further modification of the two random cycle model.
From the second cycle (Fig.~\ref{fig:hamilton2}), we remove each
edge with some probability $0<q<1$ independently.  If we superpose the first cycle and the remaining
edges in the second cycle,  what will be the diameter of the graph? Since we have removed around $qN$ edges from
the second cycle, the diameter will certainly increase. However, we will show in a later section
that the order of the diameter still remains $\Theta(\log N)$.

These two examples imply that a graph formed from  superposed Hamiltonian cycles has a small diameter of $\Theta(\log N)$. This means that the superposed graph can be a good alternative to a spanning tree with a bounded outdegree
that has been widely used to achieve a logarithmic dissemination delay in P2P streaming. However, in the case of peer churn, the complexity of constructing and updating spanning trees
(as in prior literature) subject to the constraints on the degree
bound and  the logarithmic depth increases dramatically with the
network size. In contrast, the superposed graph is robust to peer churn because independent cycles are much easier to maintain.
In the rest of this paper, we show how these properties of random superposed cycles
can be used to construct a P2P network that can achieve high throughput
and low delay.

\section{System Model}

\label{sec:System_Model}

We assume that time is slotted, and every peer (including the source
peer) in the network contributes a unit upload bandwidth, i.e., each
peer can upload one chunk per timeslot. In this case, it is well
known that the maximum streaming rate (the maximum reception rate
guaranteed to each peer) is approximately one for a large network
because the total upload bandwidth $N$ contributed by all peers (including
the source) has to be shared by $N-1$ peers (excluding the source)
\cite{Li05,Mundinger08 ,Kumar07}. Due to the limited communication
and computation overheads, we assume that each peer can only communicate
with a constant number of neighbors, which does not increase with
the network size. We assume that there is peer churn, so that the
topology is dynamic as new peers join or existing peers leave.

We now present our P2P streaming algorithm which consists of \emph{a
peer-pairing algorithm} and \emph{a chunk-dissemination algorithm}.
For convenience, we use the term chunk dissemination algorithm to
describe the joint neighbor selection and chunk selection algorithms
mentioned in the previous section. Our pairing algorithm is similar to the one in \cite{Law03}, except for the fact that we use directed edges. The fact that the edges are directed does not matter for adding or deleting nodes to the network; this part of our algorithm is identical to \cite{Law03}. However, the fact that the edges are directed and the fact that we are interested in achieving the maximum streaming capacity make our work quite different from \cite{Law03}, where the only goal is to construct an expander graph (with undirected edges) in a distributed fashion. But it is important to understand the pairing algorithm to proceed further. Therefore, we present it next.

\subsection{Peer Pairing Algorithm}

Under our peer-pairing algorithm, every peer has $M\geq2$ incoming
edges and $M$ outgoing edges as shown in Fig.~\ref{fig:basic_structure}.
We number the incoming edges of each peer as the first, second, ...,
$M$-th incoming edges of the peer and number its outgoing edges as
the first, second, ..., $M$-th outgoing edges. The peer where the
$m$-th outgoing edge ends is called \emph{the $m$-th child}, and
the peer where the $m$-th incoming edge begins is called \emph{the
$m$-th parent}. We assume that the $M$ outgoing edges of
a peer may end at the same peer, so that the number of  children
of a peer could be less than $M$. Similarly, the $M$ incoming edges
of a peer may begin at the same peer, so that the number of parents
of a peer could be less than $M$. Under our algorithm, every peer
receives chunks from its parents over its incoming edges and transmits
received chunks to its children over its outgoing edges. 
\begin{figure}[ht!]
\centering \includegraphics[width=0.9\mylength]{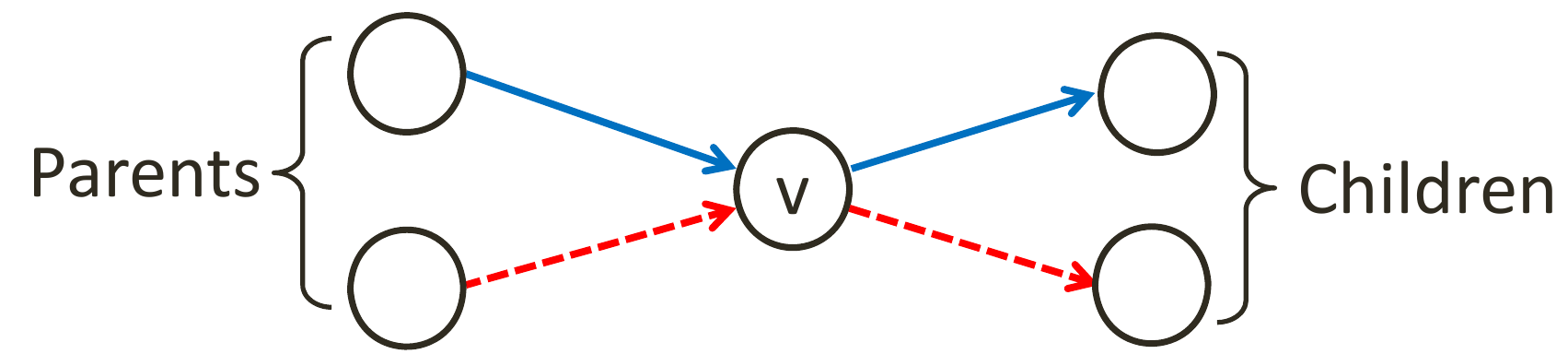}\\
 \caption{Basic structure of the peer-pairing algorithm for $M=2$: Every peer
maintains $M$ incoming edges and $M$ outgoing edges. The peers where
the outgoing edges end are called children, and the peers where the
incoming edges begin are called parents. Every peer $v$ receives
chunks from parents and transmits chunks to children.}

\label{fig:basic_structure}
\end{figure}


We next describe how every peer establishes its $M$ incoming and
$M$ outgoing edges.\\
 \textbf{Initially,} the network consists of only two peers, the source
peer (which we call peer 1 \tblue{throughout this paper}) and the first peer to arrive at the network
(which we call peer 2), as shown in Fig.~\ref{fig:initial_graph}.
Each peer establishes its first, second, ..., $M$-th outgoing edges
to the other peer, so that these edges are the first, second, ...,
$M$-th incoming edges of the other peer. Letting $V$ be the set
of current peers, we define $E_{m}$ to be the set of all $m$-th
edges, i.e., $E_{m}\triangleq\{(i,j)\in V^{2}|\text{ \ensuremath{j} is the \ensuremath{m}-th child of \ensuremath{i}}\},$
for $m=1,2,\cdots,M$. Initially, $E_{m}$ is given by $\{(1,2),(2,1)\}$
for all $m$ because there are only two peers. We define $\layer_{m}\triangleq(V,E_{m})$
to be the digraph consisting of the peer set $V$ and the $m$-th edges,
and call it \emph{layer} $m$ for $m=1,2,\cdots,M$. Layer $m$ represents
the pairing between every peer and its $m$-th child. By superposing
the $M$ layers, the current network topology can be expressed as
a multi-digraph $\layer^{*}=(V,E^{*})$, where $E^{*}$ is a multiset
defined as $E^{*}=\{E_{1}\cup E_{2}\cup,\cdots,E_{M}\}$.
\begin{figure}[ht]
\centering \subfigure[Initial graph: layer 1 (left), layer 2 (center), and the superposed graph (right)]{\label{fig:initial_graph}
\includegraphics[width=0.9\mylength]{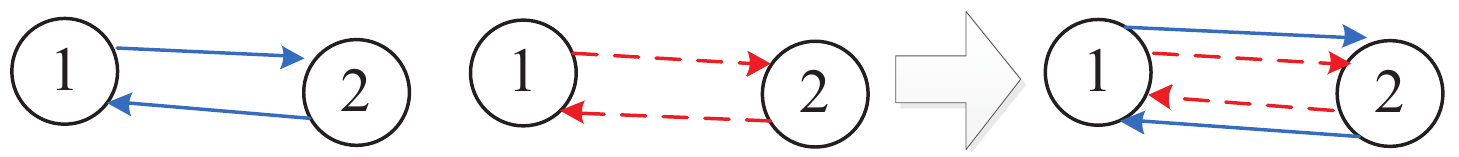}} \subfigure[Peer 4 arrives: layer 1 (left), layer 2 (center), and the superposed graph (right)]{\label{fig:peer_joining}
\includegraphics[width=0.9\mylength]{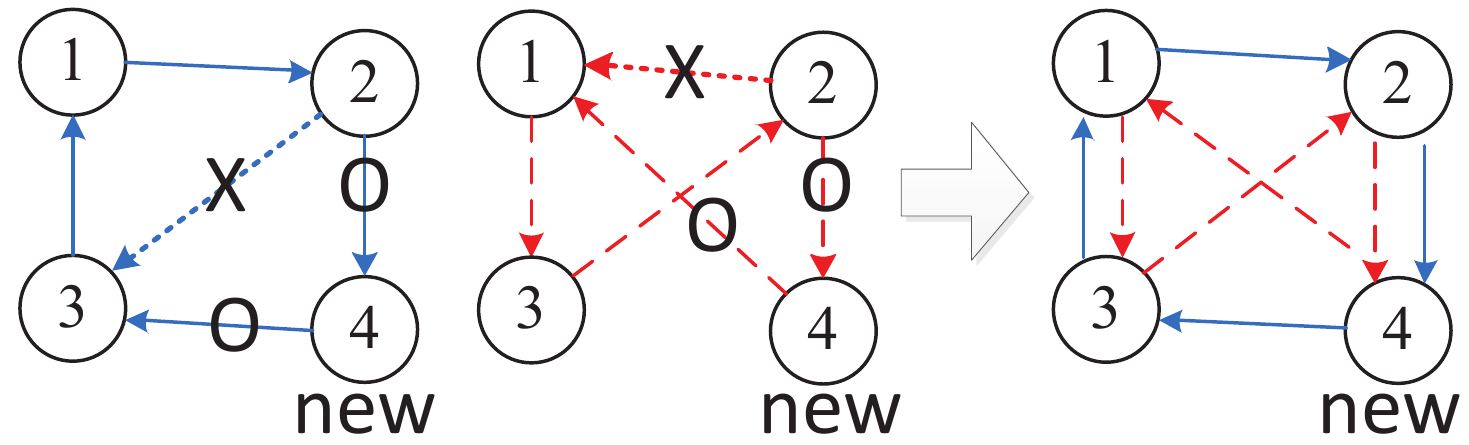}} \subfigure[Peer 2 leaves: layer 1 (left), layer 2 (center), and the superposed graph (right)]{\label{fig:peer_leaving}
\includegraphics[width=0.9\mylength]{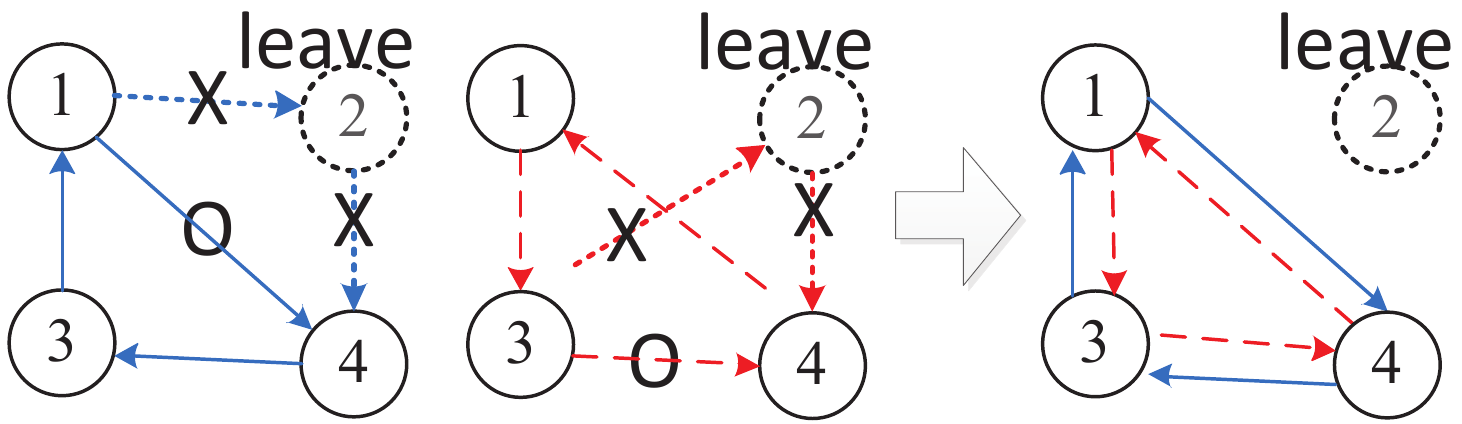}} \caption{Peer pairings for $M=2$: (a) Initially, peer 1 takes peer 2 as its
first and the second children, and \emph{vice versa}. (b) When peer
4 arrives to the network with existing peers 1, 2, and 3, it independently
chooses an edge (marked `x') from each layer uniformly at random and
breaks into the chosen edges. (c) When peer 2 leaves, its incoming
edge and outgoing edge (marked `x') in each layer are reconnected.}
\end{figure}

\textbf{When a new peer $v$ arrives,} this peer \emph{independently}
chooses an edge from each layer \emph{uniformly at random} and breaks
into the chosen edges as Fig.~\ref{fig:peer_joining}. Specifically,
if the peer $v$ arrives and randomly chooses $(p_{m},c_{m})\in E_{m}$
from layer $m$, the peer $v$ becomes a new $m$-th child of peer
$p_{m}$ and becomes a new $m$-th parent of peer $c_{m}$. Each layer
$m$ will then be updated as
\begin{align*}
V\Leftarrow & V\cup\{v\}\\
E_{m}\Leftarrow & E_{m}\cup\{(p_{m},v),(v,c_{m})\}\setminus(p_{m},c_{m}),\;\;\forall m.
\end{align*}
In practice, this edge-breaking can be easily implemented. If a new
peer arrives, it contacts a server to register its IP address. The
server then chooses $M(1+\alpha)$ IP addresses uniformly at random
with repetition and returns them to the peer. Here, $\alpha>0$ is
used in practice in case some peers are not reachable for some reason.
But for the purpose of analysis later, we assume that $\alpha=0$ and
all peers are reachable. Among these addresses, the peer contacts
$M$ reachable peers and breaks into their first, second, ..., $M$-th
outgoing edge, respectively.

\textbf{When an existing peer $v$ leaves,} its parents and children
will lose one of their neighbors as shown in Fig.~\ref{fig:peer_leaving}.
Let $p_{m}$ and $c_{m}$ be the parent and the child, respectively,
of peer $v$ in layer $m$, i.e., $(p_{m},v),(v,c_{m})\in E_{m}$.
(It is easy to see that every peer always has exactly
one parent and one child in each layer.) The parent $p_{m}$ in each
layer $m$ then directly contacts the child $c_{m}$ in the same layer
and takes the child as its new $m$-th child as shown in Fig.~\ref{fig:peer_leaving}.
In this case, the topology will change as follows:
\begin{align*}
V\Leftarrow & V\setminus\{v\}\\
E_{m}\Leftarrow & E_{m}\cup(p_{m},c_{m})\setminus\{(p_{m},v),(v,c_{m})\},\;\;\forall m,
\end{align*}
In practice, there is a chance that two or more successive ancestors of a peer in a layer leave the network simultaneously, which makes this edge-repairing impossible. This issue can easily be addressed by letting each peer remember the IP addresses of the several successive ancestors along the cycle in each layer. For the details, please refer to \cite{Kim11tech2}.

At any given time, the network topology $\layer^{*}$ that has been
constructed by the pairing algorithm satisfies the property stated
in the following lemma. The lemma and its proof are straightforward,
given the pairing algorithm, but we present them below to highlight
their importance to the analysis in the rest of the paper. \begin{lemma}\label{lemma:cycle_structure}
$\layer_{m}$ representing each layer $m$ is  a directed Hamiltonian cycle, i.e., every peer has
exactly one incoming and one outgoing edge in each layer, and all
the edges in $E_{m}$ form a single directed cycle. Hence, the superposed
graph $\layer^{*}$ is an $M$-regular multi-digraph, i.e., every
peer has exactly $M$ incoming edges and $M$ outgoing edges. \end{lemma}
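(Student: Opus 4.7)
The plan is to prove the lemma by induction on the sequence of network events (peer arrivals and departures), where the invariant to be maintained is precisely the statement of the lemma: for each $m\in\{1,\dots,M\}$, the digraph $\layer_m=(V,E_m)$ is a single directed Hamiltonian cycle on $V$. Since the three claims in the lemma (in-degree one, out-degree one, and single-cycle structure in each layer) together imply the $M$-regular multi-digraph conclusion for $\layer^*$, it suffices to establish the Hamiltonian-cycle invariant for each layer separately.

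For the base case, the initial network has $V=\{1,2\}$ with $E_m=\{(1,2),(2,1)\}$ for every $m$, which is plainly a directed Hamiltonian cycle of length two on $V$. For the inductive step, fix a layer $m$ and assume $\layer_m$ is a Hamiltonian cycle on the current $V$. I would handle the two kinds of events in turn. When a new peer $v$ arrives and picks the edge $(p_m,c_m)\in E_m$, the update replaces $(p_m,c_m)$ by the two edges $(p_m,v)$ and $(v,c_m)$; this is an edge-subdivision that splices $v$ into the cycle between $p_m$ and $c_m$, so the resulting digraph on $V\cup\{v\}$ is again a single directed Hamiltonian cycle. When an existing peer $v$ leaves, the inductive hypothesis guarantees that $v$ has a unique in-neighbor $p_m$ and a unique out-neighbor $c_m$ in layer $m$ (this is precisely the parenthetical remark in the algorithm description); the update deletes $(p_m,v)$ and $(v,c_m)$ and inserts $(p_m,c_m)$, which is the inverse of a subdivision and therefore again yields a single Hamiltonian cycle, now on $V\setminus\{v\}$.

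Because each of the $M$ layers evolves independently under exactly the same rule, the invariant holds simultaneously in all layers at all times. From the Hamiltonian-cycle property, every peer has exactly one outgoing and one incoming edge in each $E_m$, so in the superposition $\layer^*=(V,E^*)$ (with $E^*$ taken as a multiset) every peer has exactly $M$ outgoing and $M$ incoming edges, proving the $M$-regularity claim.

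I do not anticipate any genuine obstacle: the only subtlety is to check that at the moment peer $v$ departs it indeed has a well-defined unique parent and unique child in each layer, but this is immediate from the Hamiltonian-cycle invariant that the induction is carrying, so there is no circularity. The argument is essentially a bookkeeping verification that the arrival rule is an edge subdivision and the departure rule is its inverse, both of which preserve the ``single directed cycle covering all vertices'' property.
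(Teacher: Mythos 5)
Your proof is correct and uses essentially the same argument as the paper: induction over arrival and departure events, observing that arrivals are edge subdivisions of the cycle and departures are their inverses, both of which preserve the single-directed-Hamiltonian-cycle property in each layer, from which the $M$-regularity of the superposition follows immediately.
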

\begin{proof} Initially, two peers form a single cycle in each layer.
When we add a new peer to each layer consisting of a single cycle,
the new peer simply breaks into an existing edge in a layer, maintaining
the existing cycle. When we remove a peer from each layer, its incoming
edge and its outgoing edge are reconnected, which also maintains the
cycle. Hence, when peers join or leave, the cycles built initially
do not vanish, but they expand or shrink in size. Hence, each layer
is always a cycle graph. 
\end{proof}

Lemma~\ref{lemma:cycle_structure} implies that, under the pairing
algorithm, each peer needs to communicate only with $M$ parents and
$M$ children. Hence, the communication/computation overhead to maintain
multiple TCP or UDP sessions does not increase with the network size.

\emph{Remark:} The pairing algorithm is fully distributed,
except for the information provided by a central server to identify
a few other peers in the network.
The server only maintains the list of registered peers and
their IP addresses, which need not be updated frequently. If some
peers in the list do not exist in the network any more, the server
may send the IP address of such a peer to a new peer. However, this
does not affect the pairings of the new peer because the new peer
will only contact $M$ reachable addresses among the $M(1+\alpha)$
received addresses. A central server to perform such minimal functionalities
is usually called a tracker, and is used by most P2P networks. In
our analysis, we do not consider the details of the information sharing
between the tracker and the peers. We simply assume that $M$ random
addresses are provided to a new peer to enable it to execute the pairing
algorithm.

\subsection{Chunk Dissemination Algorithm}

\label{subsec:scheduling_and_chunk_algorithm}

While the pairing algorithm determines the network topology, the chunk
dissemination algorithm determines how chunks are disseminated over
a given topology. We here present our chunk dissemination algorithm
that can provide provable throughput and delay bounds.


Assume that the source generates at most one chunk during
every timeslot, except timeslots $0,K,2K,\cdots$ for some integer $K>2$. Since at most $K-1$ chunks are generated during  every $K$ timeslots,  the maximum chunk-generating rate under our algorithm is $(1-1/K)$. We call the chunk generated at timeslot $t$ \emph{chunk t}. Suppose there are $K-1$ predetermined colors, numbered color 1, color 2, ..., color $K-1$ and we color each chunk $t$ with color $(t\mod K).$ In other words, the chunks are colored from $1$ through $K-1,$ and then again starting from $1,$ with the process repeating forever. We call the chunk with color $k$ simply \emph{a color-$k$ chunk}. If a color-$k$ chunk is generated at time $t$ at the source, then color $k$ chunks are also generated at time $t+K,$ $t+2K$ and so on. If chunks are not generated periodically in this manner, then a smoothing buffer has to be used at the source to ensure that only $K-1$ chunks are periodically generated for every $K$ timeslots, and any other additional chunks are stored for later transmission. Thus, there will be a queueing delay at the source for storing the additional chunks which we ignore since our goal here is to characterize the scaling behavior of the end-to-end transmission delay from the source to all peers as a function of $N$.

Recall that every peer can upload at most one chunk to one other peer
in a timeslot. At the beginning of each timeslot, every peer \emph{schedules}
one of its outgoing edges, i.e., the peer selects an outgoing edge
and uploads a chunk over that edge. Specifically, every peer $i$
shares the same scheduling vector $\Lambda=(\lambda_{1},\lambda_{2},\cdots,\lambda_{K})$,
where $\lambda_{k}\in\{1,2,\cdots,M-1\}$ for all $k<K$ and $\lambda_{K}=M$.
Peer $i$ schedules its outgoing edges, cycling through the elements
in the scheduling vector. For example, if $M=3$ and $\Lambda=(1,2,1,3)$,
every peer repeats scheduling its first, second, first, and third
outgoing edges sequentially. \emph{We note that the scheduling round
of a peer need not be synchronized with the other peers, i.e., at
a given timeslot, peers may schedule different types of outgoing edges.}

Suppose peer $v$ schedules the $k$-th edge in the scheduling
vector (i.e., the outgoing edge in layer $\lambda_{k}$ or equivalently
the $\lambda_{k}$-th outgoing edge) at the beginning of timeslot
$t$. Let $Q_{v,k}(t)$ be the set of the color-$k$ chunks that peer
$v$ has received before timeslot $t$ for $0<k<K$. If $k<K$, peer $v$ chooses the chunk from $Q_{v,k}(t)$ that was generated most recently (called \emph{the latest chunk}) and uploads this chunk over the scheduled edge, regardless of whether or not the other end possesses the chunk. If $k=K$, the peer transmits the latest chunk in $Q_{v,\mu(v)}(t)$,
where $\mu(v)$ is a random variable uniformly chosen from $\{1,2,\cdots,K-1\}$
when the peer joined the network.  We also assume that $\mu(v)$ does not change once it is determined.
We call $\mu(v)$ \emph{the
coloring decision of peer $v$.} During $K$ timeslots of a scheduling
round, peer $v$ will transmit the latest chunks with color 1, color
2, ..., color $K-1$, and color $\mu(v)$ over the $\lambda_1$-st, $\lambda_2$-nd, $\cdots$, $\lambda_{K}$-th outgoing edges, respectively. Since the
scheduling rounds of peers are asynchronous, peers may transmit chunks
with different colors at a given timeslot. Note that when a peer
receives a chunk, this chunk will be unavailable for uploading till
the next timeslot. Furthermore, we have assumed
implicitly that only the latest chunk with each color is available for uploading at a peer. Thus, if a color-$k$ chunk that is generated later than the latest chunk in $Q_{v,k}(t)$ of peer  $v$ arrives at  peer $v$ at timeslot $t$, the peer
will not upload all the  chunks received before timeslot $t$. We will show later that all chunks are delivered
to all the peers despite the fact that older chunks are discarded.
In other words, we will prove that the older chunks have already been disseminated by a peer by the time they are discarded and so are no longer necessary
from the point of view of data dissemination (although they may be
retained for playout at the peer).

We have presented our chunk-dissemination algorithm running on top
of the pairing algorithm. Besides our algorithm, other chunk dissemination
algorithms, such as \emph{the random useful algorithm}  \cite{Massoulie07},
\emph{the latest-blind algorithm}  \cite{Sanghavi07}, and \emph{the
latest-useful algorithm} \cite{Bonald08}, can be potentially used
over the network topology that is constructed by the pairing
algorithm. Our performance analysis is, however, only for the chunk
dissemination algorithm proposed here.

\subsection{Bounds on Streaming Rate and Delay}

Our P2P algorithm will be evaluated using two metrics: streaming rate
and delay.

\emph{Streaming Rate:} What is the streaming rate achieved by our
P2P algorithm? The streaming rate is defined as the chunk reception
rate guaranteed to all peers. When peers contribute unit bandwidth,
the total upload bandwidth $N$ contributed by all peers (including
the source) has to be shared by $N-1$ peers (excluding the source).
Thus, the download bandwidth per peer cannot exceed $\frac{N}{N-1}$,
which is approximated to one for large $N$. Hence, the optimal streaming
rate is close to one for a large network. In Section~\ref{sec:Througput_Analysis},
we will show that our algorithm disseminates all the chunks to all
peers, and achieves a streaming rate of $1-\frac{1}{K}$, which is
arbitrarily close to the optimal streaming rate for sufficiently large
$K$.

\emph{Dissemination Delay:} What is the delay that can be achieved
by our P2P algorithm? When each peer is allowed to disseminate chunks
only to a constant number of neighbors, as in a real P2P topology,
the fundamental limit of the delay required to disseminate a chunk
to all peers is known to be $\Omega(\log N)$.%
\footnote{If we trace the paths that a chunk has been transmitted, the paths
form an arborescence with a bounded degree rooted at the source. Since
this arborescence has at least $\Omega(\log N)$ depth, distributing
a chunk to all peers requires at least $\Omega(\log N)$ transmissions.%
} This limit is a lower bound on the delay to disseminate multiple
chunks because the contention between multiple chunks at a peer can
only increase the dissemination delay. In Section~\ref{sec:Delay_Analysis},
we show that our algorithm achieves this fundamental limit, i.e.,
every chunk arriving at the source at rate $(1-\frac{1}{K})$ is disseminated
to all peers within $\Theta(\log N)$ timeslots with high probability
under our algorithm.

\section{Throughput and Delay Analysis}

\label{sec:Througput_Analysis}

In this section, we show that our algorithm achieves the streaming
rate of $(1-\frac{1}{K})$, i.e., each chunk arriving at the source
at rate $(1-\frac{1}{K})$ can be disseminated to all peers by our
algorithm.
To this end, we first characterize the graph over which color-$k$ chunks are disseminated. We then show that no color-$k$ chunks are dropped before being disseminated to all peers.

As described in Section~\ref{subsec:scheduling_and_chunk_algorithm},
during every scheduling round of a peer $v$, peer $v$ transmits
the latest color-$1$ chunk, color-$2$ chunk, ..., color-$(K-1)$
chunk, and color-$\mu(v)$ chunk over its $\lambda_{1}$-st, $\lambda_{2}$-nd,...,
$\lambda_{(K-1)}$-st, $M$-th outgoing edges, respectively. Thus,
color-$k$ chunks are delivered over the $\lambda_{k}$-th outgoing
edges from all peers (i.e., the edges in layer $\lambda_{k}$) and
the $M$-th outgoing edges from peers $v$ with $\mu(v)=k$. If we
define flow graph $G_{k}$ ($k=1,2,\cdots,K-1$) to be the graph consisting
of the edges carrying color-$k$ chunks, the flow graph can be expressed
as a multi-digraph $G_{k}=(V,E_{\lambda_k}\cup\mathcal{E}_{M,k})$, where
\begin{equation}
\mathcal{E}_{M,k}=\{(i,j)\in E_{M}|\mu(i)=k\}.\NN
\end{equation}
Thus, color-$k$ chunks are disseminated over flow graph $k$, where the out-degree of every peer is at most two. (See
the example of the flow graphs for $K=3$ and $M=2$ in Fig.~\ref{fig:flow_graphs}).
\begin{figure}[ht]
\centering \subfigure[Layer 1 (left) and  layer 2 (right): The numbers on  peers $v$ in layer 2 are the coloring decisions $\mu(v)$.]{\label{fig:layers}
\includegraphics[width=0.9\mylength]{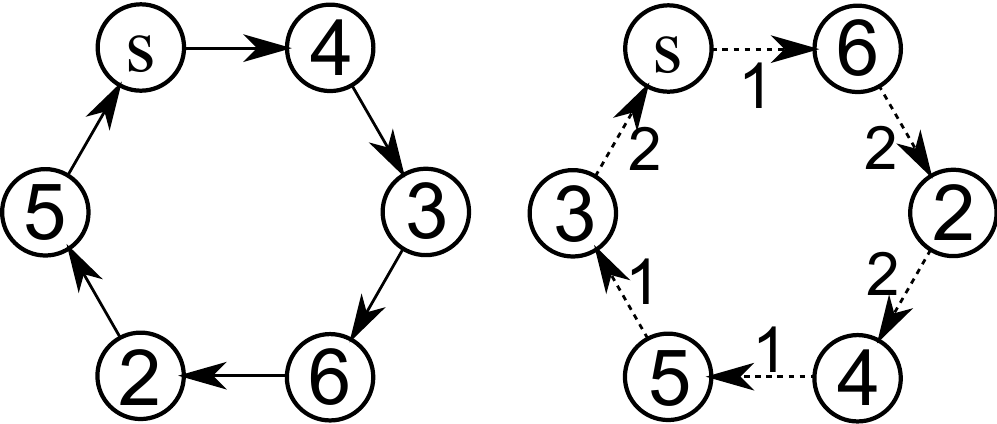}} \subfigure[Flow graphs $G_1$ (left) and $G_2$ (right)]{\label{fig:flow_graphs}
\includegraphics[width=0.9\mylength]{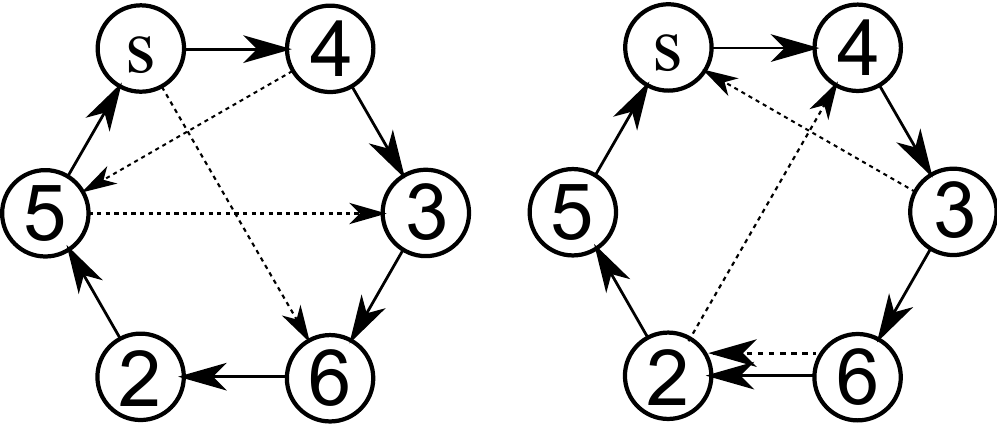}} \caption{A network with 6 peers named \tblue{1} (source), 2, 3,..., 6,  where $M=2$,
$K=3$, $\Lambda=(1,1,2)$ and $(\mu(s),\mu(2),\cdots,\mu(6))=(1,2,2,1,1,2)$.
Flow graph $G_{1}$ consists of the edges in the first layer and the
edges $(i,j)$ with $\mu(i)=1$ in the second layer. Similarly, flow
graph $G_{2}$ consists of the edges in the first layer and the edges
$(i,j)$ with $\mu(i)=2$ in the second layer.}
\end{figure}

We next study how color-$k$ chunks are disseminated over flow graph
$G_{k}$. Recall that if a color-$k$ chunk is generated at timeslot $t$, chunks $t+K,t+2K,\cdots$ are all of color $k$.
 We call these chunks \emph{later chunks of chunk $t$}. Since our chunk dissemination algorithm transmits only the latest chunk of each color,  if a peer has received both chunk $t$ and a later chunk,
the peer will not transmit chunk $t$ any longer. Thus, if all the
peers that have received chunk $t$ have also received a later chunk,
chunk $t$ cannot be disseminated to the remaining peers. However,
the following proposition shows that this scenario does not occur.
\begin{proposition}\label{prop:dissemination_over_flowgraphs} Under
our algorithm, if a peer receives chunk $t$ during timeslot $l\geq t$,
this peer has received chunk $t-K$  no later than timeslot $l-K$.
\end{proposition}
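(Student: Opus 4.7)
The plan is a strong induction on the reception time $l$. The structural input driving the argument is that, under the chunk dissemination algorithm, each peer's scheduling of its outgoing edges is cyclic with period $K$: if at time $l$ some peer $u$ schedules a particular outgoing edge, then at time $l-K$ the very same outgoing edge was scheduled, and that edge carries chunks of a single fixed color throughout (namely, color $\lambda_j$ for $j<K$, or color $\mu(u)$ for the $K$-th position).

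Applying this to the sender, suppose peer $v$ receives chunk $t$ at time $l$ from $u$ over an outgoing edge $e$ of $u$, and let $k = t \bmod K$. By the cyclic scheduling, at time $l-K$ peer $u$ also scheduled $e$ and transmitted to $v$ the latest color-$k$ chunk $t^{*}$ that it held at that time. The crux is to pin $t^{*}$ down to lie in $\{t - K,\, t\}$. The upper bound $t^{*}\leq t$ is immediate, since $u$'s latest color-$k$ chunk is monotone non-decreasing in time and equals $t$ at time $l$. For the lower bound $t^{*}\geq t - K$, I would invoke the inductive hypothesis on $u$'s own earlier reception of chunk $t$, which must have occurred by time $l-1$ since $u$ can only upload chunks it has already received; that hypothesis gives that $u$ received chunk $t - K$ by time $l - K - 1$, forcing $t^{*}\geq t - K$. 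If $u$ happens to be the source, the lower bound is immediate because the source simply generates chunk $t - K$ at time $t - K \leq l - K - 1$.

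Having forced $t^{*}\in\{t-K,\,t\}$, the argument splits into two cases. If $t^{*} = t - K$, then $v$ receives chunk $t - K$ directly at time $l - K$ and we are done. If instead $t^{*} = t$, then $v$ has already received chunk $t$ at the earlier time $l - K$, and applying the inductive hypothesis to that earlier reception yields that $v$ received chunk $t - K$ by time $l - 2K$, which is still at most $l - K$, as required. The base case collapses neatly: either $l$ is too small for the statement to apply (so it is vacuous), or $v$ is the source, in which case $v$ ``receives'' chunks $t$ and $t - K$ at generation times $t$ and $t - K$ respectively, so the statement holds by inspection.

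The main obstacle is the lower bound $t^{*} \geq t - K$, as this is the step where the inductive hypothesis must be applied to the sender $u$ in order to guarantee that $u$'s buffer already contained a sufficiently recent color-$k$ chunk at time $l - K$. Handling the source requires a small separate remark so that the recursion is properly seeded. Once the sandwich $t - K \leq t^{*}\leq t$ is in place, everything else is a clean two-way case split, with the ``bad'' case $t^{*}=t$ folding right back into the inductive hypothesis applied at the strictly smaller reception time $l - K$.
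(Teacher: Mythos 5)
Your proof is correct and follows essentially the same route as the paper's: induct on the reception time, use the period-$K$ cyclic schedule to identify what the sender $u$ pushed over the same edge one round earlier, and apply the inductive hypothesis to $u$ (and, when needed, to $v$) to pin that earlier chunk down. The only difference is cosmetic: the paper restricts attention to the \emph{first} reception of chunk $t$ at $v$, which forces the chunk sent at time $l-K$ to be exactly $t-K$, whereas you allow $t^{*}\in\{t-K,t\}$ and fold the case $t^{*}=t$ back into the inductive hypothesis at time $l-K$; both handle the same bookkeeping. (A tiny slip: the color carried at position $j$ of the scheduling round is $j$, not $\lambda_j$ --- $\lambda_j$ is the layer index of the edge --- but this does not affect your argument.)
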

\begin{proof} Without loss of generality, fix
$t=0$. We prove by induction that if a peer receives chunk $0$ (with
color $k$) during timeslot $l\geq0$, it has received chunk $-K$
before or during timeslot $l-K$.

Initially $(l=0)$, chunk 0 arrives at the source during timeslot
0. Since the coloring queue moves one color-$k$ chunk every $K$
timeslots, chunk $-K$ must have arrived at the dissemination queue
of the source during timeslot $-K$. Since the source is the only
peer that has chunk 0 during timeslot 0, the statement is true for
$l=0$.

We now assume that the statement is true for $l\leq t'-1$. We next
show that the statement is also true for $l=t'$. Consider a particular
peer $j$ that receives chunk 0 for the first time through an incoming
edge $(i,j)$ during timeslot $t'$. This implies that peer $i$ has
received chunk 0 for the first time during timeslot $t''\in\{t'-K,t'-K+1,\cdots,t'-1\}$.
(If $t''\geq t'$, peer $j$ cannot receive chunk 0 during timeslot
$t'$. If $t''<t'-K$, peer $j$ must have received chunk 0 during
timeslot $t'-K$.) By the induction hypothesis, peer $i$ must have
received chunk $-K$ before or during timeslot $t''-K$. Thus, chunk
$-K$ has been the latest color-$k$ chunk to peer $i$ from timeslot
$t''-K+1$ to timeslot $t''$.
Since the schedule is cyclic, edge $(i,j)$
 was scheduled during timeslot $t'-K$. Since $t''-K+1\leq t'-K\leq t''$,
chunk $-K$ must have been transmitted during that timeslot over $(i,j)$.
Thus, the statement is true for $l=t'$.

By induction, if a peer first receives chunk 0 during timeslot $l$,
it has received chunk $-K$ before or during timeslot $l-K$. \end{proof}
Proposition~\ref{prop:dissemination_over_flowgraphs} implies that
if a peer receives chunk $t$, it has at least $K$ timeslots (one
scheduling round) to distribute the chunk to its children before a
later chunk arrives. Since the peer schedules each outgoing edge in
$G_{k}$ exactly once during every $K$ timeslots for transmitting
color-$k$ chunks, the peer will transmit chunk $t$ to its children
in $G_{k}$ before a later chunk arrives. Thus, every color-$k$ chunk
arriving at the source can be disseminated to all the peers that are
connected from the source in $G_{k}$, i.e., there exists a path from
the source to the peers in $G_{k}$. Since every flow graph $G_{k}$
contains layer $\lambda_{k}$, which is a Hamiltonian cycle, every
peer is connected from the source. Thus, all chunks arriving at rate
$(1-1/K)$ can eventually be disseminated to all peers under our chunk-dissemination
algorithm.

In streaming applications, this throughput analysis is meaningless
without a delay guarantee. We next consider how fast each color-$k$
chunk is disseminated over flow graph $G_{k}$. \begin{lemma}\label{lemma:dissemination_speed}
Let $d_{k}(v)$ be the shortest distance from the source to peer $v$
in flow graph $G_{k}$. If a color-$k$ chunk arrives at  the source during timeslot $t$, peer $v$ receives the chunk
by timeslot $t+Kd_{k}(v)$. \end{lemma}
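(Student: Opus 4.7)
The plan is to prove the bound by induction on the distance $d=d_k(v)$. For the base case $d_k(v)=0$, the peer $v$ is the source itself, which holds any chunk it generates at timeslot $t$, so the claim holds trivially. For the inductive step, assume the statement for all peers at distance at most $d-1$, and consider $v$ with $d_k(v)=d$. By definition of shortest distance in $G_k$, there exists some peer $u$ with $d_k(u)=d-1$ and $(u,v)\in E_{\lambda_k}\cup\mathcal{E}_{M,k}$. By the induction hypothesis, $u$ has received chunk $t$ by some timeslot $\tau\le t+K(d-1)$.

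Now I would use two facts about the chunk-dissemination algorithm to push the chunk from $u$ to $v$ within the next $K$ timeslots. First, by the cyclic schedule of period $K$, the outgoing edge $(u,v)$ (whether it is the $\lambda_k$-th outgoing edge of $u$, or the $M$-th outgoing edge of $u$ with $\mu(u)=k$) is scheduled at exactly one timeslot $s\in\{\tau+1,\tau+2,\ldots,\tau+K\}$, and whenever it is scheduled, $u$ transmits the latest color-$k$ chunk in $Q_{u,k}(s)$. Second, I would invoke Proposition~\ref{prop:dissemination_over_flowgraphs} to argue that chunk $t$ is still the latest color-$k$ chunk available at $u$ at timeslot $s$: applied iteratively, the proposition says that if $u$ first receives chunk $t+jK$ at some timeslot $l$, then $u$ received chunk $t$ by timeslot $l-jK$, so $l\ge\tau+jK\ge\tau+K\ge s$; and even in the boundary case $l=s$, chunks received during timeslot $s$ are not yet in $Q_{u,k}(s)$. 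Hence the chunk transmitted over $(u,v)$ at time $s$ is chunk $t$, and $v$ receives it by timeslot $s\le\tau+K\le t+Kd$, completing the induction.

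The only delicate point is justifying that chunk $t$ has not been overwritten at $u$ by a later color-$k$ chunk before the edge $(u,v)$ gets its turn in the schedule. This is exactly what Proposition~\ref{prop:dissemination_over_flowgraphs} is designed for: it guarantees a $K$-timeslot ``protection window'' after $u$ receives chunk $t$, which matches precisely the period of the scheduling round. Once this point is handled, the rest is a clean induction, and no extra randomness (such as the coloring decisions $\mu(\cdot)$) enters the argument beyond what is already encoded in the definition of $G_k$.
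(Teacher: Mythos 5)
Your proof is correct and is essentially the same argument the paper gives, just unrolled as an explicit induction on $d_k(v)$: the paper's one-line proof likewise combines the $K$-slot protection window from Proposition~\ref{prop:dissemination_over_flowgraphs} with the period-$K$ cyclic schedule to conclude that each hop along the shortest path costs at most $K$ timeslots. Your added care about the boundary case (chunks arriving during timeslot $s$ are not yet in $Q_{u,k}(s)$) and the case split on whether $(u,v)\in E_{\lambda_k}$ or $(u,v)\in\mathcal{E}_{M,k}$ are details the paper leaves implicit, but they do not change the route.
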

\begin{proof}
If a peer receives a color-$k$ chunk, it transmits the chunk to its children
in $G_{k}$ during next $K$ timeslots by Proposition~\ref{prop:dissemination_over_flowgraphs}.
Thus, the time until peer $v$ receives the chunk through the shortest
path in $G_{k}$ from the source to peer $v$ does not exceed $Kd_{k}(v)$.
\end{proof}

Lemma~\ref{lemma:dissemination_speed} shows that the delay required
to disseminate a color-$k$ chunk to peer $v$ is upper bounded by
$K\cdot d_{k}(v)$ timeslots. If we call the maximum distance $\max_{v\in V}d_{k}(v)$
\emph{the depth $d_{k}^{*}$ of $G_{k}$}, the delay to disseminate
a color-$k$ chunk to all peers is upper bounded by $K d_{k}^{*}$
timeslots. Thus, if $d_{k}^{*}$ is $\Theta(\log N)$ for all $k=1,2,\cdots,K-1$,
the dissemination delay of our algorithm is upper bounded by $\Theta(\log N)$
timeslots. In the next section, we prove that this is true with high
probability.

\section{Depth of Flow Graphs}
\label{sec:Delay_Analysis}

In this section, we consider the depth $d_{k}^{*}$ of each flow graph
$G_{k}$. Since all the layers are random graphs, which are affected
by the history of past peer churn, the corresponding flow graph is
also a random graph. Thus, the depth $d_{k}^{*}$ of the flow graph
must also be a random variable. The objective of this section is to
show the following proposition:
\begin{proposition}\label{prop:depth_of_flowgraph}
For any $\psi\in (0, q/2)$, the maximum distance $d_k^*$ from the source to all other peers in flow graph $G_{k}$ is $O(\log_{1+\psi}N)$ with probability $1-O(\log_{1+\psi}N/N^{\sigma'})$ for some positive constant $\sigma'$ \tblue{and $q=1/(K-1)$.}
 \end{proposition}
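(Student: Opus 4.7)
Since the coloring decisions $\{\mu(v)\}_{v \in V}$ are i.i.d.\ uniform on $\{1,\ldots,K-1\}$, the layer-$M$ out-edge of each vertex is retained in $G_k$ independently with probability $q=1/(K-1)$. Thus $G_k$ is the superposition of a uniform random Hamiltonian cycle (layer $\lambda_k$) with a second, independent random Hamiltonian cycle (layer $M$) whose edges are kept independently with probability $q$; every vertex has out-degree $1$ or $2$, with expected out-degree $1+q$. My plan is to analyze the BFS tree from the source and show that its frontier expands by a factor of at least $1+\psi$ per level, with high probability.

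Let $S_\ell$ be the set of peers at distance at most $\ell$ from the source and $B_\ell = S_\ell \setminus S_{\ell-1}$ the boundary. I would expose $G_k$ level by level: at level $\ell$, sequentially reveal, for each $v\in B_\ell$, its successor in layer $\lambda_k$ and, if $\mu(v)=k$, its successor in layer $M$. Because the cycles are uniform permutations, each freshly revealed successor is conditionally nearly uniform over vertices not already used as a successor in that cycle. Hence, while $|S_\ell|\leq N/2$, a new successor lies outside $S_\ell$ with conditional probability at least $1/2 - o(1)$, so $\mathbb{E}[|B_{\ell+1}|\mid S_\ell] \geq (1+q)(1-|S_\ell|/N)|B_\ell|$, which strictly exceeds $(1+\psi)|B_\ell|$ for any $\psi<q/2$ as long as $|S_\ell|$ is bounded away from $N/2$. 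To upgrade this to a high-probability statement, I would form a Doob martingale by revealing successors one at a time within each level. A single revelation changes $|B_{\ell+1}|$ by at most one, so Azuma--Hoeffding yields $|B_{\ell+1}|\geq (1+\psi)|B_\ell|$ with failure probability $\exp(-c|B_\ell|)$ for some $c>0$ depending on the gap $q/2-\psi$. An initial seeding phase of $O(\log N)$ levels, driven purely by the deterministic out-degree-$1$ structure, raises $|B_\ell|$ to $\Omega(\log N)$, after which iterating the concentration bound drives $|S_\ell|$ to $N^{1/2+\epsilon}$ in $O(\log_{1+\psi}N)$ additional levels, with total failure probability $O(\log_{1+\psi}N/N^\sigma)$ for some $\sigma>0$.

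To reach every peer, I would couple the forward BFS from the source with a reverse BFS from an arbitrary target $v$. By symmetry --- in-degree is also $1$ or $2$ with mean $1+q$ --- the same martingale argument gives a reverse set of size $N^{1/2+\epsilon}$ within $O(\log_{1+\psi}N)$ steps. A birthday-paradox argument on the still-unexposed randomness then forces the forward and reverse sets to intersect with failure probability $O(N^{-\sigma'})$, producing a directed path of length $O(\log_{1+\psi}N)$ from the source to $v$; a union bound over the $N-1$ targets completes the proof. The main obstacle will be setting up the Doob filtration so that the expansion estimate is preserved: because the two cycles are uniform permutations rather than independent edges, the conditional distribution of each newly exposed successor depends on the entire history of previously exposed successors, and one must verify that the ``uniform over unexposed vertices'' approximation is tight enough --- despite the global cycle constraint --- to justify both the in-expectation lower bound and the bounded-differences hypothesis needed by Azuma--Hoeffding.
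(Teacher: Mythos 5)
Your first half is in the same spirit as the paper's edge-expansion analysis, but there is a concrete gap in the ``initial seeding phase.'' Following the deterministic out-degree-$1$ cycle for $\ell$ steps grows the cumulative set $|S_\ell|=\ell+1$, but the boundary $|B_\ell|$ stays a single vertex --- the cycle contributes exactly one new vertex per level --- and your level-by-level Azuma bound $\exp(-c|B_\ell|)$ is useless while $|B_\ell|$ is $O(1)$. The paper avoids this by not working level by level: it tracks the cumulative quantity $z(t)$ (the number of vertices reachable within one hop from the first $t$ BFS-ordered vertices) and forms a Doob martingale over all $t$ sequential exposures, yielding failure probability $\exp(-\sigma t)$, which is already polynomially small once $t\ge\log N$ regardless of how thin any individual BFS level is. To make your version work you would either need to reformulate the concentration cumulatively or give a separate argument that the \emph{boundary} itself reaches $\Omega(\log N)$, which the deterministic cycle phase does not provide.

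Your second half genuinely departs from the paper. You propose a meet-in-the-middle / birthday-paradox argument with a union bound over $N-1$ targets; the paper instead proves a contraction result: it shows $F(t)=(N-z(t))/(N-t)$ is a supermartingale bounded strictly below $1$, so the number of unreached vertices shrinks geometrically, directly bounding the max source-to-peer distance with no union bound over targets. The distinction is not cosmetic. Your per-target failure is governed by the Azuma exponent, which in this setting is $\sigma=(q/2-\psi)^2/8$, far smaller than $1$ for every $K$; then $N\cdot N^{-\sigma'}\not\to 0$ and the union bound over targets does not close unless you establish a substantially stronger per-target tail (e.g.\ by pushing the forward and reverse frontiers to size $C\sqrt{N\log N}$ for a large constant $C$ and verifying the hitting-probability calculation survives the conditioning). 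You also need to be careful with that conditioning: forward and reverse explorations expose overlapping information about the same two cyclic permutations, and the ``fresh randomness'' for the birthday estimate consists only of the unexposed out-edges of the forward frontier, whose conditional law given both explorations must be controlled. The paper's contraction route, together with its separate time-reversal lemma (the reversed flow graph is identically distributed), both bounds the max distance from the source and then yields the diameter bound essentially for free, since $d_{i,j}\le d_{i,1}+d_{1,j}$ reduces the diameter to two single max-distance events.
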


To prove Proposition~\ref{prop:depth_of_flowgraph}, we follow the
following three steps. First, to characterize random variable $d_{k}^{*}$,
we need to characterize the random graph $G_{k}$. In this step, we
show that there is an alternative way to construct the random graph
which is stochastically equivalent to the construction described in
Section~\ref{sec:System_Model}. In the second step, using the alternative
construction, we will show that the number of peers within $l$ hops
from the source in $G_{k}$ increases exponentially in $l$ until the number is no larger than $N/2$. In the last step, we show that the number of remaining peers that are not within $l$ hops from the source reduces exponentially.

\subsection{Distribution of Flow Graphs}
\label{subsection:dist_flow_graphs}

Consider two random multi-digraphs $G'=(V,E')$ and $G''=(V,E'')$
that have the same peer set $V$ and random edges. For every possible
multi-digraph $G$ with peer set $V$, we say that these two random
graphs have the same distribution if $\prob{G'=G}=\prob{G''=G}$.
In this subsection, we consider how flow graph $G_{k}$ is distributed
and how to construct a random graph that has the same distribution
as flow graph $G_{k}$.

Recall that flow graph $G_{k}$ is the superposed graph of layer $\lambda_{k}$
and $(V,\mathcal{E}_{M,k})$, a subgraph of layer $M$. Thus, we first
consider the distribution of each layer. As in the first example of
Section~\ref{sec:examples}, construct a random Hamiltonian cycle
by permuting the peers in $V$ and let $H$ denote this cycle.
(Recall that all the graphs in this paper are directed graphs, and thus we omit repeatedly mentioning ``directed.'') Let
$\cC(V)$ be the set of all possible Hamiltonian cycles that we can
make with peer set $V$. Then, it is easy to see that $H$ is distributed
as
\begin{equation}
\prob{H=G}=\frac{1}{|\cC(V)|}=\frac{1}{(N-1)!},\label{eq:distribution_hamilton}
\end{equation}
for every $G\in\cC(V)$. The following proposition shows each layer
$L_{m}$ has the same distribution as $H$. \begin{proposition}\label{prop:layer_distribution}
For $m=1,2,\cdots,M$, each layer $m$ (denoted by $L_{m}$) and random
 Hamiltonian cycle $H$ have the same distribution, i.e., for every $G\in\cC(V)$,
\[
\prob{L_{m}=G}=\prob{H=G}=\frac{1}{(N-1)!}.
\]
\end{proposition}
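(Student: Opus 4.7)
The plan is to prove this by induction on the sequence of peer arrivals and departures that produced the current peer set $V$. Since the edge-choice randomness used by arriving peers in layer $m$ is independent of the randomness in other layers, it suffices to analyze one layer in isolation. The inductive invariant is simple: after any arrival/departure history, $\layer_m$ is uniform over $\cC(V)$.

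The base case is the two-peer initial network, for which $|\cC(\{1,2\})| = 1 = (2-1)!$, so the unique cycle $1 \to 2 \to 1$ is trivially uniform. For the inductive step under an arrival of peer $v$ to a network with $|V|=n$, the key observation is that there is a bijection between $\cC(V \cup \{v\})$ and the set of pairs $(C,e)$ with $C \in \cC(V)$ and $e \in C$: every cycle on $V \cup \{v\}$ is obtained by inserting $v$ into a unique edge of a unique cycle on $V$. Since the new peer picks its edge independently and uniformly at random from the $n$ edges of $\layer_m$, the probability of obtaining any specified cycle $\tilde C \in \cC(V \cup \{v\})$ equals
\[
\frac{1}{(n-1)!}\cdot \frac{1}{n} \;=\; \frac{1}{n!} \;=\; \frac{1}{|\cC(V \cup \{v\})|},
\]
which preserves the invariant. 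For the inductive step under the departure of a (non-source) peer $v$, the same bijection is used in reverse: each cycle $C' \in \cC(V \setminus \{v\})$ is the image of exactly $n-1$ cycles in $\cC(V)$ (one for each possible insertion position of $v$ into $C'$). Summing the uniform probability over these preimages gives $(n-1)\cdot\frac{1}{(n-1)!} = \frac{1}{(n-2)!} = \frac{1}{|\cC(V \setminus \{v\})|}$, again preserving the invariant.

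The only subtle point, and the one I would pay the most attention to, is justifying the independence used in the arrival step: the uniform edge selected by the new peer must be independent of the current layer $\layer_m$ conditioned on the history. This follows from the pairing algorithm, since the tracker returns an IP address that is an independent uniform sample from the current peer list, which amounts to choosing an edge of $\layer_m$ uniformly at random, independently of how $\layer_m$ itself was built. Once that is established, the inductive argument above, combined with the fact that the per-layer randomness is independent across layers, yields the stated distribution $\prob{L_m = G} = 1/(N-1)!$ for every $G \in \cC(V)$.
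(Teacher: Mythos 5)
Your proof is correct and follows essentially the same inductive approach as the paper, inducting on the arrival/departure history of the peer set with separate cases for peer joins and leaves. The only difference is one of explicitness: you spell out the counting bijection between $\cC(V\cup\{v\})$ and pairs $(C,e)$ with $C\in\cC(V)$ and $e$ an edge of $C$, whereas the paper presents the same idea as a coupling ($L_m'$ distributed as $H'$, then $H'$ distributed as a fresh permutation cycle $H''$) and leaves the underlying counting as ``easy to see.''
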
 \begin{proof} We prove by induction. Fix layer
$m$ ($L_{m}$). Initially (with two peers 1 and 2), layer $m$ is
$1\rightarrow2\rightarrow1$, which is the only possible
Hamiltonian cycle with two peers. Thus, each layer and $H$ have the
same distribution.

Suppose that $L_{m}=(V,E_{m})$ and $H\in\cC(V)$ have the same distribution
for peer set $V$ with $|V|\geq2$. If a new peer $i\notin V$ joins,
this peer chooses one edge from $L_{m}$ uniformly at random and
breaks into the edge. Let $L_{m}'$ be layer $m$ after adding peer
$i$. Similarly, choose one edge from $H$ uniformly at random and
add peer $i$ into the edge. Let $H'$ be the Hamiltonian cycle after
adding peer $i$ to $H$. Clearly, $L_{m}'$ has the same distribution
as $H'$. It is easy to see that making a Hamiltonian cycle by permuting
$|V|$ peers and then adding peer $i$ into the cycle is equivalent
to making a Hamiltonian cycle $H''$ by permuting the peers in $V\cup\{i\}$.
Thus, $H'$ and $H''$ have the same distribution, and so do $L_{m}'$
and $H''$.

Suppose that $L_{m}=(V,E_{m})$ and $H\in\cC(V)$ have the same distribution
for peer set $V$ with $|V|>2$. We remove a peer $i\in V$ from layer
$m$ and connect its incoming and outgoing edges. Let $L_{m}'$ denote
the layer after removing the peer. Since $L_{m}$ and $H$ have the
same distribution, if we remove peer $i$ from $H$, the resulting
graph $H'$ will also have the same distribution as $L_{m}'$. It
is easy to see that making a Hamiltonian cycle by permuting the peers
in $V$ and removing peer $i$ is equivalent to making a Hamiltonian
cycle $H''$ by permuting the peers in $V\setminus\{i\}$. Thus, $H'$
and $H''$ have the same distribution, and so do $L_{m}'$ and $H''$.

By induction, at any given time, each layer $L_{m}$ with peer set
$V$ has the same distribution as a random Hamiltonian cycle obtained
by permuting the peers in $V$. \end{proof} 

Proposition~\ref{prop:layer_distribution} shows that conditioned
on peer set $V$, each layer has the same distribution as $H$. Among
layer $\lambda_{k}$ and $(V,\mathcal{E}_{M,k})$ that form flow graph
$G_{k}$, $\lambda_{k}$ can be replaced with $H$ for analysis.

We next consider how $(V,\mathcal{E}_{M,k})$ is distributed. Note
that $(V,\mathcal{E}_{M,k})$ is a subgraph of layer $M$ that consists
of only the edges $(i,j)\in E_{M}$ with $\mu(i)=k$. Since each peer
$i$ makes its coloring decision to be $\mu(i)=k$ with probability
$q\triangleq1/(K-1)$, $(V,\mathcal{E}_{M,k})$ can be seen as the
graph made from layer $M$ by independently removing each edge with
probability $1-q$. Since layer $M$ has the same distribution as $H$,
$(V,\mathcal{E}_{M,k})$ has the same distribution as the graph $H'$
that we obtain from $H$ by removing each edge with probability $1-q$.

Note that conditioned on peer set $V$, layer $\lambda_{k}$ and layer
$M$ are mutually independent because peer pairing in a layer has
been independent from that in another layer, i.e., for any $\mathcal{G'},\mathcal{G''}\subset\cC(V)$,
\[
\prob{L_{\lambda_{k}}\in \mathcal{G'},L_{M}\in \mathcal{G''}}=\prob{L_{\lambda_{k}}\in \mathcal{G'}}\prob{L_{M}\in \mathcal{G''}}.
\]
Thus, the graph that has the same distribution as $G_{k}$ can be
constructed from two independent random Hamiltonian cycles as follows:
\begin{proposition}\label{prop:distribution_of_flowgraph} Construct
two random Hamiltonian cycles $H_{1}$ and $H_{2}$ by permuting the
peers in $V$ independently for each. Remove each edge from $H_{2}$
with probability $1-q$, where $q=1/(K-1)$, and call the resulting
graph $H_{2}'$. If we superpose $H_{1}$ and $H_{2}'$, the superposed
graph $H^{*}$ has the same distribution as flow graph $G_{k}$ for
$k=1,2,\cdots,K-1$. \end{proposition}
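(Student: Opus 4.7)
The plan is to assemble the proposition directly from the three facts the preceding discussion has already established, and then argue the claim about the \emph{joint} distribution via independence. Concretely, I would first invoke Proposition~\ref{prop:layer_distribution} twice: once to assert that, conditional on $V$, layer $\lambda_k$ is distributed as a uniformly random Hamiltonian cycle on $V$ (and hence has the same distribution as $H_1$), and once to assert that layer $M$ is distributed as a uniformly random Hamiltonian cycle on $V$ (and hence has the same distribution as $H_2$).

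Second, I would handle the ``thinning'' that produces $(V,\mathcal{E}_{M,k})$ from layer $M$. The coloring decisions $\{\mu(v)\}_{v\in V}$ are i.i.d.\ uniform on $\{1,2,\dots,K-1\}$, so for each fixed edge $(i,j)\in E_M$ the event $\mu(i)=k$ has probability $q=1/(K-1)$, and these events are mutually independent across distinct tails $i$. Since each peer is the tail of exactly one edge of layer $M$ (Lemma~\ref{lemma:cycle_structure}), the map $E_M\to\mathcal{E}_{M,k}$ retains each edge of layer $M$ independently with probability $q$. Applying the same percolation procedure to $H_2$ produces $H_2'$, and because percolation depends only on the edge set and not on how the cycle was generated, $(V,\mathcal{E}_{M,k})$ has the same distribution as $H_2'$ conditional on $V$.

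Third, I would argue the joint distribution. The key point is that the pairing algorithm treats the $M$ layers as independent random permutations (a new peer chooses an edge to break \emph{independently} in each layer, and a departing peer's repair is local to each layer), and the coloring decisions $\{\mu(v)\}$ are drawn independently of both layers. Hence, conditional on $V$, the triple $(L_{\lambda_k},\,L_M,\,\{\mu(v)\})$ has a product distribution, which means $L_{\lambda_k}$ is independent of $(V,\mathcal{E}_{M,k})$; on the other side, $H_1$ and $H_2$ are independent by construction, so $H_1$ is independent of $H_2'$. Combining the two marginal equalities with the two independence statements gives equality of the joint laws of $(L_{\lambda_k},(V,\mathcal{E}_{M,k}))$ and $(H_1,H_2')$, and therefore of the superpositions $G_k$ and $H^*$.

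The main obstacle is really only a bookkeeping one: one must be careful that the independence across layers is genuinely preserved by the dynamic pairing algorithm under arrivals and departures. This is where an inductive argument (analogous to the induction in the proof of Proposition~\ref{prop:layer_distribution}, but applied to a \emph{product} measure on $M$-tuples of cycles augmented by the i.i.d.\ coloring labels) is needed; once that product structure is verified to be invariant under both the arrival and departure operations, the identification of $G_k$ with $H^*$ is immediate.
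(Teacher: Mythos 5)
Your argument is correct and is essentially the paper's own: the paper derives Proposition~\ref{prop:distribution_of_flowgraph} directly from Proposition~\ref{prop:layer_distribution}, the observation that each peer is the tail of exactly one layer-$M$ edge so the $\mu$-colorings effect independent $q$-thinning, and the asserted independence of $L_{\lambda_k}$ and $L_M$ conditional on $V$ (the paper states this independence without a formal inductive proof, just as you flag; the invariance of the product measure under the per-layer arrival/departure operations is indeed the one piece both treatments take as immediate).
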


Note that $H^{*}$ in Proposition~\ref{prop:distribution_of_flowgraph}
is identical to the graph in the second example of Section~\ref{sec:examples}.
\emph{Thus, the maximum distance from a given node
to all other nodes in that graph is stochastically equivalent to the
depth of each flow graph.} By proving the depth of $H^{*}$ is $\Theta(\log N)$
with high probability, we show that the depth of flow graph is also
$\Theta(\log N)$ with high probability.

There exist several ways to construct  $H_1$ and $H_2'$. The simplest way is to permute the peers and connect this permutation of peers with edges. However, when we construct $H_{1}$ and $H_{2}'$ using this method, analyzing the depth of $H^{*}$ is not straightforward. Instead, we use another equivalent process which provides us with a tractable
construction amenable to analysis:\\
 \textbf{Flow Graph Construction \tblue{(FGC)} Process:} Given peer set  $V$, \tred{and coloring decision ${\mu}(\cdot)$,}
\begin{enumerate}
\item \tblue{$v_1=1$ (source),} $Z=\{v_1\}, E^{(1)}=E^{(2)}=\emptyset$, $t=1$ (Here, the variable $t$ is used to indicate that there are $t-1$ edges in $E^{(1)}$), \tblue{and
    $$\vec \tau \triangleq (\tau_1, \tau_2,\cdots, \tau_N),$$
    where $\tau_1, \tau_2,\cdots, \tau_N$ are independent Bernoulli random variables with the same mean $q=1/(K-1)$.}
\item \tblue{Start iteration $t$: we will draw outgoing edges from $v_t$.}
    \tred{At iteration $t\geq 1$, among the peers in
    $Z\setminus\{v_{l}\;|\; l<t\}$,
let $v_t$ be the peer that was \tred{first} added to $Z$.  (When $t=1$, the set $\{v_l | l < t\}$ is taken to be the empty set.)}
\item Choose $c_{t}$ \tred{and $c_{t}'$} from $C(v_{t},E^{(1)})$ \tred{and $C(v_{t},E^{(2)})$,
respectively,} uniformly at random, where $C(v,E)$ is the set of peers satisfying
\begin{condition}\label{condition:to_be_a_cycle} For every $c\in C(v,E)$, \\
 (a) There is no edge ending at peer $c$ in $E$.\\
 (b) Adding edge $(v,c)$ to graph $(V,E)$ does not incur a loop or a cycle unless the cycle is Hamiltonian.
\end{condition}
\item Add \tred{$c_t$ to $Z$ and} $(v_t,c_t)$ to $E^{(1)}$.
\item \tblue{If $c_t\notin Z$, add $c_t$ to $Z$ and then let $v_{|Z|}=c_t$.}
\item If \tblue{$\tau_t=1$,} \tred{${\mu}(v_t)=k$,}
\tblue{\begin{enumerate}
\item Choose $c_t'$ from $C(v_{t},E^{(2)})$ uniformly at random.
\item Add  $(v_t, c_t') $ to $E^{(2)}$.
\item If $c_t'\notin Z$, add $c_t'$ to $Z$ and then let $v_{|Z|}=c_t'$.
\end{enumerate}}
 \tblue{If $\tau=0$, no edge is added to $E^{(2)}$, and no peer is added to $Z$ in this step. In this case, we set $c_t'=\infty$.}
\item If $t<N$, increase $t$ by one and go to Step~2.
\item Return $E^{(1)}$ and $E^{(2)}$.
\end{enumerate}

For given peer set $V$, we can construct two random graphs $(V, E^{(1)})$ and $(V, E^{(2)})$. The next proposition shows that if we superpose these graphs, the resulting graph has the same distribution as $H^*$.
\begin{proposition}\label{prop:flow_graph_process}
Random graphs $(V, E^{(1)})$ and $(V, E^{(2)})$ constructed by the FGC process are mutually independent and have the same distribution as $H_1$ and $H_2'$, respectively.
\end{proposition}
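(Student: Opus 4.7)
To prove Proposition~\ref{prop:flow_graph_process}, the plan is to compute the joint probability $\Pr[E^{(1)} = G_1,\, E^{(2)} = G_2']$ produced by the FGC for an arbitrary pair of targets---$G_1$ a Hamiltonian cycle on $V$ and $G_2'$ a subgraph of some Hamiltonian cycle---and show it equals $\Pr[H_1 = G_1]\cdot\Pr[H_2' = G_2']$. A single such factorization simultaneously yields the two correct marginals and the mutual independence asserted in the proposition.

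The observation that makes the computation tractable is that, once $(G_1,G_2')$ is fixed, the FGC has exactly one trajectory producing it: at iteration $t$, $c_t$ must be the successor of $v_t$ in $G_1$, $\tau_t$ must equal $1$ iff $v_t$ has an outgoing edge in $G_2'$, and (when $\tau_t=1$) $c_t'$ must be the successor of $v_t$ in $G_2'$. So the joint probability collapses to a product of per-step probabilities along this unique trajectory, once we know the sizes $|C(v_t, E^{(1)})|$ and $|C(v_t, E^{(2)})|$ at each step.

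The main technical obstacle, I expect, is evaluating these sizes cleanly---and in particular showing that they do \emph{not} depend on the intertwined, random order in which vertices enter $Z$. I would handle this with a short induction on $t$ that uses condition~(b) in the definition of $C$ to maintain the invariant that both $E^{(1)}$ and $E^{(2)}$ are vertex-disjoint unions of simple directed paths, with $v_t$ having no outgoing edge in either at the moment it is processed. A direct count then yields $|C(v_t,E^{(1)})| = N-t$ for $t<N$ (and $1$ for $t=N$), and $|C(v_t,E^{(2)})| = N-r_t-1$ for $r_t<N-1$ (and $1$ for $r_t=N-1$), where $r_t$ denotes the number of $E^{(2)}$-edges placed before iteration $t$.

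Multiplying these together, the FGC assigns the unique trajectory for $(G_1,G_2')$ with $r = |G_2'| < N$ a probability of
\[
q^{r}(1-q)^{N-r}\cdot\prod_{t=1}^{N-1}\frac{1}{N-t}\cdot\prod_{s=0}^{r-1}\frac{1}{N-s-1} \;=\; \frac{q^{r}(1-q)^{N-r}\,(N-r-1)!}{\bigl((N-1)!\bigr)^{2}},
\]
with the boundary case $r=N$ handled analogously. By Proposition~\ref{prop:layer_distribution}, $\Pr[H_1=G_1] = 1/(N-1)!$. For $\Pr[H_2'=G_2']$, I sum $q^{r}(1-q)^{N-r}/(N-1)!$ over all Hamiltonian cycles $H_2\supseteq G_2'$; contracting each of the $N-r$ disjoint paths of $G_2'$ into a supernode and invoking the standard count of $(N-r-1)!$ directed Hamiltonian cycles on $N-r$ labeled nodes shows that exactly $(N-r-1)!$ such $H_2$ exist. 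The two expressions coincide, which proves the proposition.
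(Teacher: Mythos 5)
Your proof is correct, and the skeleton --- reduce the FGC process to a single deterministic trajectory per target pair and then count --- matches the paper's argument; the two proofs share the same essential ingredients, namely that the map from the random choices $(\vec c, \vec\tau, \vec c')$ to $(E^{(1)}, E^{(2)})$ is injective, and that the candidate-set sizes $|C(v_t,E^{(1)})|=N-t$, $|C(v_t,E^{(2)})|=N-r_t-1$ are deterministic (Lemma~\ref{lemma:num_of_candidates}). Where you genuinely differ is in the bookkeeping. The paper conditions on the number of retained edges $J$, shows that conditionally on $\sum_{j\le N}\tau_j=J$ the FGC output is uniform over $\cC(V)\times\tilde\cC(V,J)$ by counting both the trajectories and $|\tilde\cC(V,J)|=\binom{N}{J}(N-1)!/(N-J-1)!$, and then couples the two conditionals by noting that $\sum_j\tau_j$ and $|E(H_2')|$ share the $\mathrm{Binomial}(N,q)$ law. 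You instead evaluate the joint probability mass at an arbitrary $(G_1,G_2')$ directly and compute $\Pr[H_2'=G_2']$ via the dual count --- the number of Hamiltonian cycles extending a fixed $r$-edge path-forest is $(N-r-1)!$ by path contraction --- so the factorization $\Pr[E^{(1)}=G_1,E^{(2)}=G_2']=\Pr[H_1=G_1]\Pr[H_2'=G_2']$ yields independence and both marginals at once, with no conditioning and no binomial-matching step; this is arguably tidier. The one thing to state explicitly (you only assert it) is that the candidate trajectory is realizable: at each iteration the required successor of $v_t$ in $G_1$ (resp.\ in $G_2'$) does lie in the set $C(v_t,E^{(1)})$ (resp.\ $C(v_t,E^{(2)})$), because the partial edge set is always a sub-path-forest of the corresponding Hamiltonian cycle and hence satisfies Condition~\ref{condition:to_be_a_cycle}. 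That is what certifies that the FGC probability equals the product you wrote rather than zero.
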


\tblue{
In the rest of this subsection, we provide the intuition of the proof. The detailed proof is provided in Appendix~\ref{proof:prop:flow_graph_process}. When we construct $(V,E^{(1)})$ using the FGC process, we iteratively pick a peer $v_t$  that does not have an outgoing edge and draw an edge from it to a random peer $c_t$ that does not incur a non-Hamiltonian cycle. Thus, after drawing $N$ edges in this manner, the resulting graph $(V,E^{(1)})$ will be a Hamiltonian cycle in $\cC(V)$. Since we have chosen $c_t$ uniformly at random among the candidates not incurring a non-Hamiltonian cycle, the resulting Hamiltonian cycle $(V,E^{(1)})$ is uniformly distributed in $\cC(V)$.}

\tblue{We now consider how $(V,E^{(2)})$ has the same distribution as $H_2'$. Recall that we have obtained  $H_2'$ by independently removing each edge with probability $1-q$ from a random Hamiltonian cycle. Hence, if we draw a random Hamiltonian cycle as we have drawn $(V,E^{(1)})$ and then remove each edge beginning at peer $v_t$ with $\tau_t=0$, the resulting graph  should have the same distribution as $H_2'$. Say the resulting graph $H'$. Instead of removing edges after completing the random Hamiltonian cycle, we now draw a random edge from each peer $v_t$ with $\tau_t=1$, as we did for $(V,E^{(1)})$, and stop drawing once we finish drawing edges from the peers. Say the resulting graph $H''$.
Then, $H''$, which we have drawn skipping some edges, should have the same distribution as $H'$, which we have drawn deleting some edges from a random Hamiltonian cycle. Since the process of drawing $H''$ is identical to the way how the FGC process constructs $(V,E^{(2)})$, both   $(V,E^{(2)})$ and $H_2'$ have the same distribution.}

\tblue{We finally show that $(V,E^{(1)})$ and $(V, E^{(2)})$ constructed by the FGC process are mutually independent. At each iteration, we have chosen the children  $c_t$  and $c_t'$ of peer $v_t$ independently of each other. Further, $\tau_t$ is chosen independently of $c_t$ and $c_t'$. Hence, after $N$ iterations, $E^{(1)}$ and $E^{(2)}$ are mutually independent, and thus so are the resulting graphs $(V,E^{(1)})$ and $(V, E^{(2)})$.}

\tred{In the rest of this subsection, we prove this proposition. We first show that $(V,E^{(1)})$ and $(V, E^{(2)})$ are mutually independent. We then show that $(V,E^{(1)})$ and $(V, E^{(2)})$ have the same distribution as $H_1$ and $H_2'$, respectively.
At each iteration, the choice of $c_t$ is independent of $E^{(2)}$, and the choice of $c_t'$ is independent of $E^{(1)}$.  Thus,  $E^{(1)}$ and $E^{(2)}$ at the end of the process are mutually independent, and so are $(V,E^{(1)})$ and $(V, E^{(2)})$.}

\tred{We next show that  $(V,E^{(1)})$ and $(V, E^{(2)})$ have the same distribution as $H_1$ and $H_2'$, respectively. To this end, we need to find the number of candidates for $c_t$ and $c_t'$.
\begin{lemma}
At iteration $t$ of the FGC process, we have $$|C(v_t,E^{(1)})|=N-|E^{(1)}|-1=N-t,$$
$$|C(v_t,E^{(2)})|=N-|E^{(2)}|-1=N-\summu{t-1}-1,$$
for $t<N$.
\end{lemma}
Using Lemma~\ref{lemma:num_of_candidates}, we   finish the proof of  Proposition~\ref{prop:flow_graph_process}.
At each iteration $t<N$, the number of possible choices for $c_t$ is $N-t$. At iteration $t=N$, the must be only one acyclic path that passes all peers because we have drawn $N-1$ edges and the edges do not form a cycle by Condition~\ref{condition:to_be_a_cycle}. Thus, there must be only one choice for $c_t$. Overall, there exist $(N-1)!$ combinations for $(c_1, \cdots, c_N)$. Since we have chosen $c_t$ uniformly at random, $(c_1, \cdots, c_N)$ is uniformly distributed among the combinations. Note that each $(c_1, \cdots, c_N)$ corresponds to a unique Hamiltonian cycle $(V, E^{(1)})$. Thus, $(V, E^{(1)})$ is uniformly distributed in $\mathcal{C}(V)$ as  $H_1$.}

\tred{We next show that $(V,E^{(2)})$ has the same distribution as $H_2'$. Fix $\mu(v)$ for each $v\in V$. Note that there exists a unique subsequence  $t_1<t_2<\cdots< t_{J}$ of $1,2,\cdots, N$ that satisfies  $\mu(v_t)=k$ if $t\in\{t_i\;|\; 1\leq i\leq J\}$ and $\mu(v_t)\neq k$  otherwise. By definition, it follows that $J=\summu{ N}$.
By Lemma~\ref{lemma:num_of_candidates}, at the beginning of each iteration $t_i$, the number of possible choices for $c_{t_i}'$  is $N-i$ because $E^{(2)}$ at this moment has $i-1$ edges. Hence, there exists $(N-1)!/ (N-J-1)!$ combinations for $(c_{t_1}', c_{t_2}', \cdots, c_{ t_{J}}')$. Since $c_{t_i}'$ is chosen uniformly at random at each iteration $t_i$, $(c_{t_1}', c_{t_2}', \cdots, c_{ t_{J}}')$ is also uniformly distributed among these combinations. Since $(V, E^{(2)})$ for each $(c_{t_1}', c_{t_2}', \cdots, c_{ t_{J}}')$ corresponds to a unique subgraph of a Hamiltonian cycle where only peers $v_{t_1},\cdots, v_{t_J}$ have an outgoing edge. It is easy to see that the number of such subgraphs is also $(N-1)!/ (N-J-1)!$. Thus, $(V, E^{(2)})$ is uniformly distributed among all such subgraphs. Recall that we obtain $H_2'$ by removing each edge $(i,j)$ with $\mu(i)=0$ from $H_2$. Hence, conditioned on the same $\mu(\cdot)$, $H_2'$ must be one of the subgraphs. Since $H_2$ is uniformly distributed in $\mathcal{C}(V)$, it is easy to show that $H_2'$ is also uniformly distributed among the subgraphs. Thus, conditioned on $\mu(\cdot)$, $(V, E^{(2)})$ and $H_2'$ have the same distribution.}

Overall, $(V,E^{(1)},E^{(2)})$ that we have drawn using  the FGC process will have the same distribution as $H^{*}$ by Proposition~\ref{prop:distribution_of_flowgraph}.
Thus, the FGC process can be seen as an another way to construct flow graph $G_{k}$. Note that we do not propose this process to construct the network topology in practice. We use this process for analysis and use our peer-pairing algorithm in practice,
which results in random graphs with the same distribution. In the
rest of this section, we analyze the depth of $H^{*}$.
Our analysis to spread over the next two subsections:
\begin{enumerate}
  \item In Section~\ref{subsection:edge_expansion}, we show that it is possible to reach the closest $N/2$ nodes from the source node in $O(\log N)$ hops.
  \item In Section~\ref{subsection:contraction_of_remainging_graphs}, we show that we can reach all the other nodes from the set of the closest $N/2$ nodes in another $O(\log N)$ hops.
\end{enumerate}

\subsection{Edge Expansion of Flow Graph $G_{k}$}
\label{subsection:edge_expansion}
\begin{figure}
\centering \includegraphics[width=0.9\mylength]{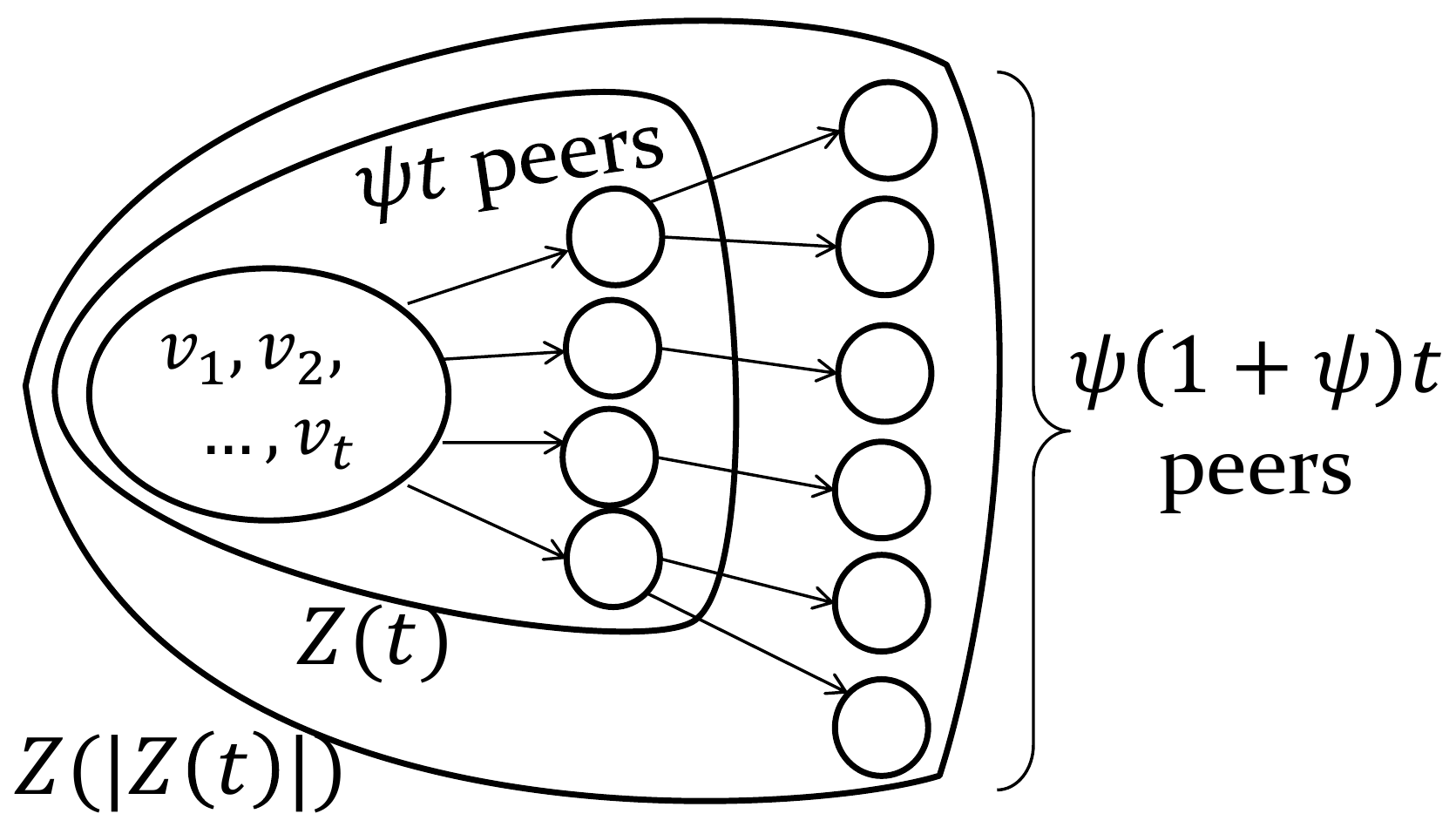}\\
 \caption{If $|Z(t)|>(1+\psi)t$ for some $\psi>0$ in
the flow graph construction process, the number of the peers within $h$ hops from
the first $t$  peers $v_{1},v_{2},\cdots,v_{t}$ is $(1+\psi)^{h}t$, which increases
exponentially in $h$. Thus, $N/2$ peers can be covered within $O(\log_{1+\psi} N)$
steps.}

\label{fig:flow_graph_expansion}
\end{figure}

Before we present our proof that the closest $N/2$ nodes from the source can be reached in $O(\log N)$ hops, we first present some intuition behind the result.
Let $E^{(1)}(t), E^{(2)}(t),$ and $Z(t)$ be $E^{(1)}$, $E^{(2)}$, and $Z$, respectively, at the end of iteration $t$ in the FGC process. By definition, $Z(t)$ is the set of peers $\{v_1, \cdots, v_t\}$ and their children at the end of iteration $t$.
Given these definitions, the proof can be broken into three major steps:\\
\textbf{Step~(i):} We show that $E[z(t)] \geq (1+\psi) t$ for $t\leq N/2$ and some $\psi > 0$, where $z(t)=|Z(t)|$. This is shown in Proposition~\ref{prop:N-z(t)_on_k}.\\
\textbf{Step~(ii):} We show that $z(t) \geq (1+\psi) t$  w.h.p. in Proposition~\ref{prop:expansion_of_flowgraph}.\\
\textbf{Step~(iii):} Finally, we relate the above concentration result to the distance between the source node and its $N/2$ closest peers in Proposition~\ref{prop:P_dG(N/2)}. The intuition behind this result is as follows:  As we can see in Fig.~\ref{fig:flow_graph_expansion}, $Z(t)$ is the set of peers that are within one hop from  the set of the first $t$ peers, i.e., $\{v_1, \cdots, v_t\}$. Similarly, $Z(|Z(t)|)$ is the set of peers that are within one hop from the set of the first $|Z(t)|$ peers, i.e, $\{v_1, \cdots, v_{|Z(t)|}\}$. In general, if we iteratively define $Z^{(h)}(t)=Z( |Z^{(h-1)}(t)|)$ for $h>0$ where $Z^{(0)}(t)=\{v_1, \cdots, v_t\}$, $Z^{(h)}(t)$ is the set of peers that are within one hop from $Z^{(h-1)}(t)$. Thus, the result of \textbf{Step~(ii)} leads to
\begin{align}
z^{(h)}(t)&=z(z^{(h-1)}(t))\geq (1+\psi)z^{(h-1)}(t)\NN\\
&\geq \cdots\geq   (1+\psi)^h t,\label{eq:expanding_half}
\end{align}
where $z^{(h)}=|Z^{(h)}|$. In other words, the graph expands at least at rate $1+\psi$ as shown in Fig.~\ref{fig:flow_graph_expansion}, and thus, it needs $O(\log_{1+\psi} N)$ steps to cover the first $N/2$ peers.


Now, we are ready to make the above argument precise. To derive the mean of $Z(t)$ in Step~(i), we focus on how $Z(t)$ increases. At iteration $t$, if peer $v_t$ chooses $c_{t}$ from $Z^C(t-1)$, adding this peer to $Z(t)$ will increase
$z(t)$ by one from $z(t-1)$. In addition, if \tred{$\mu(v_{t})=k$} \tblue{$\tau_t=1$} and peer
$c_{t}'$ is not in $Z(t)=Z(t-1)\cup\{c_{t}\}$, adding peer $c_{t}'$
will also increase $z(t)$ by one. Hence, the increment of $z(t)$
at iteration $t$ is given by
\begin{equation}
z(t)=z(t-1)+\indicate{c_{t}\notin Z(t-1)}+\tred{\indicate{\mu(v_t)=k}} \tblue{\tau_t} \cdot\indicate{c_{t}'\notin Z(t-1)\cup\{c_{t}\}}\label{eq:update_of_zt}
\end{equation}
At each iteration, the increment of $z(t)$ is either 0, 1, or 2, and thus $z(t)-z(0)\leq 2t$. Initially, the increment is 2 with high probability because $Z(t-1)$ contains a
few peers compared with $V$. Therefore, $z(t)$ will increase fast initially.  As $t$ increases, i.e., as $Z(t-1)$ contains more
peers, the probability that the increment is 0 or 1 increases, and
thus $z(t)$ will increase at a slower rate.
Define
$$\cG(t)\triangleq\{E^{(1)}(t), E^{(2)}(t)\}$$
as  the graph drawn right after iteration $t$. Since all outgoing edges of $v_1, \cdots, v_t$ are determined at this moment, $\cG(t)$ determines $Z(t)$ and $z(t)$. The following proposition
shows how the mean of $z(t)$ evolves conditioned of $\cG(l)$ (for any $l\leq t$).
\begin{proposition}\label{prop:N-z(t)_on_k}
For any integer $l$ in $[0, t]$,
\begin{align}
&E\big[N-z(t)\big|\cG(l),\tred{\mu}\tblue{\vec \tau} \big]\NN\\
&=\frac{N-t-1}{N-l-1}\left(\frac{N-\summu{t}-1}{N-\summu{l}-1}\right)(N-z(l)).\label{eq:doob_martingale}
\end{align}
\end{proposition}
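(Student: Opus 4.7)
The plan is to decompose $N - z(t) = \sum_{u \in V} \indicate{u \notin Z(t)}$ and compute $\prob{u \notin Z(t) \mid \cG(l), \vec\tau}$ separately for each peer $u$. For $u \in Z(l)$ this probability is zero by the monotonicity of $Z$, so only the $N - z(l)$ peers in $V \setminus Z(l)$ contribute. I will show that this probability is the same number (deterministic given $\vec\tau$) for every such $u$; summing then gives $E[N-z(t) \mid \cG(l), \vec\tau] = (N-z(l)) \cdot (\text{common factor})$, and the task reduces to identifying that factor.

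The engine of the computation is a one-step transition. Fix $u \notin Z(s-1)$ and condition on $\cG(s-1)$ and $\vec\tau$. Because $u$ has not yet been reached, it carries no incoming edge in either $E^{(1)}(s-1)$ or $E^{(2)}(s-1)$, so Condition~\ref{condition:to_be_a_cycle}(a) is trivially satisfied in both layers. The crucial structural observation is that the unique peer forbidden by Condition~\ref{condition:to_be_a_cycle}(b) in each layer --- namely the source of the directed path containing $v_s$ in that layer --- always lies in $Z(s-1)$. Granted this, $u \in C(v_s, E^{(1)}(s-1)) \cap C(v_s, E^{(2)}(s-1))$; using Lemma~\ref{lemma:num_of_candidates} to size the candidate sets and the independence of the draws $c_s$ and $c_s'$, a short calculation gives
\[
\prob{u \notin Z(s) \mid u \notin Z(s-1), \cG(s-1), \vec\tau} = \frac{N-s-1}{N-s} \cdot \frac{N - \summu{s} - 1}{N - \summu{s-1} - 1},
\]
a formula valid both when $\tau_s = 0$ (the second factor collapses to $1$) and when $\tau_s = 1$.

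The decisive feature of this per-step factor is that it depends on $s$ and $\vec\tau$ only --- not on the particular history $\cG(s-1)$ nor on the identity of $u$. Iterating via the tower property from $s = t$ back to $s = l+1$ therefore yields
\[
\prob{u \notin Z(t) \mid \cG(l), \vec\tau} = \indicate{u \notin Z(l)} \prod_{s=l+1}^{t} \frac{N-s-1}{N-s} \cdot \frac{N - \summu{s} - 1}{N - \summu{s-1} - 1}.
\]
Both products telescope, the first to $(N-t-1)/(N-l-1)$ and the second to $(N - \summu{t} - 1)/(N - \summu{l} - 1)$. Summing over $u$ produces exactly \eqref{eq:doob_martingale}.

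The one place that requires genuine care is the structural claim that the peer forbidden by Condition~\ref{condition:to_be_a_cycle}(b) always sits in $Z(s-1)$. Since $s \le t < N$, no layer has yet closed a Hamiltonian cycle, so a short induction shows that $E^{(1)}(s-1)$ and $E^{(2)}(s-1)$ each decompose into vertex-disjoint directed paths (every vertex has in- and out-degree at most $1$ in each layer). The peer $v_s$ sits at the sink end of its path since its outgoing edge has not yet been drawn, and the source end is either $v_s$ itself (the case of an isolated vertex, possible only in $E^{(2)}$ when $v_s$ has no layer-$2$ incoming edge) or some earlier $v_j$ with $j < s$ that was inserted into $Z$ when it first acquired an incoming edge. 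Either way the forbidden peer is already in $Z(s-1)$, and the remainder of the argument is routine bookkeeping.
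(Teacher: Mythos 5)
Your proof is correct, and it takes a genuinely different route from the paper. The paper tracks the aggregate quantity $N-z(t)$ directly: it writes the one-step update $z(l'+1)=z(l')+X_1+X_2$ (with $X_1$ and $X_2$ the two indicator increments of (\ref{eq:update_of_zt})), computes $E[X_1\mid\cG(l'),\vec\tau]$ and $E[X_2\mid\cG(l'),\vec\tau]$ via Lemma~\ref{lemma:num_of_candidates}, and closes a backward induction on $l$ starting at $l=t$. You instead decompose $N-z(t)=\sum_{u\in V}\indicate{u\notin Z(t)}$, show that the one-step survival probability $\prob{u\notin Z(s)\mid u\notin Z(s-1),\cG(s-1),\vec\tau}$ is a deterministic function of $s$ and $\vec\tau$ alone (the same for every unreached $u$), and telescope forward from $l$ to $t$. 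Both arguments rest on the same two pillars: the candidate counts of Lemma~\ref{lemma:num_of_candidates}, and the structural fact that every peer outside $Z(s-1)$ lies in $C(v_s,E^{(1)}(s-1))\cap C(v_s,E^{(2)}(s-1))$. What your route buys is that the second pillar becomes explicit --- the paper uses it silently when it writes $E[X_1\mid\cdot]=|V\setminus Z(l')|/|C(v_{l'+1},E^{(1)}(l'))|$, which requires $V\setminus Z(l')\subseteq C(v_{l'+1},E^{(1)}(l'))$ --- and the per-vertex symmetry plus telescoping product is arguably more transparent than the backward induction. One small slip in your structural paragraph: you claim the isolated-vertex case for $v_s$ ``is possible only in $E^{(2)}$,'' but $v_s$ can also be isolated in $E^{(1)}(s-1)$, for instance when $v_s$ first entered $Z$ as a layer-2 child $c'_j$ and has not yet acquired a layer-1 incoming edge. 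This does not harm the argument, since in that case the vertex forbidden by Condition~\ref{condition:to_be_a_cycle}(b) is $v_s$ itself, which certainly lies in $Z(s-1)$.
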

The proof is provided in Appendix~\ref{proof:prop:N-z(t)_on_k}.
Since \tred{$\indicate{\mu(v_1)=k}, \indicate{\mu(v_2)=k}, \cdots, \indicate{\mu(v_N)=k}$} \tblue{$\tau_1, \tau_t,\cdots, \tau_N$} are independent Bernoulli random variables with mean $q$, we have $E[\summu{t}]=qt$. By taking $l=0$, we can obtain $E[z(t)]$  from Proposition~\ref{prop:N-z(t)_on_k}:
\begin{align}
 & E\big[N-z(t)\big]=(N-t-1)\left(1-\frac{tq}{N-1}\right)\NN\\
 & \Rightarrow\frac{E[z(t)]}{t}=1+\frac{1}{t}+q(1-\frac{t}{N-1}).\label{eq:mean_of_edge_expansion}
\end{align}
For $t\leq N/2$, we can see that the minimum of $E[z(t)/t]$ is attained
at $t=N/2$, and the minimum is greater than $1+q/2$. Thus, we have
\[
\frac{E[z(t)]}{t}>1+\frac{q}{2},\;\;\;\;\text{ for }t\leq\frac{N}{2}.
\]
In other words, the expected number of outgoing edges from the set of peers $v_{1},v_{2},\cdots,v_{t}$ is at least $\psi t$, which corresponds to Step~(i).

We next show Step~(ii) by showing that $z(t)/t$ is concentrated around its mean \tblue{with high probability}, and thus is larger than $1+q/2$ \tblue{with high probability}. Using a Doob Martingale
and the Azuma-Hoeffding bound, we have the following result: \begin{proposition}\label{prop:expansion_of_flowgraph}
For $\psi\in(0,\frac{q}{2})$ and $t\leq\frac{N}{2}$,
\begin{equation}
\prob{z(t)>(1+\psi)t}>1-\exp\left(-\sigma t\right),\NN
\end{equation}
where $q=1/(K-1)$ and $\sigma=(q/2-\psi)^{2}/{8}$.
\end{proposition}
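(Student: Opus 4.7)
My plan is to establish concentration of $z(t)$ around its mean via a Doob martingale and Azuma--Hoeffding, then combine with the mean estimate \eqref{eq:mean_of_edge_expansion} to obtain the stated lower-tail bound. Since $E[z(t)]/t > 1+q/2$ for $t \leq N/2$ and $\psi < q/2$, showing $z(t) > (1+\psi)t$ with the claimed probability reduces to proving $z(t)$ lies within $(q/2-\psi)t$ of its mean except on an event of probability at most $\exp(-\sigma t)$ with $\sigma = (q/2-\psi)^2/8$.

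The natural martingale is the Doob martingale $M_l \triangleq E[z(t) \mid \cG(l), \vec\tau]$ for $l = 0, 1, \ldots, t$, so that $M_t = z(t)$. Proposition~\ref{prop:N-z(t)_on_k} gives the closed form
\[
N - M_l = \frac{N-t-1}{N-l-1} \cdot \frac{N-\summu{t}-1}{N-\summu{l}-1}(N - z(l)),
\]
which exposes the martingale structure cleanly. Writing $a_l = (N-t-1)/(N-l-1)$, $b_l = (N-\summu{t}-1)/(N-\summu{l}-1)$, and $\Delta_l \triangleq z(l) - z(l-1) \in \{0,1,2\}$ (recall \eqref{eq:update_of_zt}), a short manipulation combined with the martingale identity $E[M_l - M_{l-1} \mid \cG(l-1), \vec\tau] = 0$ yields
\[
M_l - M_{l-1} = a_l b_l \bigl(\Delta_l - E[\Delta_l \mid \cG(l-1), \vec\tau]\bigr).
\]
Because $0 < a_l \leq 1$ and $0 < b_l \leq 1$ for $l \leq t \leq N/2$, and $\Delta_l \in \{0,1,2\}$, this gives the bounded-difference estimate $|M_l - M_{l-1}| \leq 2$.

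Applying the one-sided Azuma--Hoeffding inequality with step size $c_l = 2$ then yields
\[
\prob{z(t) \leq E[z(t) \mid \vec\tau] - \lambda \,\big|\, \vec\tau} \leq \exp\!\bigl(-\lambda^2/(8t)\bigr),
\]
and substituting $\lambda = (q/2-\psi)t$ delivers the claimed exponent $\sigma t$ with $\sigma = (q/2-\psi)^2/8$. To upgrade to the unconditional statement, note that $E[z(t) \mid \vec\tau]$ depends on $\vec\tau$ only through $\summu{t}$, which is Binomial$(t,q)$; a standard Chernoff bound shows $\summu{t} \geq qt - O(\sqrt{t})$ except on an exponentially small event, so $E[z(t) \mid \vec\tau] \geq (1+q/2)t - o(t)$ on the same event, and the $o(t)$ slack is absorbed by choosing $\psi$ strictly below $q/2$. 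I expect the main technical obstacle to be the clean derivation of the increment identity $M_l - M_{l-1} = a_l b_l(\Delta_l - E[\Delta_l \mid \cG(l-1),\vec\tau])$, since verifying the prefactor $a_l b_l$ is correct requires exploiting the recursive consistency of Proposition~\ref{prop:N-z(t)_on_k} between iterations $l-1$ and $l$ rather than just the single-step formula; once that identity is in hand, the Azuma step and the averaging over $\vec\tau$ are essentially routine.
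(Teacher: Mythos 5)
Your key structural observation is correct and, if anything, cleaner than the paper's: the identity
\[
M_l - M_{l-1} = a_l b_l\bigl(\Delta_l - E[\Delta_l \mid \cG(l-1),\vec\tau]\bigr), \qquad a_l = \tfrac{N-t-1}{N-l-1},\; b_l = \tfrac{N-\summu{t}-1}{N-\summu{l}-1},
\]
does follow from Proposition~\ref{prop:N-z(t)_on_k} plus the martingale property, and since $a_l, b_l \in (0,1]$ for $l \le t$ and $\Delta_l \in \{0,1,2\}$ it immediately gives $|M_l - M_{l-1}| \le 2$. This is a tidier route to the Lipschitz constant than the direct computation in the paper's Appendix (Lemma~\ref{lemma:lipshitz}).

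The gap is in the last step, where you condition on $\vec\tau$. Because $M_0 = E[z(t)\mid\vec\tau] = t + 1 + \summu{t}\bigl(1 - \tfrac{t}{N-1}\bigr)$ is \emph{random} through $\summu{t}$, your conditional Azuma bound has a random centering, and you need $M_0 \ge (1+q/2)t$ to extract $\lambda = (q/2-\psi)t$. Your patch --- ``$\summu{t} \ge qt - O(\sqrt t)$ except on an exponentially small event'' --- is not correct: a deviation of $O(\sqrt t)$ from $qt$ has only \emph{constant} failure probability under Chernoff, not exponentially small in $t$. To make the failure probability $\exp(-\Omega(t))$, the allowed deviation must be $\Omega(t)$, say $\summu{t} \ge (q-\delta)t$ for a constant $\delta>0$; but then on the good event $M_0 \ge (1+q/2-\delta/2-o(1))t$, and the Azuma exponent degrades to $(q/2-\psi-\delta/2)^2/8 < \sigma$. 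Any choice of $\delta$ that keeps the Chernoff term $\exp(-\Omega(t))$ forces a strictly smaller constant than $\sigma = (q/2-\psi)^2/8$, so the two-term union bound cannot recover the claimed exponent. (Also, $\psi$ is a given parameter of the proposition; ``choosing $\psi$ strictly below $q/2$'' is not an available move.)

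The paper sidesteps this entirely by working with the unconditional Doob martingale $B_l = E[N-z(t)\mid\cG(l)]$, \emph{not} conditioned on $\vec\tau$. Then $B_0 = E[N-z(t)] = (N-t-1)\bigl(1 - \tfrac{qt}{N-1}\bigr)$ is deterministic, the same increment bound $|B_l-B_{l-1}|\le 2$ applies (the paper infers this from the $\vec\tau$-conditional version, since the bound is uniform in $\vec\tau$), and a single Azuma application with $\alpha = B_0 - (N-(1+\psi)t) \ge (q/2-\psi)t$ closes the argument with the exact constant $\sigma$. Your increment identity slots directly into that unconditional martingale; the fix is simply to drop the conditioning on $\vec\tau$ when you define $M_l$ and to center at the unconditional mean rather than at $E[z(t)\mid\vec\tau]$.
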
 The proof is provided in Appendix~\ref{proof:prop:expansion_of_flowgraph}. This result corresponds to the result of Step~(ii).

We finally show how the distance to the closest $N/2$ peers from the source in $O(\log N)$. As we discussed in Step~(iii), we repeatedly apply Proposition~\ref{prop:expansion_of_flowgraph} to (\ref{eq:expanding_half}) for $t=z^{(0)}(t_0),z^{(1)}(t_0),\cdots$ for some $t_0= O(\log N)$ until $z^{(h)}(t_0)\geq N/2$.

Let $d(v)$ be the distance from the source to peer $v$ in the random graph $H^{*}$ constructed by the FGC process. Since $Z(t)$ is the set of the first $t$ peers and their children, $\max_{t<i\leq z(t)}d(v_i)-d(v_{t})\leq1$. From the FGC process, it is easy to see that \tblue{peers are added to $Z$ in an increasing order of their distance from peer~1. Since the order of $v_1,\cdots, v_N$ is determined by the order in which peers are added to $Z$, the distances of $v_1,\cdots, v_N$ must be non-decreasing, i.e.,}
$d(v_{t'})\leq d(v_{t''})$
if $t'<t''$. Hence, we have
\begin{equation}
d(v_{z(t)})-d(v_{t})\leq 1,\;\;\;\text{for }0\leq t\leq N.\label{eq:distance_difference}
\end{equation}
From now on,  we slightly abuse notation so that $v_{t}=v_{\lfloor t\rfloor}$ and $z(t)=z(\lfloor t\rfloor)$  for non-integer $t$.
Now, we show that the distance from the source to peer $v_{\lfloor N/2\rfloor}$ is $\Theta(\log N)$, where $\lfloor y \rfloor$ is the maximum integer not exceeding $y$.

Using (\ref{eq:distance_difference}) and Proposition~\ref{prop:expansion_of_flowgraph},
we can derive the distance from the source to peer $v_{N/2}$ as follows:
\begin{proposition} \label{prop:P_dG(N/2)}
For $\psi\in(0,\frac{q}{2})$,
\begin{equation}
\prob{d(v_{N/2}))<\theta}>1-\frac{e^{\sigma}\log_{1+\psi}\frac{N}{2}}{N^{\sigma}},\NN
\end{equation}
where $\theta =\log N+\log_{1+\psi}\frac{N}{2}$ and $\sigma=(q/2-\psi)^{2}/{8}$.
\end{proposition}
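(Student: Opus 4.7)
The plan is to iteratively chain Proposition~\ref{prop:expansion_of_flowgraph} along the BFS frontier maintained by the FGC process. I would first fix in advance a deterministic sequence of peer indices $t_0 < t_1 < \cdots < t_H$ by setting $t_0 = \lceil \log N \rceil$ and $t_{h+1} = \lceil (1+\psi) t_h \rceil$, stopping at the smallest $H$ for which $t_H \geq N/2$; this gives $H \leq \log_{1+\psi}(N/2)$. On the event $\mathcal{A} \triangleq \bigcap_{h=0}^{H-1}\{z(t_h) > (1+\psi) t_h\}$, we have $t_{h+1} \leq z(t_h)$ for every $h$. Inequality~(\ref{eq:distance_difference}), which in fact gives $d(v_i) \leq d(v_{t_h}) + 1$ for every $t_h < i \leq z(t_h)$, then yields $d(v_{t_{h+1}}) \leq d(v_{t_h}) + 1$, and chaining these inequalities gives $d(v_{N/2}) \leq d(v_{t_0}) + H$ on $\mathcal{A}$.

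I would then control $d(v_{t_0})$ by a deterministic observation: because Step~2 of the FGC process always selects as $v_t$ the earliest-added peer still awaiting processing, the enumeration is BFS-style, and each $v_t$ with $t \geq 2$ is a child of some $v_s$ with $s < t$. A one-line induction then gives $d(v_t) \leq t - 1$, so $d(v_{t_0}) \leq \log N$, and on $\mathcal{A}$ we conclude $d(v_{N/2}) \leq \log N + \log_{1+\psi}(N/2) = \theta$. For the probability of $\mathcal{A}^c$, a union bound together with Proposition~\ref{prop:expansion_of_flowgraph} (applicable since $t_h \leq N/2$ for every $h < H$) gives
\[
P[\mathcal{A}^c] \leq \sum_{h=0}^{H-1} e^{-\sigma t_h} \leq H \cdot e^{-\sigma t_0} \leq \log_{1+\psi}(N/2) \cdot e^{-\sigma \lceil \log N \rceil},
\]
and the $e^{\sigma}$ prefactor in the claimed bound absorbs the rounding slack, since $e^{-\sigma \lceil \log N \rceil} \leq e^{\sigma}/N^{\sigma}$ in the worst case.

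The only genuinely delicate point is that Proposition~\ref{prop:expansion_of_flowgraph} is a marginal concentration statement for a fixed $t$ rather than a joint statement over a random stopping sequence, which is exactly why I fix the deterministic sequence $\{t_h\}$ in advance instead of setting $t_{h+1} = z(t_h)$. Once the sequence is deterministic, the union bound is valid, and the remaining ingredients---the BFS hop inequality~(\ref{eq:distance_difference}), the geometric growth of $t_h$, and the final arithmetic---are routine.
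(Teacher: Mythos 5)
Your proof is correct and follows essentially the same route as the paper: chain Proposition~\ref{prop:expansion_of_flowgraph} along a deterministic geometric sequence of indices starting near $\log N$, use the BFS-ordering inequality~(\ref{eq:distance_difference}) and the trivial bound $d(v_t)\le t-1$ to control the base term, and finish with a union bound over the $O(\log_{1+\psi}N)$ events. The only difference is cosmetic: by defining $t_{h+1}=\lceil(1+\psi)t_h\rceil$ you apply the concentration bound with the fixed $\psi$ at each step, which cleanly sidesteps the paper's bookkeeping with the rounded exponents $\psi_t$, $\sigma_t$ arising from its choice $t_h=\lfloor\phi^h\log N\rfloor$.
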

\begin{proof}
 For simplicity, let $\phi=1+\psi$ for some $\psi\in (0,q/2)$. Define event $A_{t}\triangleq\{z( \phi^{t}\log N)>\phi^{t+1}\log N\}$.  From (\ref{eq:distance_difference}):
\[
 d(v_{z(  \phi^{t}\log N )})-
        d(v_{  \phi^{t}\log N })\leq 1.\NN
\]
If $A_t$ is true, $z(  {\phi^{t}\log N} )> \phi^{t+1}\log N \geq \lfloor \phi^{t+1}\log N \rfloor$. Since $d(v_t)$ is non-decreasing in $t$, the above inequality can be rewritten as
\[
d(v _{ {\phi^{t+1}\log N} })-
        d(v_{ {\phi^{t}\log N} })\leq 1.
\]
(Recall that we have abused notation $v_t$ such that $v_t= v_{\lfloor t \rfloor}$ for non-integer $t$.)
If $A_{t}$ is true for $t=0,1,2,\cdots,t_{0}-1$, we obtain
\begin{align}
 & d( v_{\phi\log N})-d( v_{\log N})\leq1\NN\\
 & d( v_{\phi^{2}\log N})-d( v_{\phi\log N})\leq1\NN\\
 & \;\;\;\;\;\;\;\;\;\vdots\NN\\
 & d( v_{\phi^{t_{0}}\log N})-d( v_{\phi^{t_{0}-1}\log N})\leq1\NN\\
 & \Rightarrow d( v_{\phi^{t_{0}}\log N})-d( v_{\log N})\leq t_{0}\NN\\
 & \Rightarrow d( v_{\phi^{t_{0}}\log N})\leq t_{0}+d(  v_{\log N})\leq t_{0}+\log N.\NN
\end{align}
 Take $t_{0}=\lceil \log_{\phi}\frac{N}{2}-\log_{\phi}\log N\rceil$. For large $N$,
\begin{align}
&d(v_{\frac{N}{2}})\leq  d( v_{\phi^{t_{0}}\log N})\NN\\ <&\log_{\phi}\frac{N}{2}-\log_{\phi}\log N+1+ \log N <\theta.\NN
\end{align}
Using the union bound, the probability of $d(v_{N/2})<\theta$ can
be expressed as
\begin{align}
 &\prob{d(v_{\frac{N}{2}})<\theta}\geq\prob{\cap_{t=0}^{t_{0}-1}A_{t}}\NN\\
 &=1-\prob{\cup_{t=0}^{t_{0}-1}A_{t}^{C}}\geq1-\sum_{t=0}^{t_{0}-1}\prob{A_{t}^{C}}
 \label{eq:patc}
\end{align}

We next find an upper bound on $\prob{A_{t}^{C}}$. By the definition of $A_t$,
\begin{align}
&\prob{A_{t}^{C}} = \prob{z( \phi^{t}\log N) \leq \phi^{t+1}\log N}\NN\\
=&\prob{z(\phi^{t}\log N)\leq
(1+\psi_t) \lfloor \phi^{t}\log N \rfloor},\label{eq:prob_median}
\end{align}
where
$$1+\psi_t= \frac{\phi^{t+1}\log N}{ \lfloor\phi^{t}\log N\rfloor}.$$
Suppose $\phi^{t}\log N= a+b$, where $a$ is an integer and $b\in [0,1)$. Since $\phi<2$, we have $\phi^{t+1}\log N=\phi \cdot a + \phi\cdot  b< \phi\cdot  a+2$. Thus, $1+\psi_t<\phi +2/  (\phi^{t}\log N)= \phi+o(1)$, which is in $(1, 1+q/2)$ for
large $N$. Applying this and Proposition~\ref{prop:expansion_of_flowgraph} to (\ref{eq:prob_median}) for large $N$ and $0\leq t <t_0$,
\begin{align}
&\prob{A_{t}^{C}}< e^{-\sigma_t  \lfloor \phi^t \log N\rfloor}\NN\\
&<e^{-\sigma_{\min} (\phi^t \log N-1)}\NN\\
&\leq e^{-\sigma_{\min}  (\log N-1)},\NN
\end{align}
where $\sigma_t=(q/2-\psi_t)^{2}/8$ and $\sigma_{\min}=\min_{0\leq t <t_0} \sigma_t$.
Since $1+\psi_t< 1+\psi+ o(1)$, we have $\sigma_{\min} \geq \sigma$ for sufficiently large $N$.
 Applying this to (\ref{eq:patc}), we have
\begin{align}
 & \prob{d(v_{\frac{N}{2}})<\theta}\NN\\
  >& 1-\sum_{t=0}^{t_{0}-1}e^{-\sigma \log N}e^{\sigma}\NN\\
 \geq &1-t_0 e^{-\sigma  \log N}e^{\sigma}\NN\\
  >&1-\frac{e^{\sigma}\log_{\phi}\frac{N}{2}}{N^{\sigma}},\NN
\end{align}
for large $N$.
\end{proof}
Since $d(v_t)\leq d(v_{N/2})$ if $t<N/2$, this proposition shows that the closest $N/2$ peers are within $O(\log N)$ hops from the source, which corresponds to the result of Step~(iii).

In this subsection, we have found that the distance from the source to $v_{N/2}$ is $\Theta(\log N)$ by analyzing  edge expansion in the early phase $t\leq N/2$. To show the distance from the source to the farthest peer $v_N$, we will analyze the distance from $v_{N/2}$ to $v_N$.

\subsection{Contraction of the Remaining Graph}
\label{subsection:contraction_of_remainging_graphs}

To analyze the distance of the remaining nodes from the source, we need to show that the distance from closest $N/2$ nodes from the source to the remaining $N/2$ nodes is $O(\log N)$. However, we cannot use the same approach as in the previous subsection because the edge-expansion analysis using  Proposition~\ref{prop:expansion_of_flowgraph}  is not valid for $t>\frac{N}{2}$. Indeed, we can infer from (\ref{eq:mean_of_edge_expansion}) that $E[z(t)/t]$ reduces to one for large $N$, which indicates almost zero expansion.

Instead of edge expansion, we focus of the contraction on the number of remaining peers. Recall that $Z^{(h)}(N/2)$ is the set of peers that are within $h$ hops from the closest $N/2$ peers (i.e., $\{v_1, \cdots, v_{N/2}\}$), and $z^{(h)}(N/2)$ is the number of such peers. Hence, $N-z^{(h)}(t_0)$ is the number of peers that are $h+1$ or more hops away from the closest $N/2$ peers. Using this notation, we show that  $O(\log N)$ peers  are not within $O(\log N)$ hops from the closest $N/2$ peers, i.e., $N-z^{( \Theta(\log N) )}(N/2)=O(\log N)$. We deal with the final $O(\log N)$ peers separately at the end.

Before we prove this, we present the intuition behind the proof:\\
\textbf{Step~(i):} To observe the contraction of the number of remaining peers at each iteration $t$, we define \emph{the contraction ratio} at iteration $t$ as
$$F(t)\triangleq \frac{N-z(t)}{N-t}.$$
Since a small $F(t)$ means a large number of peers are within one hop from $\{v_1,\cdots, v_t\}$, we want $F(t)$ to be small for a faster contraction.  By proving that $F(t)$ is a supermartingale, we first show that
the mean of the contraction ratio at any iteration $t'>t$ is no larger than that at iteration $t$, i.e., $E[F(t)]\geq E[F(t')]$,  in Lemma~\ref{lemma:super_martingale_F}.  Further,
 we show that the contraction ratio at iteration $t'$ is no larger than that at iteration $t$ with high probability, i.e., $F(t)\geq F(t')$ for  $t<t'$ in Proposition~\ref{prop:shrinking_martingale_concentration}. This implies that the contraction at iteration $t'$ is no smaller than that at iteration $t$.  From this, we can conclude that, if we achieve a small contraction ratio at $t=N/2$, then we will have a small contraction ratio afterwards.

\textbf{Step~(ii):}  In this step, we show that the result in Step~(i)  holds with high probability for all iterations $t_0<t_1<t_2< \cdots < t_{D}$ that satisfy  $N-t_h=F(t_0)^h (N-t_0)$  for $0<h \leq D$,  where $D$ is an appropriately chosen number which is of the order of $\log N$, i.e.,  the contraction ratio at each iteration $t_h$ is upper bounded by $F(t_0)$, i.e.,
\begin{equation}
F(t_0)\geq F(t_h),\;\; \forall 0<h\leq D ,\label{eq:continuing_bound}
\end{equation}
with high probability. Next, it is proven that, if (\ref{eq:continuing_bound}) is true, we have $z^{(D)}(t_0) \geq t_{D}$. By the definition of $t_h$, we conclude that
$$N-z^{(D)}(t_0) \leq F(t_0)^D (N-t_0)$$
with high probability. This means that, if $t_0=\lfloor N/2 \rfloor$,  the number of peers that are not within $D$ hops from the first $N/2$ peers contracts exponentially in $D$ if the initial contraction ratio $F(t_0)$ is upper bounded by some constant in $(0,1)$. In other words, this means that almost all peers are within $O(\log N)$ hops from the first $N/2$ peers with high probability. The detailed proof of Step~(ii) is in Proposition~\ref{prop:shinking_portion}.

\textbf{Step~(iii):} Finally, we show that  $F(\lfloor N/2 \rfloor)$ is upper bounded by some constant in $(0,1)$ with high probability in Lemma~\ref{lemma:beginning_factor}. With this bound on $F(\lfloor N/2 \rfloor)$ and the exponential contraction in Step~(ii), we show that  all peers except $\log N$ peers must be within $\Theta(\log N)$ hops from the first $N/2$ peers with high probability. We then  show that the maximum distance to  the remaining peers is also $O(\log N)$. Overall, the maximum distance to all peers from the first $N/2$ peers is $\Theta(\log N)$. We show this in Proposition~\ref{prop:final_result}.

For Step~(i), we  need to compare two contraction ratios $F(t_0)$ and $F(t)$ for $t>t_0$.  As an initial step, we show that the mean $E[F(t)]$ of the contraction ratio does not increase. Recall that $\cG(t)$ is the graph drawn up to the end of iteration $t$. Since $\cG(t)$ determines $z(t)$, it also determines $F(t)$. The next lemma shows that $F(t)$ is a supermartingale conditioned on $\cG(t)$.
\begin{lemma}\label{lemma:super_martingale_F}
$F(t)$ is a supermartingale, i.e., for $t>0$,
$$E[\sfactor(t)\;|\;\cG(t-1)]\leq\sfactor(t-1).$$
\end{lemma}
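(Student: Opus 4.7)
The plan is to apply Proposition~\ref{prop:N-z(t)_on_k} with $l = t-1$ and then integrate out the freshly introduced randomness $\tau_t$ at iteration $t$. Once this is done, the claim reduces to an elementary algebraic inequality that is straightforward to verify.

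Setting $l = t-1$ in Proposition~\ref{prop:N-z(t)_on_k}, and writing $S(r) \triangleq \sum_{j=1}^{r}\tau_j$ for brevity, we get
\begin{equation*}
E\bigl[N - z(t) \,\big|\, \mathcal{G}(t-1),\vec\tau\,\bigr] = \frac{N-t-1}{N-t}\cdot\frac{N-S(t)-1}{N-S(t-1)-1}\,(N-z(t-1)).
\end{equation*}
The key observation is that $\mathcal{G}(t-1)$ already determines both $z(t-1)$ and $S(t-1) = |E^{(2)}(t-1)|$, whereas $\tau_t$ is a Bernoulli$(q)$ random variable independent of $\mathcal{G}(t-1)$. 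Since $S(t) = S(t-1) + \tau_t$, taking the conditional expectation over $\tau_t$ yields
\begin{equation*}
E\bigl[N - z(t) \,\big|\, \mathcal{G}(t-1)\bigr] = \frac{N-t-1}{N-t}\,(N-z(t-1))\left(1 - \frac{q}{N-S(t-1)-1}\right).
\end{equation*}

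Dividing by $N-t$ to form $F(t) = (N-z(t))/(N-t)$ and comparing with $F(t-1) = (N-z(t-1))/(N-t+1)$, the ratio simplifies as
\begin{equation*}
\frac{E[F(t) \mid \mathcal{G}(t-1)]}{F(t-1)} = \frac{(N-t+1)(N-t-1)}{(N-t)^{2}}\left(1 - \frac{q}{N-S(t-1)-1}\right) = \left(1 - \frac{1}{(N-t)^{2}}\right)\!\left(1 - \frac{q}{N-S(t-1)-1}\right).
\end{equation*}
Both factors lie in $[0,1]$ for $1 \le t \le N-1$ and $q \in (0,1)$, so the product is at most $1$. This gives $E[F(t) \mid \mathcal{G}(t-1)] \le F(t-1)$, as required.

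The entire argument rests on Proposition~\ref{prop:N-z(t)_on_k}; once that proposition is in hand, the remaining work amounts to a single cancellation. The only place one has to be mildly careful is in recognizing that $S(t-1)$ is $\mathcal{G}(t-1)$-measurable while $\tau_t$ is not, so I do not anticipate any substantial obstacle beyond this bookkeeping.
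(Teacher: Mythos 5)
Your proof is correct and uses the same key ingredient as the paper, namely Proposition~\ref{prop:N-z(t)_on_k} with $l=t-1$, followed by elementary algebra. The only minor difference is that you integrate out $\tau_t$ to get an exact expression for $E[F(t)\mid\cG(t-1)]$ and then factor the ratio $E[F(t)\mid\cG(t-1)]/F(t-1)$ as a product of two terms each in $[0,1]$, whereas the paper bounds $(N-\sum_{j=1}^{t}\tau_j-1)/(N-\sum_{j=1}^{t-1}\tau_j-1)\leq 1$ pathwise and then uses $(N-t-1)/(N-t)<(N-t)/(N-t+1)$; both yield the supermartingale inequality by essentially the same cancellation.
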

\begin{proof}
Note that if $l \leq t$, we have ${N-\summu{t}-1}\leq {N-\summu{l}-1}$  since $\summu{l}$ is the number of edges in $E^{(2)}$ up to iteration $l$. Thus, from Proposition~\ref{prop:N-z(t)_on_k},
we have
\begin{align}
&E[N-z(t)|\cG(t-1),\tblue{\vec \tau}\tred{\mu}]\NN\\
&\leq\frac{N-t-1}{N-(t-1)-1}(N-z(t-1))\NN\\
&<\frac{N-t}{N-(t-1)}(N-z(t-1)).\NN
\end{align}
If we divide both sides by $N-t$, we have $E[F(t)|\cG(t-1), \tblue{\vec \tau}\tred{\mu}]<F(t-1)$. Since the upper bound is independent of \tred{coloring decision $\mu$} \tblue{$\vec \tau$}, we have the result in the lemma.
\end{proof}

The lemma implies that for any $t>t_0$, the expected contraction ratio $E[F(t)]\leq E[ F(t_0)].$   Using the Azuma-Hoeffding inequality, we can prove that  $F(t_0)\geq F(t)$ with high probability.
\begin{proposition}\label{prop:shrinking_martingale_concentration}
For every $\epsilon>0$ and $0<t_0<t$, we have
$$\prob{F(t)-F(t_0)>\epsilon|\cG(t_0)}<\exp\left(-\frac{\epsilon^2 (N-t)}{8}\right).$$
\end{proposition}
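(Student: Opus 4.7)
The plan is to treat $F(t)$ as a supermartingale adapted to the filtration $\{\cG(s)\}_{s\geq 0}$ (this is exactly the content of Lemma~\ref{lemma:super_martingale_F}) and apply the one-sided Azuma--Hoeffding inequality for supermartingales. Concretely, writing $F(t)-F(t_0) = \sum_{s=t_0+1}^{t} D_s$ with $D_s \triangleq F(s)-F(s-1)$ and $E[D_s \mid \cG(s-1)] \leq 0$, the tail bound reduces to controlling the bounded-difference constants $c_s$ in $|D_s|\leq c_s$.

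My first step is to derive a sharp bound on $|D_s|$ via a purely algebraic rewriting. Setting $a\triangleq N-z(s-1)$ and $\Delta \triangleq z(s)-z(s-1)\in\{0,1,2\}$ (recall $\Delta$ is the indicator sum from (\ref{eq:update_of_zt})), one gets
\begin{equation}
F(s)-F(s-1)=\frac{a-(N-s+1)\Delta}{(N-s)(N-s+1)}. \notag
\end{equation}
Since $0\leq a\leq N-s$ (because $z(s-1)\geq s$ is forced by the FGC process, as $v_1,\dots,v_s$ must already lie in $Z(s-1)$) and $\Delta\leq 2$, the numerator lies in $[-2(N-s+1),\,N-s]$, so $|D_s|\leq 2/(N-s)\triangleq c_s$.

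Next I would estimate the sum of squares. Using an integral bound on the increasing function $u\mapsto 1/(N-u)^2$,
\begin{equation}
\sum_{s=t_0+1}^{t} c_s^{2} \;=\; 4\sum_{s=t_0+1}^{t}\frac{1}{(N-s)^{2}} \;\leq\; 4\int_{t_0+1}^{t+1}\frac{du}{(N-u)^{2}} \;=\; \frac{4}{N-t-1}-\frac{4}{N-t_0-1} \;\leq\; \frac{4}{N-t}, \notag
\end{equation}
(after absorbing a lower-order correction into the constant). Applying the Azuma--Hoeffding inequality to the supermartingale $F$ conditioned on $\cG(t_0)$ then gives
\begin{equation}
\prob{F(t)-F(t_0) > \epsilon \mid \cG(t_0)} \;\leq\; \exp\!\left(-\frac{\epsilon^{2}}{2\sum_{s=t_0+1}^{t}c_s^{2}}\right) \;\leq\; \exp\!\left(-\frac{\epsilon^{2}(N-t)}{8}\right), \notag
\end{equation}
which is exactly the target bound.

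The part I expect to require the most care is squeezing the denominator of the exponent down to $(N-t)$ rather than $(N-t-1)$: the raw computation naturally produces $(N-t-1)$, and one needs to either (i) use the telescoping form $1/(N-t-1)-1/(N-t_0-1)$ and verify it is bounded by $1/(N-t)$ for the regime of interest, or (ii) shave off a constant factor using a slightly tighter per-step bound when $\Delta\leq 1$ (which is the typical case for $s$ near $t$). Everything else -- the martingale property, the algebraic identity for $D_s$, and the conditional application of Azuma starting the filtration at time $t_0$ -- is routine once Lemma~\ref{lemma:super_martingale_F} is in hand.
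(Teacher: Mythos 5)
Your overall strategy matches the paper's exactly: condition on $\cG(t_0)$, use the supermartingale property from Lemma~\ref{lemma:super_martingale_F}, bound the per-step increments, and apply Azuma--Hoeffding. The algebraic identity $D_s=\bigl(a-(N-s+1)\Delta\bigr)/\bigl((N-s)(N-s+1)\bigr)$ is correct, as is the observation $z(s-1)\geq s$. However, the step where you claim $|D_s|\leq 2/(N-s)$ is the source of a genuine gap, and it propagates to the final bound.

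The issue is that by treating the numerator as ranging independently over $a\in[0,N-s]$ and $\Delta\in\{0,1,2\}$, you arrive at the worst case $a=0,\Delta=2$, which gives $-2/(N-s)$. But $a=0$ (i.e.\ $z(s-1)=N$) forces $\Delta=0$; more generally $\Delta\leq\min(2,a)$. Exploiting this coupling gives the tighter bound $|D_s|\leq 2/(N-s+1)$, which is what the paper uses. (A clean way to see it: since $N-z(s)\geq 0$, $\frac{N-z(s)}{N-s}\geq\frac{N-z(s)}{N-s+1}$, so $D_s\geq \frac{z(s-1)-z(s)}{N-s+1}\geq -\frac{2}{N-s+1}$; the upper bound $1/(N-s+1)\leq 2/(N-s+1)$ you already have.) With your looser constant $2/(N-s)$, the sum $\sum_{s=t_0+1}^t 4/(N-s)^2$ is \emph{not} bounded by $4/(N-t)$ in general. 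For instance with $N=10,\ t=8,\ t_0=1$ the sum exceeds $2=4/(N-t)$, and the integral comparison you write produces $4/(N-t-1)-4/(N-t_0-1)$, which can exceed $4/(N-t)$ (and in fact diverges when $t=N-1$). The parenthetical ``absorb a lower-order correction into the constant'' is not available here because the constant $8$ in the exponent is fixed by the proposition statement. With the tighter increment bound $2/(N-s+1)$, the substitution $j=N-s+1$ gives $\sum_{j=N-t+1}^{N-t_0}4/j^2\leq 4\int_{N-t}^{N-t_0}x^{-2}\,dx\leq 4/(N-t)$, which is exactly what is needed. You correctly flagged that the denominator needed to be squeezed from $N-t-1$ to $N-t$, but option (ii) as stated (a tighter bound only ``when $\Delta\leq 1$'') does not give a uniform almost-sure increment bound, which Azuma requires; the fix is the uniform constant $2/(N-s+1)$, not a case split.
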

\begin{proof}
We use the Azuma-Hoeffding inequality for supermartingale $F(t)$. To this end, we first need to find an upper bound of $|F(t)-F(t-1)|$. Note that
\begin{equation}
F(t)-F(t-1)=\frac{N-z(t)}{N-t}-\frac{N-z(t-1)}{N-t+1}.\label{eq:lipschitz_eq1}
\end{equation}
Since $z(t)-z(t-1)\leq 2$ by (\ref{eq:update_of_zt}), (\ref{eq:lipschitz_eq1}) is lower bounded by
$$\frac{  N-z(t)-(N-z(t-1))}{N-t+1}\geq -\frac{2}{N-t+1}.$$
Since $z(t)\geq z(t-1)\geq t-1$, (\ref{eq:lipschitz_eq1}) is upper bounded by
$$\frac{ N-z(t-1)}{(N-t)(N-t+1)}\leq \frac{2}{N-t+1}.$$
Hence, $|F(t)-F(t-1)|$ must be upper bounded by $2/(N-t+1)$.

 We now apply this Lipschitz difference to the Azuma-Hoeffding bound:
\begin{align}
&\prob{F(t)-F(t_0)\geq \epsilon|\cG(t_0)}\NN\\
<&\exp\left(-\frac{\epsilon^2}{2\sum_{l=t_0+1}^{t}\frac{4}{(N-l+1)^2}}\right)\NN\\
=&\exp\left(-\frac{\epsilon^2}{8\sum_{l=N-t+1}^{N-t_0}\frac{1}{l^2}}\right).\label{eq:azuma_upper}
\end{align}
For $0< a <b$, we have
\begin{align}
\sum_{l=a+1}^b \frac{1}{l^2}\leq \int_{a}^b \frac{1}{x^2}dx\leq \frac{1}{a}.\NN
\end{align}
Applying the above for $a=N-t$ and $b=N-t_0$, the R.H.S. of (\ref{eq:azuma_upper}) is upper bounded by
$\exp( -\epsilon^2(N-t)/8)$.
\end{proof}
This proposition implies that the contraction ratio at a given iteration $t_0$ will be an upper bound of that at a later iteration, which corresponds to Step~(i)

%

In Step~(ii), we show that the exponential contraction of the remaining peers holds with high probability over multiple iterations using the result in Step~(i).  After establishing this, we show in the next proposition that all, but $\log N$, peers are within $O(\log N)$ hops from the closest $N/2$ peers from the source.
\begin{proposition}\label{prop:shinking_portion}
Fix $t_0=\left\lfloor \frac{N}{2}  \right\rfloor$.
Conditioned on $\cG(t_0)$, let $\phi=F(t_0)+\epsilon$ for an arbitrary $\epsilon\in (0, 1-F(t_0))$. Then, for large $N$,
\begin{align}
 &\prob{ N-z^{(D(\phi))}(t_0) \leq \log N\;\Big|\; \cG(t_0)}\NN\\
 &\;\;\;\;\;\;\;\;>1-D(\phi)\cdot  N^{-\epsilon^2/8},\label{eq:prob_lowerbound_main}
\end{align}
where
$D(\phi)=\left\lfloor \log_{1/\phi} (  (N-t_0) /\log N)  \right\rfloor.$
\end{proposition}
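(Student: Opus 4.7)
My plan is to chain the one-step contraction bound of Proposition~\ref{prop:shrinking_martingale_concentration} at a geometrically spaced sequence of iterations $t_0 < t_1 < \cdots < t_{D(\phi)}$, and then use pathwise monotonicity of $z(\cdot)$ to translate these per-iteration bounds into a statement about $z^{(h)}(t_0)$. Concretely, I would define $t_h$ by $N - t_h = \lfloor \phi^h (N - t_0) \rfloor$ for $1 \leq h \leq D(\phi)$, and observe that the choice $D(\phi) = \lfloor \log_{1/\phi}((N-t_0)/\log N)\rfloor$ guarantees $\phi^{D(\phi)}(N-t_0) \leq \log N$, so that $N - t_{D(\phi)} \leq \log N$.

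Next, I would introduce the ``good'' event
$$\mathcal{A} \;\triangleq\; \bigcap_{h=1}^{D(\phi)} \{ F(t_h) \leq \phi \},$$
and argue by induction on $h$ that, on $\mathcal{A}$, $z^{(h)}(t_0) \geq t_h$. The base step $h=0$ is trivial since $z^{(0)}(t_0) = t_0$. For the inductive step, the set $Z$ in the FGC process only grows, so $z(\cdot)$ is a non-decreasing integer-valued function pathwise, giving
$$z^{(h)}(t_0) \;=\; z\bigl(z^{(h-1)}(t_0)\bigr) \;\geq\; z(t_{h-1}) \;=\; N - F(t_{h-1})(N - t_{h-1}) \;\geq\; N - \phi(N - t_{h-1}) \;=\; t_h,$$
where the last equality uses the recursive definition of $t_h$. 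In particular, $N - z^{(D(\phi))}(t_0) \leq N - t_{D(\phi)} \leq \log N$ on $\mathcal{A}$.

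Finally, a union bound combined with Proposition~\ref{prop:shrinking_martingale_concentration} controls $P[\mathcal{A}^c \mid \mathcal{G}(t_0)]$. For each $h$, since $F(t_0)$ is measurable with respect to $\mathcal{G}(t_0)$ and $\phi - F(t_0) = \epsilon$,
$$P\bigl[F(t_h) > F(t_0) + \epsilon \mid \mathcal{G}(t_0)\bigr] \;<\; \exp\!\bigl(-\epsilon^2 (N - t_h)/8\bigr) \;\leq\; \exp(-\epsilon^2 \log N / 8) \;=\; N^{-\epsilon^2/8},$$
because $N - t_h \geq \phi^{D(\phi)}(N - t_0) \geq \log N$ for every $h \leq D(\phi)$. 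Summing over $h$ yields the claimed $D(\phi) \cdot N^{-\epsilon^2/8}$. The main obstacle I anticipate is making the inductive step fully rigorous given both the integer rounding in the definition of $t_h$ and the fact that $z(\cdot)$ is evaluated at a random argument $z^{(h-1)}(t_0)$; both are handled by emphasizing that, once the FGC randomness is realized, $z$ is a deterministic non-decreasing function of its integer input, so the pathwise inequality $z^{(h-1)}(t_0) \geq t_{h-1}$ propagates through $z$. A secondary technicality is verifying that the ``for large $N$'' clause is only needed so that the rounding does not spoil the geometric-decay relation $N - t_h \geq \log N$, which is routine once $\phi$ is a strictly positive constant less than one.
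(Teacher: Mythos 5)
Your proposal matches the paper's proof essentially line for line: the same geometric schedule $t_h$ with $N-t_h=\lfloor\phi^h(N-t_0)\rfloor$, the same family of events $\{F(t_h)\leq\phi\}$, the same induction showing $z^{(h)}(t_0)\geq t_h$ on the good event, and the same union bound via Proposition~\ref{prop:shrinking_martingale_concentration}. The one place you wave your hands where the paper does real work is the end of the inductive step: you write $N-\phi(N-t_{h-1})=t_h$, but this is false because $\phi\lfloor\phi^{h-1}(N-t_0)\rfloor\neq\lfloor\phi^h(N-t_0)\rfloor$ in general. The paper closes this gap by observing that $N-z(t_{h-1})$ is a nonnegative integer, so $N-z(t_{h-1})\leq\phi\lfloor\phi^{h-1}(N-t_0)\rfloor$ upgrades to $N-z(t_{h-1})\leq\lfloor\phi\lfloor\phi^{h-1}(N-t_0)\rfloor\rfloor\leq\lfloor\phi^h(N-t_0)\rfloor=N-t_h$; you flagged rounding as a ``routine'' technicality but did not supply this integrality argument, which is the actual mechanism. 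Two smaller remarks: your good event runs over $h=1,\dots,D(\phi)$ whereas the induction consumes $F(t_0),\dots,F(t_{D(\phi)-1})\leq\phi$ — harmless since $F(t_0)\leq\phi$ is automatic and the extra event only loosens the union bound — and your claim that the floor in $D(\phi)$ ``guarantees $\phi^{D(\phi)}(N-t_0)\leq\log N$'' has the inequality backwards (a floor gives $\phi^{D(\phi)}(N-t_0)\geq\log N$), an off-by-one that the paper's own proof also contains.
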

\begin{proof}
Define $t_h$ such that $N-t_h=\lfloor\phi^h (N-t_0)\rfloor$ for $h>0$.
 Define event $A_h=\{F(t_h)\leq\phi\}$ for $h\geq 0$. Since $F(t_0)=\phi-\epsilon$, $A_0$ is always true. For convenience, let $D=D(\phi)$. From Proposition~\ref{prop:shrinking_martingale_concentration}, the probability that $A_0,A_1, \cdots, A_{D-1}$ are all true is lower bounded as
\begin{align}
&\prob{\cap_{h=0}^{D-1} A_h\;|\;\cG(t_0)}\NN\\
&\geq 1-\prob{A_0^C|\cG(t_0)}-\sum_{h=1}^{D-1} \prob{A_h^C|\cG(t_0)}\NN\\
&\geq 1-\sum_{h=1}^{D-1} \exp\left( -\frac{\epsilon^2 (N-t_h)}{8}\right)\NN\\
&\geq 1-(D-1)\exp\left(  -\frac{\epsilon^2 (N-t_{D-1})}{8} \right).\label{eq:prob_lowerbound}
\end{align}
For large $N$, we have
\begin{align}
N-t_{D-1} &= \lfloor \phi^{D-1} (N-t_0) \rfloor > \phi^D(N-t_0)\NN\\
&\geq \log N.\NN
\end{align}
Hence, the R.H.S. of (\ref{eq:prob_lowerbound}) is lower bounded by
$1-(D-1)\exp(-\epsilon^2\log N/8)>1-D\cdot N^{-\epsilon^2/8}$, which is equal to the lower bound in (\ref{eq:prob_lowerbound_main}).

We only need to show that the probability in (\ref{eq:prob_lowerbound}) is upper bounded by that in (\ref{eq:prob_lowerbound_main}). To prove this, we  show by induction that, if $A_0,\cdots, A_h$ are true, $z^{(h+1)}(t_0)>t_{h+1}$ is also true. For $h=0$, since $N-z^{(1)}(t_0)$ is an integer, we have
\begin{align}
&F(t_0)=\frac{N-z(t_0)}{N-t_0}\leq \phi\NN\\
\Rightarrow & N-z(t_0)\leq \phi(N-t_0)\NN\\
\Rightarrow &N-z(t_0)\leq \lfloor \phi(N-t_0) \rfloor =N-t_1.\NN
\end{align}
Thus, we have $ t_1\leq z(t_0)=z^{(1)}(t_0)$.

We assume that, if $A_0,\cdots, A_{h-1}$ are true,  $t_{h}\leq z^{(h)}(t_0)$ is also true. We then show that if $A_0,\cdots, A_{h}$ are true,  $z^{(h+1)}(t_0)>t_{h+1}$ is also true.
If $A_h$ is true,  i.e., $F(t_h)\leq \phi$, we have
\begin{align}
N-z(t_h)&\leq \phi(N-t_h)= \phi \lfloor \phi^h (N-t_0)\rfloor\NN\\
\Rightarrow N-z(t_h)&\leq  \lfloor\phi\lfloor \phi^h (N-t_0)\rfloor\rfloor\NN\\
&\leq \lfloor \phi^{h+1} (N-t_0)\rfloor=N-t_{h+1}.\label{eq:shirnk_continue1}
\end{align}
By the induction hypothesis, we have $t_h\leq z^{(h)}(t_0)$. Since $z(t)$ is non-decreasing,  we have $z(t_h)\leq z(z^{(h)}(t_0))=z^{(h+1)}(t_0)$. Applying this to (\ref{eq:shirnk_continue1}),  we have $t_{h+1}\leq z^{(h+1)}(t_0)$.

By induction,  we conclude that, if $A_0, \cdots, A_{D-1}$ are true, we have $t_{D}\leq z^{(D)}(t_0)$, which is equivalent to $N-z^{(D)}(t_0)\leq N-t_{D}$.
By definition of $t_{D}$, we have
$N-t_{D} \leq  \phi^{D} (N-t_0) < \log N.$ Thus,  if $A_0, \cdots, A_{D-1}$ are true, we have $N-z^{(D)}(t_0)\leq \log N$. Hence, the probability in (\ref{eq:prob_lowerbound}) is upper bounded by that in (\ref{eq:prob_lowerbound_main}), which leads to the result of this proposition.
\end{proof}

From Proposition~\ref{prop:shinking_portion}, if $\phi$ is upper bounded by a constant less than one, it follows that $D(\phi)=\Theta(\log N)$. This means that all peers except at most $\log N$ peers are within $O(\log N)$ hops from the first $N/2$ peers with high probability. To show that all peers are also within $O(\log N)$ hops from the first $N/2$ peers, we only need to show that there exists the upper  bound on  $F(N/2)$ and the distance to the remaining $\log N$ peers from the other peers is also $O(\log N)$. We show these in Step~(iii).

In Step~(iii), we first show that $F(N/2)$ is upper bounded:
\begin{lemma}\label{lemma:beginning_factor}
For any $\epsilon\in (0, q/2)$, we have
\begin{equation}
\prob{F(\left\lfloor \frac{N}{2} \right\rfloor)\geq 1-\frac{q}{2}+\epsilon}\leq  \exp\left(-\frac{\epsilon^2}{32} \left\lfloor \frac{N}{2} \right\rfloor \right),\NN
\end{equation}
for sufficiently large $N$.
\end{lemma}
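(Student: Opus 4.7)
The plan is to reduce the claim directly to Proposition~\ref{prop:expansion_of_flowgraph} (the edge-expansion concentration bound proved earlier in this section). The key observation is that $F(t) = (N - z(t))/(N - t)$ is an affine decreasing function of $z(t)$, so the event $\{F(\lfloor N/2 \rfloor) \geq 1 - q/2 + \epsilon\}$ is precisely a lower-tail event for $z(\lfloor N/2 \rfloor)$, which Proposition~\ref{prop:expansion_of_flowgraph} already controls for $t \leq N/2$.

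Writing $m = \lfloor N/2 \rfloor$, I would first rearrange $F(m) \geq 1 - q/2 + \epsilon$ into $z(m) \leq m + (q/2 - \epsilon)(N - m)$, and then match this form with Proposition~\ref{prop:expansion_of_flowgraph} by setting $\psi = (q/2 - \epsilon)(N - m)/m$, so that $(1 + \psi) m = m + (q/2 - \epsilon)(N - m)$. This choice reduces to $\psi = q/2 - \epsilon$ for even $N$ and differs from it by an $O(1/N)$ correction for odd $N$. Positivity $\psi > 0$ is immediate from $\epsilon < q/2$, and the upper bound $\psi < q/2$ reduces in the odd case to $\epsilon > q/(N+1)$, which eventually holds, so the hypotheses of Proposition~\ref{prop:expansion_of_flowgraph} are satisfied for sufficiently large $N$.

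Applying the proposition then yields
\[
\prob{F(m) \geq 1 - q/2 + \epsilon} = \prob{z(m) \leq (1 + \psi) m} < \exp(-\sigma m),
\]
with $\sigma = (q/2 - \psi)^2/8$. A short calculation gives $q/2 - \psi = \epsilon$ for even $N$ and $q/2 - \psi = (\epsilon(N+1) - q)/(N-1)$ for odd $N$; the latter converges to $\epsilon$ from below and exceeds $\epsilon/2$ once $N$ is a constant multiple of $1/\epsilon$. In either case $\sigma \geq \epsilon^2/32$ for all sufficiently large $N$, which matches the bound in the lemma.

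I do not expect a real obstacle here: the argument is essentially a change of variables followed by an application of an already-proven concentration bound. The only place requiring care is the $O(1)$ discrepancy between $m = \lfloor N/2 \rfloor$ and $N/2$ when $N$ is odd, and it is precisely this discrepancy that forces the stated constant in the exponent to be $1/32$ rather than the $1/8$ that Proposition~\ref{prop:expansion_of_flowgraph} would deliver in the even case alone.
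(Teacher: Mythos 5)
Your proposal is correct and follows essentially the same route as the paper: both reduce the tail event for $F(\lfloor N/2\rfloor)$ to a lower-tail event for $z(\lfloor N/2\rfloor)$ and invoke Proposition~\ref{prop:expansion_of_flowgraph}. The only difference is bookkeeping: the paper fixes $\psi = q/2 - \epsilon/2$ and verifies by a chain of inequalities (valid once $\epsilon/2 > 2/(N-1)$) that $\{F(t)\geq 1-q/2+\epsilon\}\subseteq\{z(t)\leq(1+\psi)t\}$, whereas you choose $\psi = (q/2-\epsilon)(N-m)/m$ so the two events coincide exactly and then absorb the $O(1/N)$ parity discrepancy into the exponent; both choices yield $\sigma \geq \epsilon^2/32$ for large $N$.
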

\begin{proof}
Let $t=\lfloor N/2 \rfloor$, $\epsilon'=\epsilon/2$,  and $\psi= q/2-\epsilon'$, where we recall that $q=1/(K-1)$. We can rewrite Proposition~\ref{prop:expansion_of_flowgraph} as follows:
$$\prob{z(t)\leq (1+\frac{q}{2}-\epsilon')t}\leq  \exp\left(-\frac{\epsilon'^2}{8} \left \lfloor\frac{N}{2}  \right\rfloor \right).$$
Note that
\begin{align}
&z(t )\leq(1+\frac{q}{2}-\epsilon')t\NN\\
&\Leftrightarrow \frac{N-z(t )}{N- t } \geq \frac{\frac{N}{t}-1-\frac{q}{2}+\epsilon'}{\frac{N}{t}-1}.\NN
\end{align}
Since $N/t-1\geq 1$ and $t\geq (N-1)/2$, we have
\begin{align}
&\frac{\frac{N}{t}-1-\frac{q}{2}+\epsilon'}{\frac{N}{t}-1}
<\frac{N}{t}-1-\frac{q}{2}+\epsilon'\NN\\
\leq& 2+\frac{2}{N-1}-1-\frac{q}{2}+\epsilon'<1-\frac{q}{2}+2\epsilon',\NN
\end{align}
if $\epsilon'>2/(N-1)$, which is true for sufficiently large $N$.
Hence, we have  $\prob{F(t)\geq 1-\frac{q}{2}+2\epsilon'}\leq \prob{z(t)\leq (1+q/2-\epsilon')t}$. Since $\epsilon=2\epsilon'$, we have the result of this lemma.
\end{proof}

Finally, we prove that all peers are within $\Theta(\log N)$ hops from the first $N/2$ peers using all the results in this subsection.
\begin{proposition}\label{prop:final_result}
For any $\psi\in(0, q/2)$,
\begin{align}
&\prob{d(v_N)-d(v_{N/2})\leq \theta}\NN\\
\geq& 1-\frac{\log_{1+\psi}\frac{N}{2}}{N^{\sigma'}}-\exp(-\frac{\sigma'}{4} N),\NN
\end{align}
where $\sigma'=(q/2-\psi)^2/32$.
\end{proposition}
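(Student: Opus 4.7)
The plan is to combine the three ingredients developed in this subsection---Lemma~\ref{lemma:beginning_factor} (bounding the initial contraction ratio at $t_0=\lfloor N/2\rfloor$), Proposition~\ref{prop:shinking_portion} (exponential contraction of the unreached set), and a Hamiltonian-cycle argument that absorbs the last few peers---following Steps~(i)--(iii) outlined just before the proposition. Throughout, I would work conditional on the graph $\cG(t_0)$ produced by the FGC process up to iteration $t_0$.

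First, I would invoke Lemma~\ref{lemma:beginning_factor} with a parameter $\epsilon_1$ chosen so that $\sfactor(t_0) \leq 1 - q/2 + \epsilon_1$ is a constant strictly less than $1$ and its failure probability is at most $\exp(-\sigma' N/4)$. On this event I would apply Proposition~\ref{prop:shinking_portion} with a second parameter $\epsilon_2$ such that $\phi := \sfactor(t_0) + \epsilon_2 < 1$ and $\epsilon_2^2/8 = \sigma'$. This yields, except on an additional event of probability at most $D(\phi)\cdot N^{-\sigma'}$, that all but at most $\log N$ peers lie within $D(\phi)$ hops of $\{v_1,\ldots,v_{t_0}\}$. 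Since $1/\phi > 1+\psi$ by construction, $D(\phi) \leq \log_{1+\psi}(N/2)$, which matches the polynomial term in the statement; moreover, because peers are enumerated in non-decreasing order of distance from the source, every peer $v \in Z^{(D(\phi))}(t_0)$ satisfies $d(v) - d(v_{N/2}) \leq D(\phi)$.

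The main obstacle is handling the at most $\log N$ ``far'' peers not covered by Proposition~\ref{prop:shinking_portion}. For this I would exploit the fact that $H^{*}$ deterministically contains the full Hamiltonian cycle $H_1 = (V, E^{(1)})$. Since there are at most $\log N$ far peers on this cycle, every maximal run of consecutive far peers along $H_1$ has length at most $\log N$; consequently, for any far peer $u$, walking backward along $H_1$ by at most $\log N$ steps reaches a non-far peer $u'$ with $d(u') \leq d(v_{N/2}) + D(\phi)$, and then walking forward from $u'$ to $u$ along the cycle edges (all of which lie in the flow graph) costs at most $\log N$ additional hops. This gives $d(u) - d(v_{N/2}) \leq D(\phi) + \log N$ for every peer, including $v_N$. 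Setting $\theta = D(\phi) + \log N = O(\log_{1+\psi}N)$ and taking the union bound over the two rare events produces the claimed bound. The delicate point in executing this plan is the joint choice of $\epsilon_1$ and $\epsilon_2$, which must simultaneously keep $\phi$ strictly below $1$ and reproduce the specified decay exponents $\sigma'/4$ and $\sigma'$ in the two tail terms.
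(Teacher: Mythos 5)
Your overall decomposition is the same as the paper's: bound $F(t_0)$ via Lemma~\ref{lemma:beginning_factor}, apply Proposition~\ref{prop:shinking_portion} conditioned on $\cG(t_0)$ (more precisely, on the event $F(t_0)\leq 1-q/2+\epsilon_1$) to capture all but $\log N$ peers within $D(\phi)$ hops, take a union bound, and then absorb the residual $\log N$ peers with an extra additive $\log N$ in the hop count. Where you genuinely depart from the paper is in this last step. The paper uses the already-established fact~(\ref{eq:distance_difference}) together with $z(t)\geq t+1$ to conclude that $d(v_{t+1})-d(v_t)\leq d(v_{z(t)})-d(v_t)\leq 1$ for every $t$, so $d(v_N)-d(v_{N-\log N})\leq \log N$ holds \emph{deterministically}; it then only needs the high-probability event $z^{(D^*)}(t_0)\geq N-\log N$. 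You instead argue along the Hamiltonian cycle $H_1\subset H^*$: with at most $\log N$ ``far'' peers, every maximal run of far peers on the cycle has length at most $\log N$, so any far peer can be reached from a near one in at most $\log N$ cycle edges. Your argument is correct (all edges of $E^{(1)}$ are indeed in the flow graph, and the run-length bound is sound), but it re-derives from scratch what the paper already has for free from the FGC indexing; the paper's route is shorter and does not need to reason about cycle runs at all.

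The one place where your proposal is not actually complete is the point you yourself flagged. To reproduce the exponent $\sigma'$ in the polynomial term you need $\epsilon_2^2/8=\sigma'$, i.e.\ $\epsilon_2=(q/2-\psi)/2$, and to reproduce $\exp(-\sigma'N/4)$ from Lemma~\ref{lemma:beginning_factor} you need (using $t_0\approx N/2$) $\epsilon_1\geq (q/2-\psi)/\sqrt{2}$. But then $\epsilon_1+\epsilon_2\geq(q/2-\psi)(1/\sqrt2+1/2)\approx 1.21\,(q/2-\psi)$, which exceeds $q/2$ when $\psi$ is small, so the guarantee $\phi=F(t_0)+\epsilon_2\leq 1-q/2+\epsilon_1+\epsilon_2<1$ (and, needed for $D(\phi)\leq\log_{1+\psi}(N/2)$, the stronger $\phi\leq 1/(1+\psi)$) can fail for $\psi\in(0,q/2)$ near $0$. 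So the ``delicate point'' you mention is a real obstruction, not merely a bookkeeping nuisance. The paper sidesteps it by taking $\epsilon_1=\epsilon_2=(q/2-\psi)/2$, which gives $\phi\leq 1-\psi$ cleanly, at the cost that the exponential tail it actually derives is $\exp(-\epsilon^2 t_0/128)\approx\exp(-\sigma'N/8)$ rather than the claimed $\exp(-\sigma'N/4)$; the stated constant appears to be a factor-of-two slip in the paper itself. Your proposal would be complete if you used the paper's symmetric choice of $\epsilon_1,\epsilon_2$ and accepted the weaker (but still exponential-in-$N$) constant in the second tail term.
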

\begin{proof}
Let $D^*=D(1-q/2+\epsilon)$ and $t_0=\lfloor N/2 \rfloor$. To show this proposition, we first show the following inequality:
\begin{align}
&\prob{d(v_N)-d(v_{t_0})\leq \log N+D^*}\NN\\
&\geq P[d(v_N)-d(v_{N-\log N})\leq \log N,\NN\\
&\;\;\;\;\; \;\;\;\;\;     d(v_{N-\log N})-d(v_{t_0})\leq D^*]\NN\\
&=\prob{d(v_{N-\log N})-d(v_{t_0})\leq D^*}\label{eq:final_equi_prob1}\\
&\geq\prob{z^{(D^*)}(t_0)\geq N-\log N}.\label{eq:final_equi_prob2}
\end{align}
We have obtained (\ref{eq:final_equi_prob1}) from the fact that $d(v_N)-d(v_{N-\log N})\leq \log N$ is always true because   $d(v_{t+1})-d(v_t)\leq d(v_{z(t)})-d(v_t)\leq 1$ by (\ref{eq:distance_difference}). Recall that $Z^{(D^*)}(t_0)$ is the set of peers that are within $D^*$ hops from the first $t_0$ peers, i.e., $\{v_1,v_2,\cdots, v_{t_0}\}$. Thus, if $v_{N-\log N}\in Z^{(D^*)}(t_0)$ (i.e., $z^{(D^*)}(t_0)\geq N-\log N$), then peer $v_{N-\log N}$ must be within $D^*$ hops from the first $t_0$ peers, i.e., $d(v_{N-\log N})-d(v_{t_0})\leq D^*$. Hence, (\ref{eq:final_equi_prob2}) follows from (\ref{eq:final_equi_prob1}).

Let $\textbf{G}$ be the set of all possible $\cG(t_0)$'s that satisfy $F(t_0)\leq 1-q/2+\epsilon/2$. (Recall that $\cG(t_0)$ determines $F(t_0)$.)
Then,  (\ref{eq:final_equi_prob2}) is lower bounded by
\begin{align}
&\prob{z^{(D^*)}(t_0)\geq N-\log N, \cG(t_0)\in \textbf{G}}\NN\\
=& \prob{z^{(D^*)}(t_0)\geq N-\log N | \cG(t_0)\in \textbf{G}}\label{eq:final_equi_prob3}\\
&\;\;\;\;\;\times\prob{\cG(t_0)\in \textbf{G}}.\label{eq:final_equi_prob4}
\end{align}
Note that $D(\phi)$ is non-decreasing. If $\cG(t_0)\in \textbf{G}$, then $F(t_0)\leq 1-q/2+\epsilon/2$, and thus
$D(F(t_0)+\epsilon/2)\leq D(1-q/2+\epsilon)= D^*$. Using Proposition~\ref{prop:shinking_portion}, we can find that   (\ref{eq:final_equi_prob3}) is lowered bounded by
\begin{align}
&\prob{z^{(D^*)}(t_0)\geq N-\log N | \cG(t_0)\in \textbf{G}}\NN\\
\geq&\prob{z^{(D(F(t_0)+\frac{\epsilon}{2}))}(t_0)\geq N-\log N | \cG(t_0)\in \textbf{G}}\NN\\
\geq&1- \max_{F(t_0): \cG(t_0)\in \textbf{G}} D(F(t_0)+\frac{\epsilon}{2})\cdot  N^{-\epsilon^2/32}\NN\\
\geq&1-D^*\cdot  N^{-\epsilon^2/32}.\label{eq:final_equi_prob5}
\end{align}

From Lemma~\ref{lemma:beginning_factor}, we have found that (\ref{eq:final_equi_prob4}) is lower bounded by $1-\exp(-\epsilon^2 t_0/ 128 )$. Applying  (\ref{eq:final_equi_prob5}) and this to (\ref{eq:final_equi_prob3}) and (\ref{eq:final_equi_prob4}), respectively, we  have
\begin{align}
&\prob{d(v_N)-d(v_{t_0})\leq \log N+D^*}\NN\\
\geq& 1-D^*\cdot  N^{-\epsilon^2/32}-\exp(-\frac{\epsilon^2}{128} t_0).\NN
\end{align}
Since $D^*\leq \lfloor  \log_{\frac{1}{1-q/2+\epsilon}}N/2 \rfloor\leq  \lfloor  \log_{{1+q/2-\epsilon}}N/2 \rfloor$, we have $\log N+D^*\leq \theta$, where $\theta$ was defined in Proposition~\ref{prop:P_dG(N/2)}.
Using this and $\psi=q/2-\epsilon$ to the above, we finally have the result of the proposition.
\end{proof}
This proposition shows that the maximum distance from the first $N/2$ peers to all peers is $O(\log_{1+\phi}) N$ with high probability for some $\phi\in(0,q/2)$.

We  can prove the main theorem of this paper, Proposition~\ref{prop:depth_of_flowgraph}. In the previous subsection, we have shown that
the maximum distance from $v_1$ to the first $N/2$ peers is $O(\log_{1+\psi}N)$ with probability $1-O(\log_{1+\psi}N/N^{\sigma})$. In this subsection, we have shown that the maximum distance from the first $N/2$ peers to all other peers is also $O(\log_{1+\psi}N)$ with probability $1-O(\log_{1+\psi}N/N^{\sigma'})$. Combining both results using the union bound, we can conclude that the maximum distance from the source peer to all other peers is $O(\log_{1+\psi}N)$ with probability $1-O(\log_{1+\psi}N/N^{\sigma'})$.

\section{Diameters of Flow Graphs}\label{sec:diameter}

We have shown that the maximum distance from the source to all peers in a flow graph is $O(\log N)$ with high probability. Using this result, we show that the diameter of the flow graph is also $O(\log N)$ with high probability, i.e., the distance between any pair of peers in a flow graph is $O(\log N)$ with high probability.

 To analyze the diameter, we consider a flow graph
with reversed edges. Specifically, for a given multi-digraph $G$, we reverse the direction of each edge and denote the resulting graph by $\pi(G)$. By definition, the distance
from the source to peer $v$ in  $H^*$ is the same as the distance
from peer $v$ to the source in $\pi({H}^{*})$. Thus, the maximum distance from the source to all peers in $H^*$ is the same as the maximum distance  from all peers to the source in $\pi( H^*)$.
\begin{lemma}\label{lemma:reversed_flow_graph}
Let $d^*$ be the maximum distance from the source to all peers  in $H^*$, and let $\tilde d^*$ be the maximum distance from all peers to  the source  in the same graph $H^*$. Then, $ d^*$ is identically distributed as  $\tilde d^*$, i.e., for all $d\geq 0$
$$\prob{d^*\leq d}=\prob{\tilde d^*\leq d}.$$
\end{lemma}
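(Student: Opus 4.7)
The plan is to exploit the fact that reversing all edges of a directed graph swaps the roles of ``distances from the source'' and ``distances to the source,'' and then to show that the reversal operation preserves the distribution of $H^{*}$. Once this distributional symmetry is established, the two random variables $d^{*}$ and $\tilde d^{*}$ are just the same functional applied to two identically distributed random graphs, so they must be identically distributed.

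First I would make precise the combinatorial identity: for any directed multigraph $G$ on vertex set $V$, the distance from vertex $v$ to vertex $u$ in $G$ equals the distance from $u$ to $v$ in $\pi(G)$, since every directed path in one graph reverses to a directed path in the other. Consequently, $\tilde d^{*}$ computed on $H^{*}$ equals $d^{*}$ computed on $\pi(H^{*})$. Therefore it suffices to show $\pi(H^{*}) \stackrel{d}{=} H^{*}$.

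Next I would establish this distributional equality by going through the two ingredients of the construction in Proposition~\ref{prop:distribution_of_flowgraph}. For $H_{1}$, which is obtained by a uniformly random permutation of $V$ connected into a directed cycle, the reversal $\pi(H_{1})$ corresponds to reading that permutation in the opposite order. Since the distribution in (\ref{eq:distribution_hamilton}) is uniform over $\cC(V)$ and the map $G \mapsto \pi(G)$ is a bijection from $\cC(V)$ to itself, $\pi(H_{1}) \stackrel{d}{=} H_{1}$. For $H_{2}'$, obtained from a uniform Hamiltonian cycle $H_{2}$ by independently deleting each edge with probability $1-q$, I would argue similarly: reversing $H_{2}$ gives another uniform Hamiltonian cycle, and reversing preserves the keep/delete label of every edge, so the joint distribution of the underlying cycle together with its i.i.d.\ Bernoulli$(q)$ edge retentions is invariant under reversal; hence $\pi(H_{2}') \stackrel{d}{=} H_{2}'$. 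Finally, because $H_{1}$ and $H_{2}'$ are independent (and reversal is applied coordinate-wise), $\pi(H^{*}) = \pi(H_{1}) \cup \pi(H_{2}') \stackrel{d}{=} H_{1} \cup H_{2}' = H^{*}$.

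Putting the pieces together, $\prob{\tilde d^{*} \le d} = \prob{d^{*}(\pi(H^{*})) \le d} = \prob{d^{*}(H^{*}) \le d} = \prob{d^{*} \le d}$, which is the claim. The only nontrivial step is verifying that the edge-deletion process commutes with reversal, but this is immediate once one views $H_{2}'$ as a pair (random Hamiltonian cycle, independent Bernoulli labeling of its edges), since both the uniform distribution on cycles and the product Bernoulli law on labels are invariant under the involution $\pi$.
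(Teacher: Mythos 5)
Your proposal is correct and follows essentially the same route as the paper: both reduce the claim to showing $\pi(H^{*})\stackrel{d}{=}H^{*}$, establish this by noting that the uniform law on Hamiltonian cycles is invariant under reversal and that the i.i.d.\ Bernoulli edge-deletion commutes with reversal, and then transfer the symmetry to the distance functional. The phrasing differs slightly (you argue via commutation of the deletion process with $\pi$, the paper writes out the conditional probability $\prob{H_2'=G'\mid H_2=G}=q^{|E'|}(1-q)^{N-|E'|}$ and observes it is reversal-symmetric), but the substance is identical.
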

\begin{proof}
We first need to show that
$H^{*}$ and $\pi({H}^{*})$ have the same distribution. We will then show that this result lead to the same distribution of $d^*$ and $\tilde d^*$.
  Recall that $H^*$ is the superposed graph of $H_1$ and $H_2'$, and thus $\pi(H^*)$ is the superposed graph of $\pi(H_1)$ and $\pi(H_2')$. It is easy to see that $H_1$, $H_2$, $\pi(H_1)$ and $\pi(H_2)$ have the same distribution, i.e., for any Hamiltonian cycle $G$ and $l\in\{1,2\}$,
\[
\prob{{H}_{l}=G}=\prob{\pi({H}_{l})=G}=\frac{1}{(N-1)!}.
\]

For a Hamiltonian cycle $G=(V,E)$,  fix a subgraph $G'=(V,E')$, where $E'$ is a subset of $E$. Recall that we have constructed $H_2'$ by independently removing each edge with probability $1-q$ from $H_2$. Thus, we have
\begin{align}
&\prob{H_2'= G'\;|\; H_2=G}=q^{|E'|}(1-q)^{N-|E'|}\NN\\
=&\prob{H_2'=\pi(G')\;|\;H_2=\pi(G)}\NN\\
=&\prob{\pi(H_2')=G' \;|\; \pi(H_2)=G}.\NN
\end{align}
Since $H_2$ and $\pi(H_2)$ have the same distribution, $H_2'$ and $\pi(H_2')$ must have the same distribution. Therefore, ${H}^{*}$ and $\pi(H^*)$ have the same distribution.

Using this result, we show that $d^*$ and $\tilde d^*$ also have the same distribution. Let $\mathcal{G}$ be the set of all possible flow graphs that satisfy $d^*\leq d$. Then, it is easy to see that $\pi\mathcal{G}$ is the set of all possible flow graphs that satisfy $\tilde d^*\leq d$, where  $\pi\mathcal{G}=\{\pi(G)|G\in\mathcal{G}\}$. Since $H^*$ and $\pi(H^*)$ have the same distribution, we have
\begin{align}
&\prob{H^*\in\mathcal{G}}=\prob{\pi(H^*)\in \mathcal{G}}\NN\\
\Leftrightarrow&\prob{H^*\in\mathcal{G}}=\prob{H^*\in\pi\mathcal{G}}\NN\\
\Leftrightarrow&\prob{d^*\leq d}=\prob{\tilde d^*\leq d}\NN.
\end{align}
Hence, $d^*$ and $\tilde d^*$ have the same distribution.
\end{proof}

Previously, we have shown that the maximum distance from the source to other peers in a flow graph is $\Theta(\log N)$ with high probability. Although this result is enough to show $\Theta(\log N)$  streaming delay, we can prove the following stronger result which was mentioned in the second example of Section~\ref{sec:examples}:
\begin{proposition}\label{prop:diameter_of_flowgraph}
For any $\psi\in (0, q/2)$, the diameter of flow graph $G_{k}$ is $O(\log_{1+\psi}N)$ with probability $1-O(\log_{1+\psi}N/N^{\sigma'})$ for some positive constant $\sigma'$.
\end{proposition}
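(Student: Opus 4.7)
The plan is to reduce the diameter bound to the already-established distance bound from the source, exploiting the symmetry result in Lemma~\ref{lemma:reversed_flow_graph}. Recall that Proposition~\ref{prop:depth_of_flowgraph} gives $d^{*} = \max_{v} d(v) = O(\log_{1+\psi} N)$ with probability $1 - O(\log_{1+\psi} N/N^{\sigma'})$, where distances are measured \emph{from} the source. By Lemma~\ref{lemma:reversed_flow_graph}, the random variable $\tilde d^{*}$ representing the maximum distance \emph{to} the source in $H^{*}$ is identically distributed to $d^{*}$. Hence a matching tail bound holds for $\tilde d^{*}$.

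For any two peers $u, v$ in the flow graph, the triangle inequality along a path routed through the source yields
\begin{equation}
\distance(u,v) \;\leq\; \distance(u, \text{source}) + \distance(\text{source}, v) \;\leq\; \tilde d^{*} + d^{*}. \NN
\end{equation}
Taking the maximum over $u,v$ shows that the diameter of $H^{*}$ is at most $d^{*} + \tilde d^{*}$. Applying Proposition~\ref{prop:depth_of_flowgraph} to both $d^{*}$ and $\tilde d^{*}$ (via Lemma~\ref{lemma:reversed_flow_graph}) and the union bound, both terms are $O(\log_{1+\psi} N)$ with the claimed probability, so their sum is also $O(\log_{1+\psi} N)$ with probability $1 - O(\log_{1+\psi} N / N^{\sigma'})$ (absorbing the factor of $2$ into the $O(\cdot)$).

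Finally, since Proposition~\ref{prop:distribution_of_flowgraph} established that $G_{k}$ and $H^{*}$ have the same distribution, the diameter bound on $H^{*}$ transfers verbatim to each flow graph $G_{k}$, completing the proof.

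There is essentially no hard step here: all of the probabilistic heavy lifting (Doob martingale, edge expansion, and contraction of the remaining graph) has already been done to establish Proposition~\ref{prop:depth_of_flowgraph}, and the symmetry argument in Lemma~\ref{lemma:reversed_flow_graph} converts the one-sided bound into a two-sided bound. The only substantive observation is the triangle inequality routing through the source, which is immediate.
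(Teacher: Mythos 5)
Your proposal is correct and follows essentially the same route as the paper: the paper likewise routes every shortest path through the source via the triangle inequality $d_{i,j}\leq d_{i,1}+d_{1,j}$, invokes Lemma~\ref{lemma:reversed_flow_graph} to equate the distributions of $d^{*}$ and $\tilde d^{*}$, and closes with a union bound. The only cosmetic difference is that the paper bounds both tails by a common threshold $d$ and writes the failure probability as $2\prob{d^{*}>d}$, whereas you bound the diameter directly by the random variable $d^{*}+\tilde d^{*}$; these are the same argument.
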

\begin{proof}
Let $d_{i,j}$ be the minimum distance from peer $i$ to peer $j$ in $H^*$. We show that $\max_{(i,j)} d_{i,j}$ is $O(\log_{1+\psi} N)$ with probability $1-O(\log_{1+\psi}N/N^{\sigma})$. Since $d_{i,j}$ is the minimum distance from peer $i$ to peer $j$,  the length of the shortest path from $i$ to $j$ via the source (peer 1) is upper bounded by $d_{i,j}$. Thus, for any $d>0$,
 \begin{align}
& \prob{\max_{(i,j)\in V^2} d_{i,j}\leq 2d}\NN\\
\geq& \prob{\max_{(i,j)\in V^2} d_{i,1}+d_{1,j}\leq 2d}\NN\\
\geq& \prob{\max_{i\in V} d_{i,1}+ \max_{j\in V} d_{1,j}\leq 2d}\NN\\
\geq& 1-\prob{d^*>d}-\prob{\tilde d^*> d}\NN\\
=& 1-2\prob{d^*>d}.\NN
 \end{align}
In the last equation, we have used Lemma~\ref{lemma:reversed_flow_graph}. In the previous section, we have shown that if $d=\Theta(\log_{1+\psi} N)$,
 $$\prob{d^*>d}<O(\log_{1+\psi}N/N^{\sigma}).$$
Thus, we have  proven the proposition.
%
\end{proof}

\section{Conclusions}

\label{sec:Conclusions} Instead of conventional approaches using
multiple overlay trees, we have proposed a simple P2P streaming algorithm   that consists of a simple pairing algorithm similar to the one proposed earlier for constructing distributed hash tables \cite{Law03}, but used here for streaming data. Our proposed  chunk dissemination algorithm can deliver all chunks to all peers with $\Theta(\log N)$ delay and achieves $(1-1/K)$ fraction  of the optimal streaming capacity for any constant $K\geq2$.

There are several issues that need to be addressed to implement our algorithm in practice. The first issue is one of modifying our chunk dissemination algorithm to accommodate peer churn. Even though our chunk dissemination algorithm shows that the network topology at any given moment can achieve both a near optimal throughput and $\Theta(\log N)$ delay,
 peer arrivals and departures disrupt the topology continuously and hence, one needs practical solutions to account for this churn in the chunk dissemination algorithm as well as the delay analysis.
 The second issue is one of dealing with packet losses. Even in the wired Internet, packet losses are not uncommon, and therefore, a practical protocol must have provisions to recover from such losses. Finally, we have to deal with asynchronous transmissions, i.e., chunk transmissions will not occur in a time-slotted manner in the Internet for many reasons. The analysis gets much more involved in this case. Dealing with these practical issues is an important avenue for future work.

\bibliographystyle{IEEEtran}
\bibliography{joohwan}

\appendix

\section{Proofs}

\subsection{Lemma~\ref{lemma:num_of_candidates}}
\begin{lemma}\label{lemma:num_of_candidates}
At each iteration of the FGC process, we have $$|C(v_t,E^{(1)})|=N-|E^{(1)}|-1=N-t,$$
$$|C(v_t,E^{(2)})|=N-|E^{(2)}|-1=N-\summu{t-1}-1,$$
for $t<N$.
\end{lemma}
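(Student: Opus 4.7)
The plan is to determine, at the start of iteration $t<N$, the structure of $(V,E^{(i)})$ for $i\in\{1,2\}$ and then count which peers are ruled out as candidates for $c_t$ (resp.\ $c_t'$) by each part of Condition~\ref{condition:to_be_a_cycle}. First I would establish the structural claim that $(V,E^{(i)})$ is a vertex-disjoint union of simple directed paths. Every vertex has in-degree at most $1$ (Condition~(a) was enforced at every previous iteration, so each new endpoint was required to have no prior incoming edge) and out-degree at most $1$ (an outgoing edge is drawn at most once per iteration, and the outgoing edge of $v_t$ has not yet been drawn). Moreover, no cycle can be present: Condition~(b) enforced previously forbade non-Hamiltonian cycles, and with $|E^{(i)}|\le t-1<N$ edges a Hamiltonian cycle cannot have formed either. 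Consequently the $|E^{(i)}|$ peers that currently possess an incoming edge in $E^{(i)}$ are all distinct.

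Next I would count the forbidden candidates for $c_t$ (the argument for $c_t'$ is identical). Condition~(a) rules out exactly the $|E^{(i)}|$ peers with an existing incoming edge in $E^{(i)}$. Condition~(b) rules out exactly one additional peer, namely the ``head'' (starting vertex) of the unique path in $(V,E^{(i)})$ that contains $v_t$ as its terminal vertex. Indeed, since the current graph is a disjoint union of paths, adding $(v_t,c)$ creates a cycle iff there is already a directed path from $c$ to $v_t$; the only such peers are vertices of the path ending at $v_t$, but every such vertex except the head already has an incoming edge and is therefore already excluded by (a). If $v_t$ is isolated in $(V,E^{(i)})$, the ``head'' coincides with $v_t$ itself, still excluded because $(v_t,v_t)$ would form a (non-Hamiltonian) self-loop. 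Crucially, this head vertex has no incoming edge in $E^{(i)}$, so the exclusion produced by~(b) is disjoint from those produced by~(a). Hence $|C(v_t,E^{(i)})|=N-|E^{(i)}|-1$.

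Finally I would substitute the bookkeeping values for $|E^{(i)}|$ at the start of iteration $t$. For $i=1$, exactly one edge is appended to $E^{(1)}$ at each of iterations $1,\dots,t-1$, so $|E^{(1)}|=t-1$ and the formula yields $|C(v_t,E^{(1)})|=N-t$. For $i=2$, an edge is appended to $E^{(2)}$ at iteration $j<t$ if and only if $\tau_j=1$, so $|E^{(2)}|=\sum_{j=1}^{t-1}\tau_j$, giving $|C(v_t,E^{(2)})|=N-\sum_{j=1}^{t-1}\tau_j-1$.

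The main obstacle is not a heavy calculation but the careful case analysis for the single ``new'' exclusion coming from Condition~(b): I need to treat uniformly the subcase where $v_t$ already has an incoming edge in $E^{(i)}$ (new exclusion is the head of the path ending at $v_t$) and the subcase where $v_t$ is isolated in $E^{(i)}$ (new exclusion is $v_t$ itself via the self-loop), and to verify in both subcases that this exclusion is disjoint from the $|E^{(i)}|$ peers excluded by Condition~(a). The other subtle point is justifying that no cycle is present at the start of iteration $t$, which uses both the previously enforced Condition~(b) and the edge-count bound $|E^{(i)}|<N$.
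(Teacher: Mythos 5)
Your proof is correct and takes essentially the same approach as the paper's: at iteration $t$, Condition~(a) excludes the $|E^{(i)}|$ peers with an incoming edge, while Condition~(b) excludes exactly one further peer (the head of the path ending at $v_t$, or $v_t$ itself if it is isolated), and these exclusions are disjoint. You are somewhat more careful than the paper in stating the structural fact that $(V,E^{(i)})$ is a disjoint union of directed paths and in verifying the disjointness of the two exclusion sets, whereas the paper wraps the same counting argument in a nominal induction.
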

\begin{proof}
We first prove by induction that $C(v_t,E^{(1)})$  satisfies the lemma.  Since $E^{(1)}=\emptyset$ in Step~3 of iteration $t=1$, $c_t$ can be any peer but $v_t$. Thus, this lemma holds.

Assume that this lemma holds for $t=l-1$. Since we have added one edge to $E^{(1)}$ in Step~4 of each iteration $t<l$, the number of edges in $E^{(1)}$ in Step~3 of iteration $t=l$ is $l-1$. By Condition \ref{condition:to_be_a_cycle}, the $l-1$ peers where these edges end are not included in $C(v_t,E^{(1)})$.  If none of the $l-1$ peers is $v_t$, peer $v_t$ cannot also be in $C(v_t,E^{(1)})$ because adding $(v_t, v_t)$ incurs a loop. If one of the $l-1$ peers is $v_t$, there must be an acyclic path ending at $v_t$. Adding an edge from $v_t$ to the first peer of the path makes a non-Hamiltonian cycle. Thus, the first peer, which cannot be one of the $l-1$ peers, cannot be in $C(v_t,E^{(1)})$. Overall, there are $(l-1)$ peers that cannot satisfy Condition \ref{condition:to_be_a_cycle}.

The proof for  $C(v_t,E^{(2)})$ is straightforward. At the beginning of iteration $t=l$, the number of edges in $E^{(2)}$ is $\summu{l-1}$. Using the same logic, it is easy to show that $|C(v_t,E^{(2)})|=N-|E^{(2)}|-1$.
\end{proof}

\subsection{Proof of Proposition~\ref{prop:flow_graph_process}}
\label{proof:prop:flow_graph_process}
\tblue{We shows how $H_1$ and $H_2'$ are distributed. Recall that $\cC(V)$ is the set of all possible Hamiltonian cycles that can be made of peer set $V$. From Proposition~\ref{prop:layer_distribution}, $H_1$ is uniformly distributed in  $\cC(V)$. Define $\tilde \cC(V,J)$ to be the set of all possible subgraphs of a graph in $\cC(V)$ with $J$ edges, i.e.,
$$\tilde \cC(V,J)=\{(V, E) ||E|=J, \exists (V, E')\in \cC(V) \text{ s.t. } E\subset E'\}.$$
It is not difficult to show that
$$|\tilde \cC(V,J)|=   {{N}\choose{J}}\frac{(N-1)!}{(N-J-1)!}.$$
(We have first chosen $J$ peers among $N$ peers, and then have chosen the number of ways in which we can draw outgoing edges from them without violating the Hamiltonian cycle constraint.)
Recall that $H_2'$ is obtained by randomly removing each edge with probability $1-q$ from $H_2$, which is also uniformly distributed in $\cC(V)$. Hence, conditioned on the fact that the number  of remaining edges in $H_2'$ is $J$, i.e., $|E(H_2')|=J$, $H_2'$ is uniformly distributed in $\tilde \cC(V,J)$. Thus, conditioned on $|E(H_2')|=J$, the  graph $H^*$ that we obtain by  superposing two independent graphs $H_1$ and $H_2'$ is uniformly distributed in
\begin{align}
&\cC(V) \times\tilde \cC(V,J)\NN\\
&\triangleq \{(V, E', E'')| (V,E')\in \cC(V), (V, E'')\in \tilde \cC(V,J)\}.\NN
\end{align}
From now on, we show how to relate this result to the distribution of $(V,E^{(1)}, E^{(2)})$ in the FGC process. For each graph $G$ that can be $H^*$, we need to show that
\begin{equation}
\prob{H^*=G}= \prob{(V,E^{(1)}, E^{(2)}) =G}.\label{eq:theSame2}
\end{equation}
This is equivalent to showing the following:
\begin{align}
&\sum_{j=0}^N\prob{H^*=G\;\big|\; |E(H_2')|=j}\prob{|E(H_2')|=j}\NN\\
=&\sum_{j=0}^N\prob{(V,E^{(1)}, E^{(2)})=G\;\big|\; \summu{N}=j}\prob{\summu{N}=j}\label{eq:theSame}
\end{align}
Note that  $|E(H_2')|$ is a binomial random variable with parameter $(N,q)$ because we have removed each edge from $H_2$ with probability $(1-q)$. Since $\tau_1,\cdots, \tau_N$ are independent Bernoulli random variables with mean $q$, $\summu{N}$ is also a binomial random variable with the same parameter. Hence, both $\summu{N}$ and $|E(H_2')|$ have the same distribution, i.e.,
$\prob{|E(H_2')|=j}=\prob{\summu{N}=j}$ for all $j$. Hence, if the  graph $(V, E^{(1)}, E^{(2)})$ resulting from the FGC process is uniformly distributed in
$\cC(V) \times\tilde \cC(V,J)$ conditioned on $\summu{N}=J$,
 the equality in (\ref{eq:theSame}) is satisfied, and thus  (\ref{eq:theSame}) is also satisfied. Hence, we now focus on showing that $(V, E^{(1)}, E^{(2)})$ is uniformly distributed in $\cC(V) \times\tilde \cC(V,J)$ conditioned on $\summu{N}=J$.}

\tblue{Since we have constructed $(V, E^{(1)})$ following Condition~\ref{condition:to_be_a_cycle} in the FGC process, it is a Hamiltonian cycle, which belongs to $\cC(V)$. Due to the same reason, $(V, E^{(2)})$ is a subgraph of a Hamiltonian cycle, and thus it should belong to $\tilde \cC(V,J)$ if $\summu{N}=J$. Hence, conditioned on $\summu{N}=J$, the resulting graph $(V, E^{(1)}, E^{(2)})$ belongs to $\cC(V) \times\tilde \cC(V,J)$.}

\tblue{We next show how $(V, E^{(1)}, E^{(2)})$ is distributed in $\cC(V) \times\tilde \cC(V,J)$. Let $\vec c = (c_1, \cdots, c_N)$. For given $\vec \tau$, define $\vec c'=(c'_{t}\; ;\;
\tau_t=1)$. Note that the FGC process randomly chooses $(\vec c, \vec \tau, \vec c')$. We  prove by contradiction that a unique choice of $(\vec c, \vec \tau, \vec c')$ in the FGC process results in a unique  $(V,E^{(1)},E^{(2)})$. Assume to the  contrary that two different choices $\Delta=(\vec  c, \vec \tau, \vec c')$ and $\hat \Delta=(\vec \hat{c}, \vec \hat{\tau}, \vec \hat{c'})$ result in the same graph. Let $t^*$ be the first iteration that both decisions are not the same, i.e.,
$c_t=\hat c_t$, $\tau_t=\hat  \tau_t$, and $c'_{l}=\hat  c'_{l}$ for all $t<t^*$ and $l<t^*$ satisfying $\tau_l=1$, and $c_{t^*}\neq \hat c_{t^*}$ or $\tau_{t^*}\neq \hat \tau_{t^*}$ or $c'_{t^*}\neq \hat c'_{t^*}$ if $\tau_{t^*}= \hat \tau_{t^*}$. Since the FGC process works identically  up to iteration $t-1$ under both choices, $v_t$ at iteration $t$ must be also identical. Since both choices $\Delta$ and $\tilde \Delta$ differ at iteration $t$, the outgoing edges of $v_t$ will be different under both choices. Hence, the resulting graphs under both choices cannot be the same, which is a contradiction. Hence, a unique $(\vec  c, \vec \tau, \vec c')$ results in a unique resulting graph $(V, E^{(1)}, E^{(2)})$, which we have proven to be in $\cC(V)\times \tilde \cC(V,J)$ in the previous paragraph.}

\tblue{We finally show $(V, E^{(1)}, E^{(2)})$ is uniformly distributed in $\cC(V) \times\tilde \cC(V,J)$ conditioned on $\summu{N}=J$.
 From Lemma~\ref{lemma:num_of_candidates}, $c_t$ $(t<N)$ is chosen uniformly at random among $N-t$ candidates. Hence, $\vec c$ is chosen uniformly at random among $(N-1)!$ possible combinations. Similarly, from Lemma~\ref{lemma:num_of_candidates}, $c_t'$ is chosen uniformly at random among $N-\summu{t-1}-1$ candidates, and thus $\vec c'$ is chosen uniformly at random among $(N-1)(N-2)\cdots (N-J)$ combinations if $\summu{N}=J$. Conditioned on $\summu{N}=J$, there are ${{N}\choose{J}}$ candidates for $\vec \tau$. It is easy to see that $\vec \tau$ is uniformly distributed among these candidates. Thus, conditioned on $\summu{N}=J$, the FGC process chooses $(\vec c, \vec \tau, \vec c')$ uniformly at random among ${{N}\choose{J}} \frac{((N-1)!)^2}{(N-J-1)!}$ candidates.
Since a unique $(\vec c, \vec \tau, \vec c')$ results in unique $(V,E^{(1)}, E^{(2)}) \in \cC(V)\times \tilde \cC(V,J)$, $(V,E^{(1)}, E^{(2)})$ is uniformly distributed among ${{N}\choose{J}} \frac{((N-1)!)^2}{(N-J-1)!}$
graphs in $\cC(V)\times \tilde \cC(V,J)$. Since the cardinality of $\cC(V)\times \tilde \cC(V,J)$ is  ${{N}\choose{J}} \frac{((N-1)!)^2}{(N-J-1)!}$, we can say that $(V,E^{(1)}, E^{(2)})$ is uniformly distributed over the entire set $\cC(V)\times \tilde \cC(V,J)$ conditioned on $\summu{N}=J$.}

\tblue{As we mentioned right after (\ref{eq:theSame}), both $(V,E^{(1)}, E^{(2)})$ and $H^*$ have the same distribution.}

\subsection{Proof of Proposition~\ref{prop:N-z(t)_on_k}}

\label{proof:prop:N-z(t)_on_k}

We prove this result by induction. We first show that (\ref{eq:doob_martingale}) is true for $l=t$. Since $\mathcal{G}(t)$ has been defined as the graph drawn up to  iteration $t$, $Z(t)$
and $z(t)$ are deterministic conditioned on $\mathcal{G}(t)$. Thus, we can remove the expectation
from $N-z(t)$, which corresponds to (\ref{eq:doob_martingale}) for
$l=t$.

We assume that (\ref{eq:doob_martingale}) is true for $l=l'+1\leq t$,
i.e.,
\begin{align}
 & \frac{E\big[N-z(t)\big|\cG(l'+1),\tred{\mu} \tblue{\vec \tau}\big]}{(N-t-1)(N-\summu{t}-1)}\NN\\
=&\frac{N-z(l'+1)}{(N-(l'+1)-1)(N-\summu{l'+1}-1)}.\NN
\end{align}
If we rewrite $z(l'+1)$ using (\ref{eq:update_of_zt}) and take $E[\;\cdot\;|\; \cG(l'),\tred{\mu} \tblue{\vec \tau}]$, we have
\begin{align}
 & \frac{E\big[N-z(t)\big|\cG(l'),\tred{\mu} \tblue{\vec \tau}\big]}{(N-t-1)(N-\summu{t}-1)}\NN\\
=&\frac{N-z(l')-E[X_1+ X_2| \cG(l'),\tred{\mu} \tblue{\vec \tau} ]}{(N-(l'+1)-1)(N-\summu{l'+1}-1)}.\label{eq:doob_condition_t-1}
\end{align}
where $X_1=\indicate{c_{l'+1}\notin Z(l')}$ and $X_2=\tred{\indicate{\mu(v_{l'+1})=k}}  \tblue{\tau_{l'+1}}\indicate{c_{l'+1}'\notin Z(l')\cup\{c_{l'+1}\}}$.

In the FGC process, $c_{l'+1}$ is chosen from
$C(v_{l'+1}, E^{(1)}(l'))$ with a uniform distribution. Since
$|C(v_{l'+1}, E^{(1)}(l'))|= N-l'-1$ by Lemma~\ref{lemma:num_of_candidates}, we have
\begin{align}
&E[X_1| \cG(l'),\tred{\mu} \tblue{\vec \tau}]\NN\\
=&\frac{|V\setminus Z(l')|}
{|C(v_{l'+1}, E^{(1)}(l'))|}=\frac{N-z(l')}{N-l'-1}.\label{eq:mean_x1}
\end{align}
Similarly, \tred{if $\mu(v_{l'+1})=k$,} $c'_{l'+1}$ is chosen from
$C(v_{l'+1}, E^{(2)}(l'))$ with uniform distribution. Since $|C(v_{l'+1}, E^{(2)}(l'))|=N-\summu{l'}-1$ by Lemma~\ref{lemma:num_of_candidates}, we have
\begin{align}
&E[X_2| \cG(l'),\tred{\mu} \tblue{\vec \tau} ,X_1]\NN\\
=&\frac{\tblue{\tau_{l'+1}} \tred{\indicate{\mu( v_{l'+1})=k}} |V\setminus (Z(l')\cup\{c_{l'+1}\})|}
{|C(v_{l'+1}, E^{(2)}(l'))|}\NN\\
=&\frac{\tblue{\tau_{l'+1}} \tred{\indicate{\mu( v_{l'+1})=k}} (N-z(l')-X_1)}{(N-\summu{l'}-1)}.\NN
\end{align}
Taking $E[ \;\cdot\; | \cG(l'),\tred{\mu}\tblue{\vec \tau} ]$ to both sides and applying (\ref{eq:mean_x1}), we have
\begin{align}
&E[X_2| \cG(l'),\tred{\mu}\tblue{\vec \tau} ]\NN\\
=&\frac{ \tblue{\tau_{l'+1}}  \tred{\indicate{\mu(v_{l'+1})=k}}(N-z(l'))}{N-\summu{l'}-1}\frac{N-l'-2}{N-l'-1}.\label{eq:mean_x2}
\end{align}
Applying (\ref{eq:mean_x1}) and (\ref{eq:mean_x2}) to the numerator of (\ref{eq:doob_condition_t-1}), we have
\begin{align}
&N-z(l')-E[X_1+ X_2| \cG(l'),\tred{\mu} \tblue{\vec \tau} ]\NN\\
=&(N-z(l'))\Big(1-\frac{1}{N-l'-1}\NN\\
&\;\;\;\;\;\;-\frac{\tblue{\tau_{l'+1}} \tred{\indicate{\mu( v_{l'+1})=k}}  }{N-\summu{l'}-1}
\frac{N-l'-2}{N-l'-1}\Big)\NN\\
=&(N-z(l'))\frac{N-l'-2}{N-l'-1}\frac{N-\summu{l'+1}-1}{N-\summu{l'}-1}.\NN
\end{align}
Adding the above to (\ref{eq:doob_condition_t-1}), we have (\ref{eq:doob_martingale}) for $l=l'$.
By induction, (\ref{eq:doob_martingale})
is true for $1\leq l\leq t$.

\subsection{Proof of Proposition~\ref{prop:expansion_of_flowgraph}}

\label{proof:prop:expansion_of_flowgraph}

We first define the Doob martingale $\{B_{l}\}_{0\leq l\leq t}$ as
$B_{l}\triangleq E[N-z(t)\;|\; \cG(l)]$. From (\ref{eq:update_of_zt}),
we find an upper bound on $|B_{l}-B_{l-1}|$ for each
$l$ and then find the probability that the martingale concentrates
around its mean using the Azuma-Hoeffding bound.

The upper bound $\psi_{l}$ can be found from the following lemma.
\begin{lemma}\label{lemma:lipshitz}
Let $\tilde B_l = E[N-z(t)| \cG(l), \tred{\mu}\tblue{\vec \tau}]$. For $1\leq l\leq t$,
$$|\tilde B_{l}-\tilde B_{l-1}|\leq 2.$$
 \end{lemma}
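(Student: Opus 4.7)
The plan is to exploit the explicit formula for $\tilde B_l$ furnished by Proposition~\ref{prop:N-z(t)_on_k} together with the martingale identity $\tilde B_{l-1} = E[\tilde B_l \mid \cG(l-1),\vec\tau]$, so that $|\tilde B_l - \tilde B_{l-1}|$ is controlled by the range of $\tilde B_l$ conditioned on $\cG(l-1),\vec\tau$.

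First I would apply Proposition~\ref{prop:N-z(t)_on_k} at level $l$ to obtain
\[
\tilde B_l = \frac{N-t-1}{N-l-1}\cdot\frac{N-\sum_{j=1}^t\tau_j-1}{N-\sum_{j=1}^l\tau_j-1}\,(N-z(l)).
\]
Conditioned on $\cG(l-1)$ and $\vec\tau$, the prefactor in front of $(N-z(l))$ is deterministic, call it $A_l$, while the only randomness in $\tilde B_l$ enters through $z(l) = z(l-1) + X_1 + X_2$ by (\ref{eq:update_of_zt}). Since $X_1,X_2\in\{0,1\}$, the possible values of $N-z(l)$ are $N-z(l-1)$, $N-z(l-1)-1$, and $N-z(l-1)-2$, so $\tilde B_l$ takes at most three distinct values lying in an interval of length $2A_l$.

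Next I would invoke the tower property $E[\tilde B_l \mid \cG(l-1),\vec\tau] = \tilde B_{l-1}$, which makes $\tilde B_{l-1}$ a convex combination of the admissible values of $\tilde B_l$. Hence
\[
|\tilde B_l - \tilde B_{l-1}| \;\le\; 2A_l \;=\; 2\cdot\frac{N-t-1}{N-l-1}\cdot\frac{N-\sum_{j=1}^t\tau_j-1}{N-\sum_{j=1}^l\tau_j-1}.
\]
Finally I would observe that for $l\le t$ both ratios on the right are at most $1$: the first because $N-l-1\ge N-t-1$, and the second because $\sum_{j=1}^l\tau_j \le \sum_{j=1}^t\tau_j$ implies $N-\sum_{j=1}^l\tau_j-1 \ge N-\sum_{j=1}^t\tau_j-1$. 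Consequently $A_l\le 1$ and $|\tilde B_l - \tilde B_{l-1}|\le 2$, as required.

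There is no serious obstacle here; the only point requiring care is the bookkeeping that $\cG(l-1)$ and $\vec\tau$ together determine the multiplicative prefactor $A_l$, leaving $X_1+X_2$ as the sole source of variation in $\tilde B_l$. Once this observation is in place, the bound of $2$ drops out immediately from the $\{0,1,2\}$ range of $X_1+X_2$ combined with $A_l\le 1$.
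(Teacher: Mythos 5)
Your proof is correct, and it takes a genuinely different (and cleaner) route than the paper. The paper's proof also writes $\tilde B_l = f(l)(N-z(l))$ via Proposition~\ref{prop:N-z(t)_on_k}, but then argues separately for the lower and upper bounds: the lower bound uses $f(l-1)\le f(l)$ and $z(l)-z(l-1)\le 2$, while the upper bound uses $z(l)\ge z(l-1)$ to reduce to $\tilde B_l-\tilde B_{l-1}\le (N-z(l-1))(f(l)-f(l-1))$, which is then controlled through a fairly delicate chain of inequalities bounding $f(l)-f(l-1)$ by $2/(N-l)$ and pairing it with $N-z(l-1)\le N-l$. You instead invoke the Doob-martingale (tower) structure directly: conditioned on $\cG(l-1),\vec\tau$, the prefactor $A_l$ is deterministic and $A_l\le 1$; the only randomness in $\tilde B_l = A_l(N-z(l))$ enters through $X_1+X_2\in\{0,1,2\}$, so $\tilde B_l$ ranges over an interval of length $2A_l\le 2$; and $\tilde B_{l-1} = E[\tilde B_l\mid\cG(l-1),\vec\tau]$ is a convex combination of those same values, hence lies in the same interval. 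That single observation replaces the paper's two-sided case analysis and the explicit estimate of $f(l)-f(l-1)$. The trade-off is that your argument rests on correctly identifying that $A_l$ is $\cG(l-1),\vec\tau$-measurable and that the tower property applies to the filtration generated by $(\cG(l),\vec\tau)$, both of which you state and both of which hold; once those are in place, the bound follows essentially without computation, which makes for a tighter and more transparent proof of the same Lipschitz constant.
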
 The proof is provided in Appendix~\ref{proof:lemma:lipshitz}. Since the upper-bound is independent of $\tblue{\vec \tau}\tred{\mu}$, the same bound also holds for $B_l$, i.e., $|B_l-B_{l-1}|\leq 2$.
Using the Azuma-Hoeffding bound, we have
\begin{equation}
\prob{B_{t}-B_{0}>\alpha}<\exp\Big(-\frac{\alpha^{2}}{2\sum_{j=1}^{t}2^{2}}\Big).\label{eq:azuma_hoeffding}
\end{equation}
From Proposition~\ref{prop:N-z(t)_on_k}, $B_{t}=N-z(t)$ and $B_{0}=(N-t-1)(1-\frac{tq}{(N-1)})$.
Taking $\alpha=1+t(q(1-\frac{t}{N-1})-\psi)$, (\ref{eq:azuma_hoeffding})
can be expressed as follows: for $t\leq N/2$,
\begin{align}
&\prob{z(t)<(1+\psi)t}  <\exp\Big(-\frac{\alpha^{2}}{8t}\Big)\NN\\
 & <\exp\Big(-\frac{[(\frac{q}{2}-\psi)t]^{2}}{8t}\Big)\label{eq:azuma2}\\
 & =\exp\Big(-\frac{(\frac{q}{2}-\psi)^{2}t}{8}\Big)=\exp(-\sigma t).\NN
\end{align}
In $(\ref{eq:azuma2})$, we have used $\alpha<(q/2-\psi)t$ for $t\leq N/2$.
Thus, we have proven Proposition~\ref{prop:expansion_of_flowgraph}.

\subsection{Proof of Lemma~\ref{lemma:lipshitz}}
\label{proof:lemma:lipshitz}
From Proposition~\ref{prop:N-z(t)_on_k}, we have
\begin{align}
 & \tilde B_{l}-\tilde B_{l-1}\NN\\
= & f(l)(N-z(l))-f(l-1)(N-z(l-1)),\NN
\end{align}
where $$f(l)=\frac{N-t-1}{N-l-1}\frac{N-\summu{t}-1}{N-\summu{l}-1}.$$
Since $l\leq t$, we have $f(l-1)\leq f(l)\leq 1$. Thus,
\begin{align}
&\tilde B_{l}-\tilde B_{l-1}\NN\\
\geq& f(l)(N-z(l)-(N-z(l-1)))\NN\\
\geq& f(l)(-2)>-2.\NN
\end{align}
Since $z(l-1)\leq z(l)$, we have
$$\tilde B_{l}-\tilde B_{l-1}\leq (N-z(l-1))(f(l)-f(l-1)).$$
Taking $x_l \triangleq (N-l-1)$ and $y_l\triangleq (N-\summu{l}-1)$, we can simplify $(f(l)-f(l-1))$ in the above as
\begin{align}
&(f(l)-f(l-1))\NN\\
\leq & x_t y_t \Big[\frac{1}{x_l y_l}-\frac{1}{(x_l+1)(y_l+\tblue{\tau_l} \tred{\indicate{\mu(v_l)=k}})}\Big]\NN\\
\leq & x_t y_t \Big[\frac{1}{x_l y_l}-\frac{1}{(x_l+1)(y_l+1)}\Big]\NN\\
\leq & x_t y_t \Big[\frac{(x_l+1)(y_l+1)-x_l y_l}{x_l y_l(x_l+1)(y_l+1)}\Big]
\leq  \frac{x_t y_t (x_l+1)+x_t y_t  y_l}{x_l y_l(x_l+1)(y_l+1)}\NN\\
\leq& \frac{1}{y_l+1}+\frac{1}{x_l+1}\leq \frac{2}{x_l+1}.\NN
\end{align}
For the last two inequalities,  we have used $x_t\leq x_l$, $y_t\leq y_l$, and $x_l\leq y_l$.
Since $N-z(l-1)\leq N-l=x_l+1$,  we have $\tilde B_{l}-\tilde B_{l-1}\leq 2$.
Overall, we can find the Lipschitz difference $|\tilde B_{l}-\tilde B_{l-1}|\leq 2.$

\end{document}